\def\FULL{true}
\ifdefined\FULL{}
\newcommand{\full}[1]{#1}
\newcommand{\short}[1]{}
\newcommand{\starr}{$\star$}
\newcommand{\full}[1]{}
\newcommand{\short}[1]{#1}
\newcommand{\starr}{$\star$}
\newenvironment{innerproof}
 {\proof}
 {\endproof}
\newcommand{\defproblem}[3]{
	\vspace{2mm}
	\noindent\fbox{
		\begin{minipage}{0.96\linewidth}
			\begin{tabular*}{\linewidth}{@{\extracolsep{\fill}}lr} \textsc{#1} &  \\ \end{tabular*}
			{\bf{Input:}} #2 \\
			{\bf{Question:}} #3
		\end{minipage}
	}
	\vspace{1mm}
}
\newcommand{\gcal}{\mathcal{G}}
\newcommand{\bcal}{\mathcal{H}}
\newcommand{\scal}{\mathcal{S}}
\newcommand{\tw}{\mathrm{\textbf{tw}}}
\newcommand{\ttw}{\mathbb{T}}
\newcommand{\td}{\mathrm{\textbf{td}}}
\newcommand{\comp}{\texttt{Comp}}
\newcommand{\compress}{\texttt{Condense}}
\newcommand{\aux}{\texttt{Aux}}
\newcommand{\minsep}{\texttt{MinSep}}
\newcommand{\base}{\texttt{Base}}
\newcommand{\sign}{\texttt{Sign}}
\newcommand{\spann}{\texttt{Span}}
\newcommand{\minver}{minimal vertex separator\xspace}
\newcommand{\br}[1]{\left(#1\right)}
\newcommand{\eps}{\varepsilon}
\newcommand{\sm}{\setminus}
\newcommand{\sub}{\subseteq}
\newcommand{\Oh}{\mathcal{O}}
\newcommand{\hh}{\ensuremath{\mathcal{H}}}
\ifdefined\DEBUG{}
\def\rem#1{{\marginpar{\raggedright\scriptsize #1}}}
\newcommand{\micr}[1]{\rem{\textcolor{blue}{\(\bullet \) #1}}}
\newcommand{\bmpr}[1]{\rem{\textcolor{purple}{\(\bullet \) #1}}}
\newcommand{\jjhr}[1]{\rem{\textcolor{orange}{\(\bullet \) #1}}}
\newcommand{\micr}[1]{ }
\newcommand{\bmpr}[1]{ }
\newcommand{\jjhr}[1]{ }
\title{Tight Bounds for Chordal/Interval Vertex Deletion Parameterized by Treewidth}
\titlerunning{Tight bounds for Chordal/Interval Vertex Deletion} 
\author{Micha\l{} W\l{}odarczyk}{Ben-Gurion University, Beer Sheva, Israel}{michal.wloda@gmail.com}{https://orcid.org/0000-0003-0968-8414}{}
\authorrunning{Micha\l{} W\l{}odarczyk}
\keywords{fixed-parameter tractability, treewidth, chordal graphs, interval graphs, matroids, representative families}
\begin{document}

\maketitle

\begin{abstract}
In \textsc{Chordal/Interval Vertex Deletion} we ask how many vertices one needs to remove from a~graph to make it chordal (respectively: interval).
We study these problems under the parameteri\-zation by treewidth $\tw$ of the input graph $G$.
On the one hand, we present an algorithm for \textsc{Chordal Vertex Deletion} with running time 
$2^{\Oh(\tw)} \cdot |V(G)|$,
improving upon the running time $2^{\Oh(\tw^2)} \cdot |V(G)|^{\Oh(1)}$ by Jansen, de Kroon, and W\l{}odarczyk (STOC'21).
When a tree decomposition of width $\tw$ is given, then the base of the exponent equals $2^{\omega-1}\cdot 3 + 1$.
Our algorithm is based on a~novel link between chordal graphs and graphic matroids, which allows us to employ the framework of repre\-sentative families.
On~the other hand, we prove that the known  \mbox{$2^{\Oh(\tw \log \tw)} \cdot |V(G)|$-time} algorithm for \textsc{Interval Vertex Deletion}  cannot be improved assuming Exponential Time~Hypothesis.
\end{abstract}

\section{Introduction}\label{sec:intro}

Treewidth \cite[\S 7]{blue-book} is arguably the most extensively studied width measure in the graph theory.
Simply speaking, treewidth measures to what extent a graph is similar to a tree, where trees and forests are exactly the graphs of treewidth 1.
It plays a crucial role in Robertson and Seymour's Graph Minors series~\cite{GM2}.
The usefulness of treewidth stems from the fact that a broad class of problems can be solved in linear time on graphs of bounded treewidth.
The celebrated Courcelle's Theorem~\cite{CourcelleE12} states that any graph problem expressible in the Counting Monadic Second Order Logic ({CMSO}) can be solved in time $f(\tw)\cdot |V(G)|$, where $\tw$ denotes the treewidth of graph $G$ and $f$ is some computable function.
In other words, every such problem is fixed-parameter tractable (FPT) when parameterized by treewidth.
Furthermore, bounded-treewidth graphs appear in a wide variety of contexts, which makes treewidth-based algorithms a~ubiquitous tool in algorithm design~\cite{DemaineH08, JansenLS14, Marx20, MarxOR13, RobertsonS95}. 

The function $f$ from Courcelle's Theorem may grow very rapidly and a large body of research has been devoted to optimize the dependency on $\tw$ for particular problems.
In the ideal scenario, we would like the function $f$ to be single-exponential, i.e., $f(\tw) = 2^{\Oh(\tw)}$, while possibly 
allowing a higher (yet constant) exponent at $|V(G)|$.
This is often the best we can hope for
because sub-exponential running times usually
contradict the Exponential Time Hypothesis\footnote{The 
Exponential Time Hypothesis states that there exists a constant $\delta > 0$ so that 3-SAT cannot be solved in time $\Oh(2^{\delta n})$ on $n$-variable formulas.
} (ETH)~\cite{eth}.

While the standard dynamic programming technique yields single-exponential algorithms for problems with `local constraints', such as \textsc{Vertex Cover}, \textsc{Dominating Set}, or \textsc{Bipartization}, it falls short for problems with `connectivity constraints', such as \textsc{Feedback Vertex Set}, \textsc{Hamiltonian Cycle}, or \textsc{Connected Vertex Cover}, leading to parameter dependency $f(\tw) = 2^{\Oh(\tw \log \tw)}$.
On the one hand, this issue was dealt with in the landmark work of Cygan et al.~\cite{cygan2011solving}, who introduced the Cut \& Count technique and obtained randomized single-exponential algorithms for the problems above, among others (see also~\cite{pilipczuk-single-expo-logic}).
In following works, Bodlaender et al.~\cite{BodlaenderCKN15} and Fomin et al.~\cite{representative-efficient} presented alternative techniques that allow to circumvent randomization: matrix-based approaches and representative families.
On the other hand, Lokshtanov et al.~\cite{LokshtanovMS18}
provided a framework for proving `slightly super-exponential' lower bounds under ETH, which paved the way for establishing tight lower bounds for problems that require dependency $f(\tw) = 2^{\Oh(\tw \log \tw)}$.
In the same work, they obtained such bounds for \textsc{Disjoint Paths} and \textsc{Chromatic Number}.
For problems with a~single-exponential dependency $f(\tw) = \Oh(c^\tw)$, further research has been devoted to establish the optimal base of the exponent~$c$~\cite{Radu18, CyganKN18, cygan2011solving, LokshtanovMS18optimal, van2009dynamic}.

\subparagraph{Vertex-deletion problems.}
Many optimization graph problems can be phrased in terms of \mbox{\textsc{$\hh$-Vertex Deletion}}: remove the smallest number of vertices from a graph so that the resulting graph belongs to the graph class $\hh$. For example, \textsc{Vertex Cover} corresponds to the class $\hh$ of edge-less graphs. 
There is a diverse complexity landscape of ETH-tight running times for various vertex-deletion problems under treewidth parameterization.
The classes $\hh$ for which tight bounds have been established include:  edge-less graphs~\cite{LokshtanovMS18optimal},  forests~\cite{cygan2011solving} (see also~\cite{closeRelatives}), planar graphs~\cite{JansenLS14, Pilipczuk17},
classes defined by a connected forbidden minor~\cite{BasteST20} (see also~\cite{BasteST20j1, BasteST20j2, BasteST20j3}), bipartite graphs~\cite{LokshtanovMS18optimal}, DAGs~\cite{BonamyKNPSW18},
even-cycle-free graphs~\cite{closeRelatives, closeRelativesRevisit}, and some classes defined by a forbidden (induced) subgraph~\cite{CyganHitting, SauHitting}.

We extend this list by studying the vertex-deletion problems into the classes of chordal and interval graphs.
A graph is chordal if it does not contain an induced cycle of length at least 4 (a hole) and a graph is interval if it is an intersection graph of intervals on the real line.
Any interval graph is chordal and any chordal graph is perfect.
Applications of these two graph classes have been  long studied in miscellaneous areas of discrete optimization~\cite{BarNoy01, Benzer59, Buneman74, kendall1969incidence, Prim57, rose1972graph}.
On the theoretical side, the treewidth (resp. pathwidth) of a graph $G$ equals the minimum clique number of a~chordal (resp. interval) supergraph of $G$~\cite{blue-book, pathwidthInterval}.
Moreover, some hard problems become tractable on chordal or interval graphs (or even on graphs with small vertex-deletion distance to chordality)~\cite{Cai03, JacobP0S20, Keil85}.


\subparagraph{Our results.}
The state of the art for \textsc{Chordal Vertex Deletion} (\textsc{ChVD}) is the running time $2^{\Oh(\tw^2)}n^{\Oh(1)}$, which follows from a more general result for a hybrid graph measure $\hh$-treewidth, where $\hh = \texttt{chordal}$~\cite{JansenKW21}.
We~improve the dependency on treewidth to single-exponential.

\begin{restatable}{theorem}{thmChordal}
\label{thm:intro:chordal}
\textsc{Chordal Vertex Deletion} can be solved in deterministic time $\Oh(c^k k^{\omega+1} n)$ on $n$-vertex node-weighted graphs when a tree decomposition of width $k$ is provided.
The constant $c$ equals $2^{\omega-1} \cdot 3 + 1$.
\end{restatable}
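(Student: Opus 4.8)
The plan is to design a dynamic programming scheme over a nice tree decomposition of width $k$, where the main difficulty — certifying the absence of holes, which is a global (``connectivity-like'') constraint — is handled via the framework of representative families applied to a suitable matroid. Concretely, I expect the key structural insight (``the novel link between chordal graphs and graphic matroids'' promised in the abstract) to be the following: after deleting the solution set $X$, the chordality of $G - X$ should be witnessed by a clique tree, and the way a path through a bag $B_t$ of the tree decomposition ``routes'' through the partial clique-tree structure can be encoded as an independent set (a forest) in a graphic matroid whose ground set is indexed by pairs of vertices of $B_t$, or by the $\Oh(k)$ minimal separators / potential maximal cliques interacting with $B_t$. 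This keeps the number of matroid elements polynomial in $k$, so that representative families of the relevant independent sets have size $2^{\Oh(k)}$ rather than $k^{\Oh(k)}$.

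First I would fix a nice tree decomposition and, for each node $t$, define a state consisting of: (i) the subset $S_t \subseteq B_t$ of bag vertices retained in the solution graph and the subset deleted; (ii) for the retained vertices, a bounded-size combinatorial ``interface'' describing how they must sit inside a clique tree of $G - X$ — essentially which retained bag vertices are forced to lie in a common clique of the future clique tree, which play the role of (portions of) minimal separators, and the partial connectivity pattern of the clique-tree skeleton restricted to the part processed so far. Using the characterization of chordal graphs via clique trees / via the property that every minimal separator is a clique, and the fact that bags of a tree decomposition already capture all minimal separators of bounded size, one shows this interface has $2^{\Oh(k)}$ possible values and that a global solution exists iff a consistent assignment of interfaces propagates up to the root. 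The novel matroid enters in the interface: the ``partial clique-tree skeleton'' is recorded as a forest in a graphic matroid on $\Oh(k^2)$ edges, and when two subtrees are joined at a node, we must glue two such forests and check acyclicity — this is exactly where representative families let us store only $2^{\Oh(k)}$ candidate partial forests per interface rather than all of them.

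The three DP operations are then routine modulo bookkeeping. For \emph{introduce} and \emph{forget} nodes, we update the interface locally: introducing a vertex either puts it in the deletion set (paying its weight) or adds it to the retained graph, in which case we branch over which clique-tree node it joins and update the graphic-matroid forest accordingly; forgetting a vertex is only allowed if its future clique-tree obligations are already discharged (it is ``separated off'' from the rest), which is a local check. For \emph{join} nodes, we combine the two children's interfaces on the shared bag, take the union of the two partial forests in the graphic matroid, and reject if a cycle is created; the crucial point is that by the exchange property of representative families (applied with the appropriate bound on the corank, which is $\Oh(k)$), it suffices to keep a representative family of size $2^{\Oh(k)}$ of partial forests for each of the $2^{\Oh(k)}$ interfaces, and Fomin et al.'s algorithm computes these in time $2^{\Oh(k)} k^{\Oh(1)}$ per node using fast matrix multiplication, yielding the $k^{\omega}$-type factor and the stated base $c = 2^{\omega-1}\cdot 3 + 1$ of the exponent (the ``$3$'' coming from the three-way choice per vertex: deleted, or retained in one of two sides of the clique-tree split, with the $2^{\omega-1}$ absorbing the matroid truncation cost). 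Correctness follows by a standard inductive argument: every partial solution's interface is represented, and conversely every represented interface extends to a genuine hole-free completion, using the clique-tree characterization to reassemble a global clique tree of $G - X$.

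The main obstacle I anticipate is \emph{establishing the matroid link itself} — i.e., proving that the obligations a partial solution imposes on the clique-tree structure across a bag can be faithfully encoded by independence in a graphic (hence representable) matroid of polynomial rank, and that the join operation corresponds exactly to forest union. This requires a delicate structural lemma about how clique trees of a chordal graph decompose along a separator (in the spirit of: the clique tree can be chosen so that its edge labeled by a minimal separator $S$ ``lives'' in the bag containing $S$), and care that the encoding is both \emph{sound} (every represented configuration is realizable) and \emph{complete} (no realizable configuration is lost), despite the representative-family pruning. Everything else — the DP transitions, the ETH-type tightness is a separate matter, and the weighted/linear-time bookkeeping — is standard once this link is in place.
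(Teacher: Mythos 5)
Your high-level plan (DP over a nice tree decomposition, with representative families in a graphic matroid used to prune the partial solutions per bag-subset to $2^{\Oh(k)}$) matches the paper's strategy, but the proposal has a genuine gap exactly where you yourself flag the ``main obstacle'': the matroid link is never established, and the encoding you guess at is not the one that works. You propose to record, per bag, a partial clique-tree skeleton of the future chordal graph $G-X$ as a forest in a graphic matroid on $\Oh(k^2)$ elements indexed by pairs of bag vertices (or by separators/PMCs ``interacting with'' the bag), and to hope that joining children corresponds to forest union. No soundness or completeness argument is given for this, and it is unclear it can work: the obligations a partial solution imposes are not naturally attached to pairs of bag vertices, and clique trees of the completed graph need not restrict coherently to the processed part. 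The paper instead proves a gluing criterion that avoids clique trees entirely: for compatible boundaried chordal graphs $(G_1,X),(G_2,X)$, the product is chordal iff the \emph{signatures} $\sign(G_i,X)$ are disjoint and their union is acyclic, where the signature lives in $\base(G[X])$ --- a disjoint union of cliques, one per minimal vertex separator $S$ of $G[X]$, whose vertices are the full components in $\comp(G[X],S)$ --- and $\sign(G,X)$ is the union over components $C$ of $G-X$ of a path through the components met by $N_G(C)$. Proving this requires the auxiliary-graph acyclicity characterization (\cref{lem:chordal:criterion1}) and a structural analysis of how a hole crosses $X$, none of which is present or replaced in your proposal. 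Equally essential, and absent from your sketch, is the rank bound: the paper shows the graphic matroid of $\base(G[X])$ has rank at most $|X|-1$, which is what makes representative families of size exactly $2^{|X|}$ (and the stated constant) possible; ``polynomial rank $\Oh(k)$'' asserted without proof for your encoding does not substitute for this.

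Two further inaccuracies are worth noting. Your account of where $c=2^{\omega-1}\cdot 3+1$ comes from (``three-way choice per vertex: deleted or one of two clique-tree sides'') is not correct: the $3^{|X|}$ factor is the cost of the product representative-family computation at join nodes (\cref{thm:prelim:repr:product}), and the $+1$ arises from summing $(2^{\omega-1}\cdot 3)^{|X|}$ over all subsets $X$ of the bag. Finally, the linear dependence on $n$ in the paper comes from storing only \emph{condensed} partial solutions $\compress(G[A\cup X],X)$ on $\Oh(|X|)$ vertices; your ``interface'' would need an analogous bounded-size canonical form, which is again part of the missing machinery rather than routine bookkeeping.
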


Here, $\omega < 2.373$ stands for the matrix multiplication exponent~\cite{omega}.
To prove \cref{thm:intro:chordal} we establish a~new link between chordal graphs and graphic matroids, which allows us to exploit the framework of representative families~\cite{representative-product, representative-efficient}.
\textsc{ChVD}  is at least as hard as \textsc{Feedback Vertex Set}, what implies barriers for a significant improvement in the constant~$c$ (see \cref{lem:chordal:LB} and the discussion therein). 
Thanks to a single-exponential constant-factor FPT approximation for treewidth~\cite{BodlaenderDDFLP16}, \cref{thm:intro:chordal} gives running time $2^{\Oh(\tw)}n$ even when no tree decomposition is provided in the input.

The best known running time for \textsc{Interval Vertex Deletion} is $2^{\Oh(\tw \log \tw)}n$~\cite{intervalDP}.
(While this algorithm has been described for the edge-deletion variant, we briefly explain
\full{in \cref{app:intervalDP}}
\short{in the full version of the article}
how it can be adapted for vertex deletion.)
We show that, unlike the chordal case, this running time is optimal under ETH.
This gives a sharp separation between the two studied problems. 

\begin{theorem}\label{thm:intro:interval}
Under the assumption of \textsc{ETH},
\textsc{Interval Vertex Deletion} cannot be solved in time $2^{o(\tw \log \tw)}n^{\Oh(1)}$ on $n$-vertex unweighted graphs of treewidth $\tw$.
\end{theorem}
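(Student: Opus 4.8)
The plan is to prove \cref{thm:intro:interval} via a polynomial-time reduction from a problem already known to require $2^{\Omega(p \log p)}$ time under ETH when parameterized by a structural parameter $p$, building a graph whose treewidth is $\Oh(p)$. The natural starting point is the $k\times k$ \textsc{Permutation Clique} / \textsc{$k\times k$ Hitting Set} family of problems from Lokshtanov, Marx, and Saurabh~\cite{LokshtanovMS18}, or — given that the target is an interval-type obstruction — a grid-like or ``list-colouring with ordering constraints'' problem that encodes a permutation. Concretely, I would aim to encode an arbitrary permutation $\pi \in S_k$ (or the choice of one cell per row/column in a $k\times k$ table) using a gadget of treewidth $\Oh(k)$, so that a solution of the \textsc{Interval Vertex Deletion} instance of some budget $b$ exists if and only if the permutation instance is solvable.

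The key steps, in order, would be: (i) fix a source problem $Q$ that is not solvable in $2^{o(p\log p)}(n+p)^{\Oh(1)}$ time under ETH, where $p$ is a parameter and instances have size polynomial in $p$ — the clean choice is the problem of deciding whether a family of ``allowed'' sets, one per row of a $k \times k$ grid, admits a system of distinct representatives avoiding a forbidden list, which is exactly the shape of the lower bounds in \cite{LokshtanovMS18}; (ii) build \emph{selection gadgets}: for each of the $k$ rows, a gadget that is interval on its own but in which, to reach the budget, exactly one of $k$ ``choice vertices'' must be deleted (or kept), thereby committing to a value in $\{1,\dots,k\}$; (iii) build \emph{consistency/permutation gadgets} linking the row gadgets so that the chosen values form a permutation (no two rows pick the same column) and respect the ``allowed set'' constraints — here one exploits that a long induced cycle (a hole) or an asteroidal triple is forced precisely when an inconsistent pair of choices is made, so that repairing it costs one extra deletion beyond the budget; (iv) verify that the whole construction has treewidth $\Oh(k)$, e.g. by exhibiting a path/tree decomposition where each bag contains the $\Oh(1)$ ``interface'' vertices of a constant number of gadgets plus a global set of $\Oh(k)$ coordination vertices; (v) set the budget $b$ so that a valid selection uses exactly $b$ deletions and any violated constraint forces at least $b+1$; (vi) chase the contrapositive: a $2^{o(\tw\log\tw)}n^{\Oh(1)}$ algorithm for \textsc{Interval Vertex Deletion} on this instance, with $\tw = \Oh(k)$, would solve $Q$ in $2^{o(k\log k)}$ time, contradicting ETH.

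The technically delicate part is gadget design for the \emph{interval} obstruction rather than the chordal one. Chordality is characterised by a single local forbidden pattern (holes), which tends to make ``create a hole on violation'' gadgets straightforward; but interval graphs additionally forbid asteroidal triples (Lekkerkerker–Boyland), so one must ensure that the selection and consistency gadgets do not accidentally introduce an unavoidable asteroidal triple that would inflate the required deletion count independently of the encoded instance, \emph{and} that every intended ``yes''-configuration really yields an interval graph (which requires certifying an interval representation, or equivalently a consecutive-arrangement of maximal cliques, for each of the exponentially many valid selections simultaneously). I would manage this by keeping each gadget's interval representation explicit and ``modular'': each row gadget occupies its own segment of the real line, the permutation gadget threads through all of them via a bounded number of long intervals, and the only way to block a consistent interval layout is a column-collision, which is designed to manifest as a forbidden induced structure whose unique cheap fix is one extra vertex deletion.

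The step I expect to be the main obstacle is (iii)–(iv) simultaneously: forcing the permutation/SDI constraint while keeping treewidth linear in $k$. Encoding ``all $k$ choices are distinct'' naively tends to require $\Omega(k)$-way interaction gadgets of superlinear treewidth or $\Omega(k^2)$ pairwise gadgets whose union has treewidth $\Omega(k)$ only if laid out very carefully; the trick, as in \cite{LokshtanovMS18}-style reductions, is to route the consistency check through a linear-size ``conveyor'' structure (a path of bags carrying the partial assignment) so that at any cut only $\Oh(k)$ vertices are exposed. Getting that conveyor to be realisable by interval-type gadgets — so that the only deletions ever helpful are the $b$ intended ones — is where the real work lies, and is also where I would spend the most care verifying correctness in both directions.
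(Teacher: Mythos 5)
Your high-level plan points in the same general direction as the paper (a reduction from the $k\times k$ \textsc{Permutation Clique}-type problems of Lokshtanov, Marx, and Saurabh, selection gadgets, permutation enforcement via interval obstructions, treewidth $\Oh(k)$, budget-tight analysis), but as written it has a genuine gap: the constraint-verification machinery, which is the actual content of the proof, is never constructed. Steps (ii)--(iii) are stated as goals ("a gadget in which exactly one of $k$ choice vertices must be deleted", "a column-collision manifests as a forbidden induced structure whose unique cheap fix is one extra deletion") without any concrete graphs, minimum-deletion-set analysis, or interval models for the yes-instances, and you yourself flag this as the step where "the real work lies". In the paper this work is resolved quite differently from what you sketch: distinctness of the $k$ choices is \emph{not} enforced by pairwise collision gadgets at all, but by a single permutation gadget $Y_k$ (a clique plus a pendant) to which independent vertices are attached; interval-ness then forces the attached neighborhoods to form a chain $N_1\subset\cdots\subset N_k$, which automatically encodes a permutation. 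The edge constraints $(i,\pi(i))(j,\pi(j))\in E(G)$ are checked by separate path-like choice gadgets $C_{i,j}$ with one block per valid tuple $(N_{i-1},N_i,N_{j-1},N_j)$, and a careful argument about minimum-size interval deletion sets forces each gadget to "select" exactly one block.

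The more consequential issue is your insistence on a polynomial-time reduction. Because there are $2^{\Theta(k)}$ valid tuples, the paper's choice gadgets have exponential length, so the reduction produces an instance of size $2^{\Oh(k)}$ and runs in time $2^{\Oh(k)}$ -- and the paper explicitly notes that this is still sufficient, since treewidth (in fact treedepth) stays $\Oh(k)$ and a $2^{o(\tw\log\tw)}n^{\Oh(1)}$ algorithm would then solve \textsc{Permutation Clique} in time $2^{o(k\log k)}\cdot 2^{\Oh(k)}=2^{o(k\log k)}$. By restricting yourself to a polynomial-size construction you are attempting something strictly harder than what the paper does, and nothing in your sketch indicates how the exponentially many neighborhood tuples would be handled with polynomially many blocks while keeping the budget argument exact in both directions. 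To complete the proof you would either need to discover such a compression (not suggested by your outline) or relax to a $2^{\Oh(k)}$-size reduction, in which case you still owe the concrete gadgets, the exact deletion-count lemmas (the analogue of the "$10s$ per gadget" bound and the block-selection property), and the low-treewidth/treedepth bound for exponentially long gadgets.
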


In fact, we show a stronger lower bound that rules out the same running time with respect to a different graph parameter, called treedepth, which is never smaller than treewidth.
Our lower bound is obtained via a reduction from $k\times k$ \textsc{Permutation Clique}~\cite{LokshtanovMS18}, which produces an instance of size $2^{\Oh(k)}$ and treedepth $\Oh(k)$.

\subparagraph{Related work.}

The two considered $\hh$-\textsc{Vertex Deletion} problems 
have been studied in several contexts.
Both problems are FPT parameterized by the solution size $k$, with the best-known running times $\Oh(8^k(n+m))$ for $\hh = \texttt{interval}$~\cite{CaoK16}
and $2^{\Oh(k \log k)}n^{\Oh(1)}$ for $\hh = \texttt{chordal}$~\cite{CaoM15} (but the problem becomes W[2]-hard for $\hh = \texttt{perfect}$~\cite{Heggernes13}). 
There are polynomial-time approximation algorithms with approximation factor~8 for $\hh = \texttt{interval}$~\cite{CaoK16} and $k^{\Oh(1)}$ for $\hh = \texttt{chordal}$~\cite{JansenP18}.
Observe that, in these two regimes, vertex deletion into chordal graphs seems harder than into interval graphs (although no lower bounds are known to justify such a separation formally); this contrasts our results with respect to the treewidth parameterization. 

Both studied problems admit exact exponential algorithms with running times of the form $\Oh((2-\eps)^n)$~\cite{bliznets2016largest} as well as polynomial kernelizations~\cite{AgrawalLMSZ19, AgrawalPSZ19, JansenP18}.
The obstructions to being chordal (resp. interval) enjoy the Erd\H{o}s-P\'{o}sa property: any graph $G$ either contains $k$ vertex-disjoint subgraphs which are not chordal (resp. not interval) or a~vertex set $X$ of size $\Oh(k^2 \log k)$ such that $G-X$ is chordal~\cite{KimK20} (resp. interval~\cite{AgrawalLM0Z18}).
Vertex deletion into other subclasses of perfect graphs has been studied as well~\cite{AgrawalKLS16, AhnEKO20, AhnKL22, HofV13}.
For other modification variants, where instead of vertex deletions one considers removals, insertions, or contractions of edges, see, e.g.,~\cite{BliznetsCKPM20, Cai03, CaoK16, CaoM15, FominV13, LokshtanovMS13, Yannakakis81}.

The concept of representative families, which plays an important role in our algorithm for \textsc{ChVD}, has found applications outside the context of treewidth as well~\cite{zehavi2016partial, zehavi2015mixing}.
Our other tool, boundaried graphs, has revealed fruitful insights
for various graph classes~\cite{BasteST20, meta, JansenKW21}.

\subparagraph{Organization of the paper.}

We begin by describing our technical contributions informally in \cref{sec:techniques}.
\short{We provide basic preliminaries in \cref{sec:prelim}, while the extended preliminaries including tree decompositions and representative families can be found in the full version of the article.
\cref{sec:chordal} is devoted to establishing a connection between chordal graphs and graphic matroids.
The description of the dynamic programming algorithm over a tree decomposition follows standard conventions and is provided in the full version.}
\full{In \cref{sec:prelim} we provide formal preliminaries.
\cref{sec:chordal} is devoted to establishing a connection between chordal graphs and graphic matroids, which is followed by the proof of \cref{thm:intro:chordal}.}
In \cref{sec:interval} we prove our lower bound for \textsc{Interval Vertex Deletion}.
We conclude in \cref{sec:conclusion}.
\short{The proofs of statements indicated with $(\star)$ are postponed to the full version.
}

\section{Techniques}
\label{sec:techniques}

\subparagraph{Chordal Vertex Deletion.}
The standard approach to design algorithms over a bounded-width tree decomposition is to assign a data structure to each node $t$ in the decomposition, which stores information about partial solutions for the subgraph associated with the subtree of $t$.
Suppose that $X \sub V(G)$ is a bag of $t$, $A \sub V(G) \sm X$ denote the set of vertices appearing in the bags of the descendants of $t$ (but not in $X$), and $B \sub V(G)$ is the set of remaining vertices.
We say that a subset $S \sub V(G)$ is a \emph{solution} if $G[S]$ is chordal; we want to \emph{maximize} the size of $S$.
Next, a pair $(S_A \sub A, S_X \sub X)$ is a \emph{partial solution} if $G[S_A \cup S_X]$ is chordal.
A set $S_B \sub B$ is an \emph{extension} of a partial solution $(S_A, S_X)$ if $S_A \cup S_X \cup S_B$ is a solution.
Since $X$ separates $S_A$ from $S_B$, the graph $G[S_A \cup S_X \cup S_B]$ can be regarded as a result of \emph{gluing}  $G[S_A \cup S_X]$ with  $G[S_B \cup S_X]$ alongside the \emph{boundary} $S_X$.
For a node $t$ and $S_X \subseteq X$, we want to store a family of partial solutions $\gcal_{t,S_X}$ so that for every possible  $S_B \sub B$: if $S_B$ is an extension for some partial solution $(S_A,S_X)$, then there exists a partial solution $(S'_A,S_X)\in\gcal_{t,S_X}$ for which (a) $S_B$ is still a valid extension, and 
(b) $S'_A$ is at least as large as $S_A$.
We say that such a family satisfies the correctness invariant for $(t,S_X)$.

Jansen et al.~\cite{JansenKW21} showed that any chordal graph $H$ with a boundary of size $k$ can be \emph{condensed} to a graph $H'$ on $\Oh(k)$ vertices that exhibits the same behavior in terms of gluing.
More precisely, the gluing product of $H$ with any graph $J$ is chordal if and only if the gluing product of $H'$ with $J$ is chordal.
Since there are $2^{\Oh(\tw^2)}$ graphs on $\Oh(\tw)$ vertices and $2^{\Oh(\tw)}$ choices for the boundary $S_X$, it suffices to store only $2^{\Oh(\tw^2)}$ partial solutions.

We take this idea one step further and show that it is actually sufficient to store only $2^{\Oh(\tw)}$ partial solutions.
To this end, we investigate the properties of the class of chordal graphs with respect to the gluing operation and prove a homomorphism theorem relating it to graphic matroids. 
A \emph{graphic matroid} of a graph $J$ is a set system $\mathcal{I}$ over $E(J)$ where a subset $S \sub E(J)$
belongs to  $\mathcal{I}$ (and is called \emph{independent}) when $S$ contains no cycles.
A~\emph{rank} of a matroid is the largest size of an independent set; here this coincides with the size of any spanning forest in $J$.
In the following statement, $\mathcal{G}_{X,B}$ is a family of graphs  $H$ that
satisfy (a) $V(H) \supseteq X$ and (b) $H[X] = B$.
For graphs $H_1, H_2 \in \mathcal{G}_{X,B}$ we
assume that $V(H_1) \cap V(H_2) = X$ and
define their gluing product as $H_3 = (H_1,X) \oplus (H_2,X)$ where $V(H_3) = V(H_1) \cup V(H_2)$ and $E(H_3) = E(H_1) \cup E(H_2)$.

\begin{theorem}\label{prop:intro:homo}
Consider a family of graphs $\mathcal{G}_{X,B}$ for some pair $(X,B)$. 
There exists a~graphic matroid $M=(E,\mathcal{I})$ of rank at most $|X|-1$ and a polynomial-time computable mapping $\sigma: \mathcal{G}_{X,B} \to 2^E$ such that $(H_1,X) \oplus (H_2,X)$ is chordal if and only if \mbox{$\sigma(H_1) \cap \sigma(H_2) = \emptyset$} and $\sigma(H_1) \cup \sigma(H_2) \in \mathcal{I}$.
\end{theorem}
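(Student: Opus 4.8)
The plan is to analyze the structure of chordal graphs in the "boundaried" setting, i.e., graphs $H \in \mathcal{G}_{X,B}$ where $X$ is a distinguished boundary and $H[X]=B$ is fixed. The first observation is that when we glue $H_1$ and $H_2$ along $X$, any hole (induced cycle of length $\ge 4$) in the result must either already live entirely in $H_1$ or $H_2$, or else it must pass through $X$ at least twice, alternating between "private" parts of $H_1$ and $H_2$. Since each $H_i$ is already chordal, the only new obstructions created by gluing are of the latter type. So I would first reduce to the case where both $H_1$ and $H_2$ are individually chordal, and then understand exactly which pairs of boundaried chordal graphs create a hole when glued. The key is to encode, for each chordal $H \in \mathcal{G}_{X,B}$, a "signature" capturing the relevant reachability information: roughly, for each pair of vertices $u,v \in X$, whether $H$ contains an induced path between $u$ and $v$ whose internal vertices lie in $V(H)\setminus X$ and which is non-adjacent to the rest of $X$ (or some suitable refinement, e.g. also recording whether $uv \in E(B)$). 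A new hole appears in $(H_1,X)\oplus(H_2,X)$ precisely when the signatures of $H_1$ and $H_2$ can be combined into a cyclic alternating structure — and this is where the matroid will enter.

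\textbf{The matroid construction.} The plan is to take the ground set $E$ to be (a subset of) the vertex pairs of $X$, or more precisely the edges of an auxiliary graph on vertex set $X$ (hence "graphic matroid of rank at most $|X|-1$" — a spanning forest on $|X|$ vertices has at most $|X|-1$ edges). The mapping $\sigma(H)$ should collect exactly those pairs $\{u,v\}$ for which $H$ "offers a private induced path" connecting $u$ and $v$. Then:
\begin{itemize}
\item $\sigma(H_1) \cap \sigma(H_2) = \emptyset$ rules out the shortest possible new hole: a $4$-cycle (or in general a hole) formed by one private path through $H_1$ and one through $H_2$ on the same endpoint pair $\{u,v\}$, together with the structure on $X$ — if both graphs supply an internally-$X$-avoiding induced $u$–$v$ path, gluing them creates a hole.
\item $\sigma(H_1) \cup \sigma(H_2) \in \mathcal{I}$ (acyclicity in the graphic matroid) rules out the longer new holes: a hole using $k \ge 2$ private paths that alternate between $H_1$ and $H_2$ corresponds exactly to a cycle in the auxiliary graph whose edges are split between $\sigma(H_1)$ and $\sigma(H_2)$; such a cycle exists in $\sigma(H_1) \cup \sigma(H_2)$ iff that union is dependent.
\end{itemize}
I would carefully set up the auxiliary graph (likely on vertex set $X$, with the convention that edges of $B$ may need to be deleted or preprocessed so that "being in a hole" corresponds cleanly to a cycle), verify the forward direction (a new hole $\Rightarrow$ either a shared pair or a rainbow cycle) and the backward direction (a shared pair or rainbow cycle $\Rightarrow$ we can extract an actual induced cycle of length $\ge 4$), and check that $\sigma$ is polynomial-time computable — the private-path reachability between boundary vertices in a chordal graph can be tested by standard connectivity/BFS routines after deleting appropriate vertex neighborhoods.

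\textbf{The main obstacle} I anticipate is the backward (completeness) direction: given a cycle in $\sigma(H_1)\cup\sigma(H_2)$ whose edges alternate appropriately, I must assemble the corresponding private induced paths in $H_1$ and $H_2$ into a genuine \emph{induced} cycle in the glued graph of length $\ge 4$. Two private paths sharing an endpoint might have unwanted chords through that endpoint's behavior, or a selected set of private paths might create chords among themselves via $X$, so one must be careful that the definition of $\sigma$ (what counts as a "private path") is exactly right — strong enough that any hole forces a matroid obstruction, yet restrictive enough that matroid obstructions can be realized as honest holes. Likely this requires choosing, among all private $u$–$v$ paths, a shortest or otherwise minimal one, and arguing that a minimal alternating cyclic combination yields an induced cycle; also one must separately handle chords of length creating smaller holes, invoking chordality of $H_1$ and $H_2$ to push any such chord back into a shorter obstruction and inductively reduce. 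Handling the edges of $B$ (which are present in both $H_1$ and $H_2$ after gluing and could serve as chords or as degenerate "private paths" of length $1$) is the fiddly part that the precise definition of the ground set $E$ and of $\sigma$ must get right — this is presumably why the statement allows $E$ to be an abstract ground set of the graphic matroid rather than literally $\binom{X}{2}$, and why the rank bound is $|X|-1$ rather than exactly $|X|$.
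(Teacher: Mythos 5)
There is a genuine gap, and it sits precisely where you relegated the difficulty to a ``fiddly part'': holes in the glued graph that run along edges of $B$ inside the boundary. Your characterization (a shared pair, or a cycle in an auxiliary graph on vertex set $X$) does not capture them. Concretely, let $X=\{a,x,b\}$ and let $B$ consist of the single edge $ax$ (so $b$ is isolated in $B$). Let $H_1$ add one private vertex $p$ adjacent to $a$ and $b$, and let $H_2$ add one private vertex $q$ adjacent to $x$ and $b$; both graphs are trees, hence chordal, and each private path is a private induced path in your sense (its interior avoids the rest of $X$). The glued graph contains the induced $5$-cycle $p,a,x,q,b$, so it is not chordal; yet $\sigma(H_1)=\{\{a,b\}\}$ and $\sigma(H_2)=\{\{x,b\}\}$ are disjoint and their union is a forest on $X$, so your test declares the gluing chordal. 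Note that this breaks the direction you believed your two bullets already covered (every hole yields a shared pair or a rainbow cycle), not the assembly direction you flagged as the main obstacle. Any repair must identify $a$ with $x$, i.e.\ the vertices of the auxiliary graph cannot be the vertices of $X$ but (at least) the connected components of $B$; and once a hole meets only one component of $B$, even that is insufficient --- one is forced to work with the components of $B$ relative to a minimal vertex separator avoided by the hole (e.g.\ for $B$ a path $u\!-\!s\!-\!v$ and private vertices on both sides adjacent to $u,s,v$, the gluing has a $4$-hole through $u$ and $v$ although no ``private path avoiding the rest of $X$'' exists on either side). Consequently the rank bound is not the trivial ``a forest on $|X|$ vertices has at most $|X|-1$ edges''.

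The paper's construction is correspondingly different. The matroid is the graphic matroid of $\base(B)$, a disjoint union over all $S\in\minsep(B)$ of cliques whose vertices are the components in $\comp(B,S)$; the signature of $H$ depends only on the neighborhoods $N_H(C)\subseteq X$ of the connected components $C$ of $H-X$ (each contributing, per separator $S$, a canonical path through the members of $\comp(B,S)$ it touches), not on induced private paths --- this is also what makes $\sigma$ trivially polynomial-time computable and makes the signature of a gluing equal the union of the signatures. The hole-versus-cycle correspondence goes through the auxiliary graphs $\aux(G,X,S)$ together with a structural lemma (\cref{lem:separator-many-sets}, used in \cref{lem:chordal:criterion1}) guaranteeing that a hole meeting a single component of $G[X]$ avoids some minimal separator $S$ whose attached components confine the hole's $X$-subpaths; and the bound rank $\le |X|-1$ becomes a genuine lemma, $\sum_{S\in\minsep(B)}(|\comp(B,S)|-1)\le |X|-1$ for chordal $B$, proved by induction via simplicial vertices (\cref{lem:chordal:rank}). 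So your plan would need to be rebuilt around this per-separator bookkeeping; as written, both the ground set and the claimed equivalence are not salvageable by local adjustments to the treatment of $B$-edges.
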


With this criterion at hand, we can employ the machinery of representative families to truncate the number of partial solutions to be stored for a node of a tree decomposition.
Technical details aside, for a family $\scal$ of independent sets in a matroid $M=(E,\mathcal{I})$, a~subfamily $\widehat \scal \sub \scal$ is called \emph{representative} for $\scal$ if for every independent set $Y$ in $M$: if there exists $X \in \scal$ so that $X \cap Y = \emptyset$ and $X \cup Y \in \mathcal{I}$, then there exists $\widehat X \in \widehat \scal$ so that $\widehat X \cap Y = \emptyset$ and $\widehat X \cup Y \in \mathcal{I}$.
Fomin et al.~\cite{representative-efficient} showed that for any family $\scal$ in a graphic matroid (more generally, in a linear matroid) of rank $k$ there exists a representative family of size at most $2^k$ and it can be constructed in time $2^{\Oh(k)}$.
We~use \cref{prop:intro:homo} to translate this result into the language of chordal graphs and gluing.
When $\mathcal{G}_{t,S_X}$ is a family of 
partial solutions that satisfies the correctness invariant for $(t,S_X)$, a~representative family for $\sigma(\gcal_{t,S_X})$ in the related graphic matroid $M$ corresponds to a subfamily  $\widehat  {\mathcal{G}}_{t,S_X} \sub \gcal_{t,S_X}$ that satisfies condition~(a) of the correctness invariant and $|\widehat  {\mathcal{G}}_{t,S_X}| \le 2^{\tw}$.
In order to satisfy condition (b), we need to assign weights to the elements of the matroid $M$, encoding the size of the largest partial solution mapped to each element.
We can then utilize the weighted variant of representative families, which preserves the largest-weight elements~\cite{representative-efficient}.
By storing only the condensed forms of the partial solutions (having $\Oh(\tw)$ vertices), we also achieve a linear dependency on $|V(G)|$.

In order to prove \cref{prop:intro:homo}, 
we give a novel criterion for testing chordality of a gluing product.
When $G$ originates from gluing two chordal graphs $G_1,G_2$ alongside boundary $X$, then any hole in $G$ must visit both $V(G_1)\sm X$ and  $V(G_2)\sm X$, 
so it must traverse $X$ multiple times.
We show that if a hole $H$ intersects at least two connected components of $G[X]$, then it corresponds to a cycle in the graph obtained from $G$ by contracting each of the connected components of $G[X]$, $G_1-X$, $G_2-X$ into single vertices.
Otherwise, let $C$ be the unique connected component of $G[X]$ that is intersected by the hole.
We prove that
there exists a vertex set $S \sub V(C)$ that is disjoint from $V(H)$ and $C-S$ has two connected components $C_1, C_2$ satisfying $N_C(C_1) = N_C(C_2) = S$ (below we refer to such components as \emph{relevant}) and
having non-empty intersections with $V(H)$.
Moreover, every vertex from $V(H) \cap C$ belongs to some relevant component.
Consider a graph $\aux(G,X,S)$ obtained from $G$ by (1) removing the connected components of $G[X]$ different than $C$, (2) contracting relevant components of $C-S$ into single vertices while removing the irrelevant ones, and (3)~contracting the components of $G_1-X$, $G_2-X$ into single vertices.
A detailed construction is given in \cref{def:chordal:aux}; see also Figure \ref{fig:aux} on page \pageref{fig:aux}.
Then the hole $H$ corresponds to a cycle in  $\aux(G,X,S)$.
The first scenario can be analyzed with this approach as well, by taking $S=\emptyset$.
We prove that considering all \minver{s} $S$ in $G[X]$ and checking acyclity of each auxiliary graph $\aux(G,X,S)$ yields a necessary and sufficient condition for $G$ to be chordal. 

This criterion allows us to construct a graphic matroid encoding all the information about 
each of the graphs $G_1, G_2$ necessary to reconstruct the graphs $\aux(G,X,S)$ and to determine whether $G$ is chordal.
In~order to bound the rank of this matroid, we investigate the structure of \minver{s} in a chordal graph and bound the 
size of a spanning forest in a certain graph obtained
from the union of $\aux(G,X,S)$.
A~criterion of a~similar kind is known for testing planarity of a~gluing product of planar graphs
when the boundary has a~Hamiltonian cycle; then the corresponding auxiliary graph (defined in a different way) should be bipartite~\cite{DiETT99}. 
\short{Our criterion can be also compared to the one used by Bonnet et al.~\cite{BonnectBKM17} for analyzing gluing products with respect to certain subclasses of chordal graphs.
We elaborate more on their approach in the full version of the paper.}

\full{Our criterion can be also compared to the one used by Bonnet et al.~\cite{BonnectBKM17} in their work on \textsc{Bounded $\mathcal{P}$-Block Vertex Deletion}.
Here, the task is to remove the smallest number of vertices from a graph so that every remaining biconnected component has at most $d$ vertices and belongs to the class  $\mathcal{P}$.
They showed that when $\mathcal{P}$ is a subclass of chordal graphs then \textsc{Bounded $\mathcal{P}$-Block Vertex Deletion} can be solved in time $2^{\Oh(\tw \cdot d^2)} n^{\Oh(1)}$ and otherwise it cannot be solved in time $2^{o(\tw \log \tw)} n^{\Oh(1)}$ for fixed $d$ unless ETH fails.
Their positive result is also based on a criterion which determines whether a gluing product of two graphs has the desired property by checking if a union of two certain sets is independent in a graphic matroid.
It handles cases similar to the first scenario considered in the outline above.
However, while this criterion is necessary it is not 
sufficient and more information needs to be stored in a DP state, leading to the additional factor $d^2$ in the exponent.
In our setting the biconnected components can be arbitrarily large so such a factor is prohibitive.}

\subparagraph{Interval Vertex Deletion.}
In order to prove \cref{thm:intro:interval} we present a parameterized reduction from $k\times k$ \textsc{Permutation Clique}.
Here, the input is a graph $G$ on vertex set $[k] \times [k]$, and we ask whether there exists a permutation $\pi \colon [k] \to [k]$ such that $(1,\pi(1)), (2,\pi(2)), \dots, (k,\pi(k))$ forms a~clique in $G$.
Lokshtanov et al.~\cite{LokshtanovMS18} proved that $k\times k$ \textsc{Permutation Clique} cannot be solved in time $2^{o(k \log k)}$ under ETH.
So we seek a reduction from  $k\times k$ \textsc{Permutation Clique} to \textsc{Interval Vertex Deletion} that produces a graph of treewidth $\Oh(k)$.

Imagine an interval model of a complete graph $Y$ on vertex set $[k]$ in which all the right endpoints of the intervals coincide and all the left endpoints are distinct.
Choosing the order of the left endpoints encodes some permutation $\pi \colon [k] \to [k]$
(see \Cref{fig:permutation} on page \pageref{fig:permutation}).
We can extend this interval model by inserting a new vertex $v$ only if $N(v)$ corresponds to a set of intervals intersecting at a single point.
This is possible only when $N(v) = \pi([\ell])$ for some $\ell \in [k]$.
Furthermore, inserting to $Y$ independent vertices $v_1, v_2, \dots, v_k$, such that $|N(v_i)| = i$ and $N(v_i) \subset N(v_{i+1})$, enforces the choice of permutation~$\pi$.
We can thus encode a permutation $\pi$ by an ascending family of sets $N_1 \subset N_2 \subset \dots \subset N_k = [k]$, satisfying $N_i = \pi([i])$, which correspond to the neighborhoods of $v_1, v_2, \dots, v_k$ in $Y$.
On the other hand, any ascending family of sets for which the construction above gives an interval graph, must encode some permutation.
On an intuitive level, a partial interval model of a size-$k$ separator can encode one of $k!$ permutations.

We need a mechanism to verify that a chosen permutation $\pi$ encodes a clique, i.e., that it satisfies $k \choose 2$ constraints of the form $(i,\pi(i))(j,\pi(j)) \in E(G)$.
To implement a single constraint, we construct a \emph{choice gadget}, inspired by the reduction to 
\textsc{Planar Vertex Deletion}~\cite{Pilipczuk17}.
Such a gadget $C_{i,j}$ is defined as a path-like structure, divided into blocks, so that each block has some special vertices adjacent to $Y$ (see \Cref{fig:selector} on page \pageref{fig:selector}). 
We show that any minimum-size interval deletion set in $C_{i,j}$ must `choose' one block and leave its special vertices untouched while it can remove the remaining special vertices.
We use this gadget to check if a permutation $\pi$ encoded by an ascending family of sets $N_1 \subset N_2 \subset \dots \subset N_k$ satisfies the constraint $(i,\pi(i))(j,\pi(j)) \in E(G)$.
As $\pi(i)$ is the only element in $N_i \sm N_{i-1}$, this information can be extracted from the tuple $(N_{i-1},N_i,N_{j-1},N_j)$.
We create a single block in $C_{i,j}$ for each valid tuple.
Since the number of such tuples is $2^{\Oh(k)}$, we need a choice gadget of exponential length, unlike the mentioned reduction which works in polynomial time.
However, producing an~instance of size  $2^{\Oh(k)}$ and treewidth $\Oh(k)$ is still sufficient to achieve the claimed lower bound.

\section{Preliminaries}
\label{sec:prelim}

We write $[k] = \{1,2,\dots, k\}$ and assume that $[0] = \emptyset$.
We abbreviate $X \sm v = X \sm \{v\}$.
For a function $w \colon X \to \mathbb{N}$ and $S \sub X$ we use shorthand $w(S) = \sum_{x\in S} w(x)$.
\short{We follow the standard notational conventions for graphs, which are omitted from this extended abstract. }

\full{
\subparagraph{Graphs.}
We consider {finite}, simple, undirected graphs. {We denote} the vertex and edge sets of a graph{~$G$}  by $V(G)$ and $E(G)$, respectively. 
For a set of vertices $S \subseteq V(G)$, by $G[S]$ we denote the graph induced by $S$. 
We use shorthand $G-v$ and $G-S$ for $G[V(G) \sm v]$ and $G[V(G) \sm S]$, respectively. The open neighborhood $N_G(v)$ of $v \in V(G)$ is defined as $\{u \in V(G) \mid \{u,v\} \in E(G)\}$. The closed neighborhood of~$v$ {is} $N_G[v] = N_G(v) \cup \{v\}$. For $S \subseteq V(G)$, we have $N_G[S] = \bigcup_{v \in S} N_G[v]$ and $N_G(S) = N_G[S] \setminus S$.
When $C$ is a subgraph of $G$ we abbreviate $G[C] = G[V(C)]$ and $N_G(C) = N_G(V(C))$.

For sets $S_1,S_2 \subseteq V(G)$ we denote by $E_G(S_1,S_2)$ the set of edges with one endpoint in $S_1$ and one in $S_2$.
We say that $S_1, S_2$ are adjacent in $G$ if $E_G(S_1,S_2) \ne \emptyset$.
A forest is a graph without cycles.
A set $S \sub V(G)$ is called a feedback vertex set if $G-S$ is a forest.
A clique in a graph $G$ is a vertex set $S$ such that
for each distinct $u,v \in S$ the edge $uv$ belongs to~$E(G)$.

A contraction of $uv \in E(G)$ introduces a new vertex adjacent to all of {$N_G(\{u,v\})$}, after which $u$ and $v$ are deleted. 
For $S \subseteq V(G)$ such that $G[S]$ is connected, we say we contract $S$ if we simultaneously contract all edges in $G[S]$ and introduce a single new vertex adjacent to~$N_G(S)$.}

\subparagraph{Separators.}
For vertices $u,v \in V(G)$ a vertex set $S \sub V(G) \sm \{u,v\}$ is called a $(u,v)$-separator if $u,v$ belong to different connected components of $G-S$.
A $(u,v)$-separator is minimal when no proper subset of it is a $(u,v)$-separator.
A vertex set $S$ is called a minimal vertex separator if $S$ is a minimal $(u,v)$-separator for some  $u,v \in V(G)$.

\begin{lemma}[\starr]
\label{lem:prelim:separator-minimal-comp}
Let $u, v$ be vertices in a graph $G$ and $S$ be a $(u,v)$-separator in $G$.
Denote by $C_u, C_v$ the connected components of $G-S$ that contain respectively $u$ and $v$.
Then $S$ is minimal if and only if $N_{G}(C_u) = N_{G}(C_v) = S$.
\end{lemma}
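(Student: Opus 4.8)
The plan is to prove both implications directly from the definitions, using the elementary fact that a vertex set $S$ is a $(u,v)$-separator exactly when $u$ and $v$ lie in distinct components of $G-S$, and that minimality means no proper subset retains this property.

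For the backward direction, suppose $N_G(C_u) = N_G(C_v) = S$. We want to show $S$ is a minimal $(u,v)$-separator. Take any $s \in S$; I claim $S \sm s$ is not a $(u,v)$-separator, which suffices. Since $s \in N_G(C_u)$, there is an edge from $s$ to some vertex of $C_u$, so in $G - (S\sm s)$ the vertex $s$ is in the same component as $u$; likewise $s$ is in the same component as $v$ because $s \in N_G(C_v)$. Hence $u$ and $v$ lie in a common component of $G-(S\sm s)$, so $S \sm s$ is not a $(u,v)$-separator. As this holds for every $s\in S$, and any proper subset of $S$ is contained in some $S\sm s$, no proper subset of $S$ separates $u$ from $v$; thus $S$ is minimal.

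For the forward direction, suppose $S$ is a minimal $(u,v)$-separator. First note $N_G(C_u) \sub S$: any neighbor of $C_u$ outside $S$ would lie in $C_u$ itself (as $C_u$ is a component of $G-S$), a contradiction; similarly $N_G(C_v) \sub S$. It remains to show $S \sub N_G(C_u)$ and $S \sub N_G(C_v)$. Suppose some $s \in S$ satisfies $s \notin N_G(C_u)$. Then removing $s$ from $S$ does not connect $s$ to $C_u$, and one checks that $C_u$ is still the component of $u$ in $G - (S\sm s)$: indeed a walk from $u$ leaving $C_u$ in $G-(S\sm s)$ would have to pass through $S$, and the only vertex of $S$ it could reach is... here I need to be slightly careful — the walk could in principle reach $s$ from within $C_v$ or elsewhere. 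The cleaner argument: consider the component $C$ of $u$ in $G-(S\sm s)$. If $v\notin C$ then $S\sm s$ separates $u$ and $v$, contradicting minimality, so we are done. If $v \in C$, then there is a $u$--$v$ path $P$ in $G-(S\sm s)$; since $S$ separates $u,v$, $P$ must use the only available vertex of $S$, namely $s$, so $s$ is adjacent along $P$ to a vertex $x$ reachable from $u$ in $G-S$, i.e.\ $x \in C_u$, giving $s \in N_G(C_u)$ — contradiction. Hence $S \sub N_G(C_u)$, and symmetrically $S \sub N_G(C_v)$, completing the proof.

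The main obstacle is the forward direction: one must resist the temptation to argue purely locally and instead use the $u$--$v$ path through the single surviving separator vertex to pin down that $s$ is adjacent to $C_u$. Once that path argument is set up, everything else is routine bookkeeping with components of $G-S$ and $G-(S\sm s)$.
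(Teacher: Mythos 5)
Your proof is correct and follows essentially the same approach as the paper: both directions hinge on removing a single vertex $s\in S$ and observing that $s\notin N_G(C_u)$ leaves $S\sm s$ a separator (contradicting minimality), respectively that $s\in N_G(C_u)\cap N_G(C_v)$ reconnects $u$ and $v$ through $s$. Your path-through-$s$ argument and the reduction of arbitrary proper subsets to sets of the form $S\sm s$ are only cosmetic variations of the paper's direct component argument.
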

\full{
\begin{proof}
To see the first implication
suppose w.l.o.g. that $N_G(C_u) \subsetneq S$.
Let $w \in S \setminus N_G(C_u)$.
The connected component of $u$ in the graph $G - (S \sm w)$ is $C_u$ because $N_G(C_u) \subseteq S \setminus  w$.
Therefore $S \setminus w$ is also a $(u,v)$-separator contradicting minimality of $S$.

To see the opposite implication suppose that $S' \subsetneq S$ is also a $(u,v)$-separator.
Let $w \in S \sm S'$.
But $w \in N_G(C_u) \cap N_G(C_v)$ so $u,v$ belong to the same connected component of~$G-S'$.
\end{proof}
}

A vertex (or a vertex set) is called {\em simplicial} if its open neighborhood is a clique. 

\begin{lemma}[\starr]
\label{lem:prelim:separator-minimal-simplicial}
Let $S$ be a minimal vertex separator in a graph $G$. Then $S$ does not contain any simplicial vertices.
\end{lemma}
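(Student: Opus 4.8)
The plan is to argue by contradiction: suppose $S$ is a minimal vertex separator in $G$, say a minimal $(u,v)$-separator, and suppose some $s \in S$ is simplicial in $G$, meaning $N_G(s)$ is a clique. First I would invoke \cref{lem:prelim:separator-minimal-comp}: letting $C_u, C_v$ be the connected components of $G-S$ containing $u$ and $v$ respectively, minimality of $S$ gives $N_G(C_u) = N_G(C_v) = S$. In particular $s \in N_G(C_u)$ and $s \in N_G(C_v)$, so $s$ has a neighbour $a \in V(C_u)$ and a neighbour $b \in V(C_v)$.

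The key observation is then that $a, b \in N_G(s)$, and since $N_G(s)$ is a clique, $ab \in E(G)$. But $a \in V(C_u)$ and $b \in V(C_v)$ with $C_u, C_v$ distinct connected components of $G - S$, so there is no edge between $V(C_u)$ and $V(C_v)$ in $G$; this contradicts $ab \in E(G)$. Hence no vertex of $S$ can be simplicial.

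I expect this to be entirely routine; the only thing to be slightly careful about is the degenerate possibilities, e.g.\ that $u$ or $v$ itself plays the role of $a$ or $b$ (which is fine, since $u \in V(C_u)$ and $v \in V(C_v)$), and that the vertices $a, b$ are genuinely distinct (they are, since they lie in different components). No real obstacle arises here — the lemma is essentially an immediate corollary of \cref{lem:prelim:separator-minimal-comp} together with the definition of simpliciality.
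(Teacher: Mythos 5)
Your proof is correct. It rests on the same key observation as the paper's — a simplicial vertex of $S$ cannot have neighbours in two distinct components of $G-S$, since its clique neighbourhood would force a cross edge — but you reach the contradiction by a slightly different route: you invoke \cref{lem:prelim:separator-minimal-comp} to guarantee that the simplicial vertex has a neighbour in both $C_u$ and $C_v$, and the cross edge $ab$ immediately contradicts the components being distinct. The paper instead argues without appealing to \cref{lem:prelim:separator-minimal-comp}: from the clique observation it concludes that the simplicial vertex $v$ is adjacent to at most one component of $G-S$, so $S\setminus v$ separates exactly the same pairs as $S$, contradicting minimality directly. Your version is shorter given that the characterization lemma is already available; the paper's version is self-contained and makes explicit the (slightly stronger) fact that deleting the simplicial vertex yields a smaller separator. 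Both are complete, and your attention to the degenerate cases ($a$ or $b$ possibly equal to $u$ or $v$, distinctness of $a$ and $b$) is fine, though as you note nothing goes wrong there.
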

\full{
\begin{proof}
Suppose that $S$ contains a simplicial vertex $v$.
Next, suppose there are two distinct connected components $C_1, C_2$ of $G-S$ which are adjacent to $v$.
Let $w_1 \in N_G(v) \cap C_1, w_2 \in N_G(v) \cap C_2$.
But then $w_1w_2 \in E(G)$ which contradicts the assumption that $C_1, C_2$ are distinct.
Therefore there is at most one connected component of $G-S$ adjacent to $v$.
But then $S \setminus v$ separates the same pairs of vertices as $S$ does.
This means that $S$ is not a minimal vertex separator.
\end{proof}
}

\subparagraph{Chordal and interval graphs.}

An interval graph is an intersection graph of intervals on the real line. In an interval model $\mathcal{I}_G = \{I(v) \mid v \in V(G)\}$ of a graph $G$, each vertex $v \in V(G)$ corresponds to a closed interval $I(v)$;
there is an edge between vertices~$u$ and~$v$ if and only if~$I(v) \cap I(u) \neq \emptyset$.

A {\em hole} in a graph is an induced (i.e., chordless) cycle of length at least four.
A graph is chordal when it does not contain any hole. 
An equivalent definition states that a~chordal graph is an intersection graph of a family of subtrees in a tree~\cite{Gavril74}.
This implies that any interval graph is chordal.
For more background on these graph classes see surveys~\cite{blair1993introduction, BrandstadtLS99}.

The characterization of the two classes as intersection graphs of intervals/subtrees leads to the following observation.

\begin{observation}\label{lem:prelim:closed}
The classes of chordal and interval graphs are closed under vertex deletions and edge contractions.
\end{observation}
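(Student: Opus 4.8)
The plan is to handle the two operations separately, relying in both cases on the intersection-model characterizations recalled above: chordal graphs as intersection graphs of subtrees of a host tree, and interval graphs as intersection graphs of closed intervals of the real line. In both settings, the only fact about the model that matters is how an adjacency in the graph translates into an intersection of the associated objects.

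\textbf{Vertex deletion.} Here I would simply observe that $G-v$ is an induced subgraph of $G$, and that both classes are hereditary: a hole in $G-v$ is an induced cycle of length at least four in $G$, and restricting a given interval model of $G$ to $V(G)\sm v$ yields an interval model of $G-v$. So nothing beyond the definitions is needed for this half.

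\textbf{Edge contraction.} Fix $uv\in E(G)$ and let $w$ be the new vertex of the graph $G'$ obtained by contracting $uv$, so that $N_{G'}(w)=N_G(\{u,v\})$ and $G'-w=G-\{u,v\}$. Given a representation of $G$ by subtrees $\{T_x\mid x\in V(G)\}$ of a tree $T$ (with $xy\in E(G)$ iff $V(T_x)\cap V(T_y)\neq\emptyset$), I would set $T_w:=T_u\cup T_v$ and keep $T_x$ unchanged for every other vertex $x$. The point that makes this work is that $T_u$ and $T_v$ share a vertex of $T$, precisely because $uv\in E(G)$, so $T_w$ is connected and hence a subtree of $T$. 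Then for any $x\notin\{u,v\}$ one has $x\in N_{G'}(w)$ iff $x\in N_G(u)\cup N_G(v)$ iff $V(T_x)$ meets $V(T_u)$ or $V(T_v)$ iff $V(T_x)\cap V(T_w)\neq\emptyset$, while all adjacencies inside $G-\{u,v\}$ are untouched; hence the modified family is a valid representation of $G'$, which is therefore chordal. The interval case is the same argument with $T$ a path: $I(u)$ and $I(v)$ are intersecting closed intervals, so $I(w):=I(u)\cup I(v)$ is again a closed interval, and the same adjacency check goes through.

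The only step needing (mild) care is the closure property invoked inside each model: the union of two intersecting subtrees of a tree is again a subtree, and the union of two intersecting closed intervals is a closed interval. Once this is settled, everything else is a mechanical translation of edges into intersections, and no genuine obstacle remains — in particular simplicity of $G'$ is automatic since intersection graphs are simple by construction.
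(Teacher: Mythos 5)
Your argument is correct and follows exactly the route the paper intends: the paper gives no explicit proof but notes that the observation "follows from the characterization of the two classes as intersection graphs of intervals/subtrees," which is precisely your construction (restrict the model for vertex deletion; replace the two intersecting subtrees/intervals by their union for contraction). Nothing further is needed.
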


An {\em asteroidal triple} (AT) is a triple of vertices such that for any two of them there exists a path between them avoiding the closed neighborhood of the third.
Interval graphs cannot contain ATs, which is a consequence of a linear ordering of any interval model.
It turns out that this is the only property that separates the two graph classes.

\begin{lemma}[{\cite{BrandstadtLS99}}]
\label{lem:prelim:interval-AT}
A graph is interval if and only if it is chordal and does not contain~an~AT.
\end{lemma}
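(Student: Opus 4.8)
The final statement is \Cref{lem:prelim:interval-AT}, cited from a textbook: ``A graph is interval if and only if it is chordal and does not contain an AT.'' This is the classical Lekkerkerker–Boland theorem, so the ``proof'' here is really a matter of recalling the standard argument rather than discovering something new. Let me sketch how I'd approach it.

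The forward direction (interval implies chordal + AT-free): chordal follows from the fact that any interval graph is the intersection graph of subtrees of a path, hence of a tree, hence chordal (already noted in the excerpt). AT-freeness follows from the linear structure of an interval model: given three vertices with intervals $I(a), I(b), I(c)$, one of them — say $b$ — has its interval "in the middle" in the sense that $I(b)$ separates $I(a)$ from $I(c)$ along the line (more precisely, there's a point of $I(b)$ that lies between $I(a)$ and $I(c)$, or the intervals overlap in a way that still blocks). Any path from $a$ to $c$ must "cross" this point, and the vertex whose interval contains it is a neighbor of $b$. So no path from $a$ to $c$ avoids $N[b]$.

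The reverse direction is the substantive part and the main obstacle. The standard proof (Lekkerkerker–Boland) characterizes the minimal non-interval graphs: a graph is interval iff it is chordal and AT-free iff it contains none of a specific finite/infinite list of forbidden induced subgraphs (the LB obstructions), which are precisely holes (handled by chordality) plus a family of "asteroidal" structures (handled by AT-freeness). One clean modern route: assume $G$ is chordal and AT-free; use a clique tree representation of $G$ (chordal graphs have one); show that AT-freeness forces the clique tree to be realizable as a path, equivalently that the maximal cliques can be linearly ordered so that for every vertex the cliques containing it form a contiguous interval (the "consecutive ones" / Fulkerson–Gross characterization of interval graphs). The key lemma is that if the clique tree has a branching vertex, one can extract three leaves / three maximal cliques whose pairwise-connecting paths in the tree avoid appropriate neighborhoods, yielding an AT.

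Honestly, since this is a lemma quoted verbatim with a citation to \cite{BrandstadtLS99}, the right move in a paper like this is to state that it is classical and refer to the textbook rather than reproduce the full Lekkerkerker–Boland argument. If a proof were expected, I'd go through the clique-tree-to-path reduction above, and the hard step would be the careful case analysis showing a branching clique tree yields an asteroidal triple — that's exactly where the original proof does real work.

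\medskip

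\textbf{Proof proposal.} This is the classical Lekkerkerker–Boland characterization, so I would not reprove it from scratch; the statement is cited as \cite{BrandstadtLS99} precisely because the argument is standard, and in the write-up it suffices to invoke that reference. For completeness, here is the structure of the argument one would give if a proof were required.

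For the forward implication, suppose $G$ is an interval graph with model $\mathcal{I}_G$. Since each interval is a subtree of a path, hence of a tree, $G$ is an intersection graph of subtrees of a tree and therefore chordal by \cite{Gavril74}. For AT-freeness, suppose $\{a,b,c\}$ were an asteroidal triple. Order the three intervals by (say) their left endpoints; after relabelling, some vertex — call it $b$ — has the property that every point of the real line that lies "between" $I(a)$ and $I(c)$ (in the sense of separating them along the line) is covered by $I(b)$ together with $I(a)\cup I(c)$; more carefully, one shows there is a point $p$ such that every $a$--$c$ path must contain a vertex whose interval contains $p$, and $p \in I(b)$, so that vertex lies in $N_G[b]$. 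This contradicts the existence of an $a$--$c$ path avoiding $N_G[b]$, so $G$ has no AT.

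For the reverse implication, assume $G$ is chordal and AT-free. Using that chordal graphs admit a clique tree $T$ whose nodes are the maximal cliques of $G$ and in which, for every vertex $v$, the nodes containing $v$ induce a subtree, the goal is to show that $T$ can be taken to be a path; equivalently, by the Fulkerson–Gross theorem, that the maximal cliques of $G$ admit a linear order in which the cliques containing any fixed vertex are consecutive — which is precisely the definition of an interval graph. The hard part — and the main obstacle — is to show that if no clique tree of $G$ is a path, then $G$ contains an AT: pick a clique tree $T$ with a branching node $Q$ of degree at least three, choose three leaves $L_1, L_2, L_3$ of $T$ lying in three different branches at $Q$, pick vertices $a \in L_1, b \in L_2, c \in L_3$ that are "private" to the respective branches (exploiting maximality of the cliques), and verify via the subtree property that for each pair there is a path between them through $Q$'s branch avoiding the closed neighbourhood of the third. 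This case analysis, together with handling the degenerate situations, is exactly the content of the Lekkerkerker–Boland proof, and I would cite it rather than expand it here.
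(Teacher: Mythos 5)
The paper itself gives no proof of this lemma: it is stated as a classical result (the Lekkerkerker--Boland characterization) with a citation to \cite{BrandstadtLS99}, which is exactly what you propose doing, so your approach matches the paper's. Your accompanying sketch of the two directions is a reasonable outline of the standard argument, but since both you and the paper defer the actual proof to the cited textbook, nothing further is required.
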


We collect two more useful facts about chordal graphs.

\begin{lemma}[{\cite{BrandstadtLS99}}]
\label{lem:prelim:simplicial-exists}
Every non-empty chordal graph contains a simplicial vertex.
\end{lemma}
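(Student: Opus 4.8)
The statement to prove is that every non-empty chordal graph contains a simplicial vertex. The plan is to proceed by induction on $|V(G)|$, using the existence of minimal vertex separators to split the graph. If $G$ is complete, every vertex is simplicial and we are done, so assume $G$ has two non-adjacent vertices $a,b$. Take an inclusion-minimal $(a,b)$-separator $S$, and let $C_a$ be the connected component of $G-S$ containing $a$. Set $G_a = G[V(C_a) \cup S]$; this is a proper induced subgraph of $G$ (it misses $b$), hence chordal by \cref{lem:prelim:closed}, and non-empty, so by the induction hypothesis it contains a simplicial vertex $v$. The first key step is to argue that $v$ can be chosen inside $C_a$, i.e.\ $v \notin S$: indeed, by \cref{lem:prelim:separator-minimal-comp} every vertex of $S$ has a neighbour in $C_a$, and one shows using chordality that $S$ itself is a clique in $G$ (a standard fact: a minimal separator in a chordal graph induces a clique, provable by picking two non-adjacent vertices of $S$, shortest paths through $C_a$ and through the other side, and extracting a hole) — so if $G_a$ had all its simplicial vertices inside $S$, one could still shave off the vertices of $S$ one at a time while keeping the graph chordal and non-empty until a simplicial vertex appears in $C_a$; more cleanly, apply the induction to $G_a$ after noting that $G_a$ is not complete unless $C_a$ already supplies a simplicial vertex of the whole graph.

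The second key step is to check that a simplicial vertex $v \in V(C_a)$ of $G_a$ is in fact simplicial in $G$. This is immediate once we observe $N_G(v) \subseteq V(C_a) \cup S = V(G_a)$: since $C_a$ is a connected component of $G-S$, any neighbour of $v$ that is not in $S$ lies in $C_a$, so $N_G(v) = N_{G_a}(v)$, and the latter is a clique by the choice of $v$. Hence $v$ is simplicial in $G$, completing the induction.

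The main obstacle is the first step — ensuring the simplicial vertex we obtain from the induction hypothesis does not get stuck inside the separator $S$. The tool for this is precisely that $S$ induces a clique in a chordal graph, which lets us either peel $S$ away harmlessly or argue directly that the recursion terminates with a simplicial vertex outside $S$; establishing this clique property is the one place where chordality (absence of holes) is used in an essential, non-trivial way, via a short-path/hole-extraction argument. Everything else is bookkeeping with \cref{lem:prelim:closed,lem:prelim:separator-minimal-comp}.
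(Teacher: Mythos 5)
The paper itself offers no proof of this lemma (it is cited to Brandst\"adt, Le, and Spinrad; the result is Dirac's classical theorem on chordal graphs), so the only question is whether your argument stands on its own — and as written it has a genuine gap exactly at the point you flag as "the main obstacle". Your induction hypothesis is the bare statement "every non-empty chordal graph has a simplicial vertex", and this is too weak to push the induction through: applying it to $G_a = G[V(C_a)\cup S]$ only yields \emph{some} simplicial vertex, which may lie in $S$, and neither of your two patches repairs this. The "shave off $S$ one vertex at a time" idea fails because simpliciality is not inherited upward: a vertex $v \in C_a$ that is simplicial in $G_a - S'$ for some $S' \subseteq S$ has $N_{G_a}(v)\setminus S'$ a clique, but its deleted neighbours in $S'$ need not be adjacent to its remaining neighbours, so $v$ need not be simplicial in $G_a$, let alone in $G$. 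The "more cleanly" sentence only handles the case where $G_a$ is complete (then indeed any vertex of $C_a$ works, since $N_G(v)=N_{G_a}(v)$); when $G_a$ is not complete it says nothing about where the simplicial vertex provided by the induction hypothesis sits, which is precisely the unresolved case.

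The standard repair — and the one your own observation that $S$ is a clique is crying out for — is to strengthen the statement being proved by induction to Dirac's form: \emph{every chordal graph has a simplicial vertex, and if it is not complete it has two non-adjacent simplicial vertices}. Then in your setup one argues on both sides of the separator: $G_a$ is either complete (so every vertex of the non-empty set $C_a$ is simplicial in $G_a$) or has two non-adjacent simplicial vertices, and since $S$ induces a clique (the minimal-separator-is-a-clique fact you correctly invoke, via \cref{lem:prelim:separator-minimal-comp} and a hole-extraction argument) at most one of the two can lie in $S$; hence $C_a$ contains a vertex simplicial in $G_a$, which by your (correct) second step is simplicial in $G$. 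Doing the same for $C_b$ produces a second simplicial vertex non-adjacent to the first, which is what closes the strengthened induction. Your second key step and your use of \cref{lem:prelim:closed} are fine; the missing ingredient is stating and carrying the strengthened hypothesis, without which the argument does not go through.
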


When a chordal graph contains a cycle then it also contains a triangle.
As a bipartite graph does not have any triangles, we obtain the following.

\begin{observation}\label{lem:prelim:bip-chordal}
If a graph is chordal and bipartite, then it is a forest.
\end{observation}

A vertex set $S$ in graph $G$ is called a {\em chordal deletion set} (resp. {\em interval deletion set}) if $G-S$ is chordal (resp. interval).
The \textsc{Chordal/Interval Vertex Deletion} problem is defined as follows.
We are given a graph $G$, a non-negative weight function $w \colon V(G) \to \mathbb{N}$, an integer $p$,
and we ask whether there exists a chordal (resp. interval) deletion set $S$ in $G$ such that $w(S) \le p$.

\subparagraph{Boundaried graphs.}

For a set $X$ and a graph $B$ on vertex set $X$, we define a family $\mathcal{G}_{X,B}$ of graphs $G$
that satisfy (a) $V(G) \supseteq X$, (b) $G[X] = B$.
For graphs $G_1, G_2 \in \mathcal{G}_{X,B}$ we define their gluing product $(G_1,X) \oplus (G_2,X)$ by taking a disjoint union of $G_1$ and $G_2$ and identifying vertices from $X$.
Note that two vertices from $X$ are adjacent in $G_1$ if and only if they are adjacent in $G_2$.

For $X \sub V(G)$ a pair $(G,X)$ is called a boundaried graph.
We say that two boundaried graphs  $(G_1,X), (G_2,X)$ are compatible if $G_1,G_2 \in \gcal_{X,B}$ for some $B$.
We remark that it is common in the literature to define a boundaried graph as a triple $(G,X,\lambda)$ where $\lambda \colon X \to [|X|]$ is a labeling (cf. \cite{BasteST20, meta}).
Since we do not need to perform gluing of abstract boundaried graphs, but only ones originating from subgraphs of a fixed graph, this simpler definition is sufficient.

As an example, consider a graph $G$ and $X \subseteq V(G)$.
Then for any $A \sub V(G) \sm X$ the graph $G[A \cup X]$ belongs to $\mathcal{G}_{X,G[X]}$.
When $A, B \sub V(G) \sm X$ are disjoint and non-adjacent then $G[A \cup B \cup X]$ is isomorphic to $(G[A \cup X], X) \oplus (G[B \cup X], X)$.

\full{
\subparagraph{Tree decompositions.}

\begin{definition}[Treewidth]\label{def:prelim:tw}
A \emph{tree decomposition} of a graph $G$ is a pair $(\ttw, \chi)$ where $\ttw$ is a tree, and~$\chi \colon V(\ttw) \to 2^{V(G)}$ is a function, such that:
\begin{enumerate}
    \item for each~$v \in V(G)$ the nodes~$\{t \mid v \in \chi(t)\}$ form a {non-empty} connected subtree of~$\ttw$, \label{item:tree:based:connected}
    \item for each edge~$uv \in E(G)$ there is a node~$t \in V(\ttw)$ with~$\{u,v\} \subseteq \chi(t)$. \label{item:tree:based:edge}
\end{enumerate}
{The \emph{width} of $(\ttw, \chi)$ is defined as~$\max_{t \in V(\ttw)} |\chi(t)| - 1$.}
The \emph{treewidth} of a graph $G$ (denoted $\tw(G))$ is the minimal width a tree decomposition of $G$.
\end{definition}

\begin{definition}
A tree decomposition $(\ttw, \chi)$ is called \emph{nice} if $\ttw$ is a rooted tree with a root $r$ where $\chi(r) = \emptyset$, each node has at most two children, and each node is of one of the following types.
\begin{enumerate}[nolistsep]
\item \textbf{Base node:} a leaf $t \ne r$ in $\ttw$ with $\chi(t) = \emptyset$.
\item \textbf{Introduce node:} a node $t$ having one child $t'$ for which $\chi(t) = \chi(t') \cup \{v\}$ for some $v \not\in \chi(t')$.
\item \textbf{Forget node:} a node $t$ having one child $t'$ for which $\chi(t) = \chi(t') \sm v$ for some $v \in \chi(t')$.
\item \textbf{Join node:} a node $t$ having two children $t_1, t_2$ for which $\chi(t) = \chi(t_1) = \chi(t_2)$.
\end{enumerate}
\end{definition}

It is well known that any tree decomposition of $G$ of width $k$ can be transformed in linear time into a nice tree
decomposition of width $k$ and with $\Oh(k\cdot |V(G)|)$ nodes~\cite{Kloks94}. 
When a rooted tree decomposition $(\ttw, \chi)$ of $G$ is clear from the context we denote by $V_t$ the set of vertices occurring in the subtree rooted at $t \in V(\ttw)$ and define $U_t = V_t \sm \chi(t)$.

\begin{definition}[Treedepth]
A treedepth of a graph $G$ (denoted $\td(G))$ is defined recursively as follows.
\begin{equation*}
    \td(G) = \begin{cases}
        0 & \mbox{if $G$ is empty} \\
        1 + \min_{v \in V(G)}(\td(G-v)) & \mbox{if $G$ is non-empty and connected} \\
         \max_{i=1}^d(\td(G_i)) & \mbox{if $G$ is disconnected and $G_1, \dots G_d$ are its components}
    \end{cases}
\end{equation*}
\end{definition}

As a direct consequence of this definition, inserting a vertex into a graph can increase its treedepth by at most one.
It is well known that for every graph $\tw(G) \le \td(G)$.

\subparagraph{Matroids.}
We provide only the basic background related to our applications.
For more information about matroids we refer to the survey~\cite{oxley2006matroid}.

\begin{definition}[Matroid]
A pair $M = (E,\mathcal{I})$ where $E$ is a set and $\mathcal{I} \sub 2^E$ is called a matroid if the following conditions hold.
\begin{itemize}[nolistsep]
    \item If $X \sub Y$ and $Y \in \mathcal{I}$ then also  $X \in \mathcal{I}$.
    \item If $X, Y \in \mathcal{I}$ and $|X| < |Y|$ then there exists $e \in Y \sm X$ such that $X \cup \{e\} \in \mathcal{I}$.
\end{itemize}
We say that a set $X \sub E$ is independent in $M$ when $X \in \mathcal{I}$.
The rank of $M$ is the size of the largest independent set in $M$.
\end{definition}

The simplest example is a $k$-uniform matroid in which a set $X \sub E$ is independent when $|X| \le k$. 
Another important example is a linear matroid.
Let $A$ be a matrix over a field $\mathbb{F}$. 
We define matroid $M = (E,\mathcal{I})$ where $E$ be the set of columns of $A$ and $X \sub E$ is independent in $M$ when the corresponding columns are independent over $\mathbb{F}$.
We say that the matrix $A$ is a representation of $M$ over $\mathbb{F}$.

Given a graph $G$, we define its graphic matroid $M = (E(G), \mathcal{I})$ where $X \sub E(G)$ is independent when $X$ does not contain a cycle.
It is well-known that every graphic matroid is linear and the oriented incidence matrix of $G$ forms a representation of $M$ over any field.

\begin{lemma}[\cite{oxley2006matroid}]
\label{lem:prelim:graphic}
Given a graph $G$ we can find a representation matrix of its graphic matroid over any field in polynomial time.
\end{lemma}

\begin{definition}[Product family]
Given two families of independent sets $\scal_1, \scal_2$ in a matroid $M = (E,\mathcal{I})$ we define
\[
\scal_1 \bullet \scal_2 = \{ X \cup Y \mid X \in \scal_1, Y \in \scal_2, X \cap Y = \emptyset, X \cup Y \in \mathcal I \}.
\]
\end{definition}

\subparagraph{Representative families.}

We say that a family of sets $\scal$ is a $p$-family if every set in $\scal$ has size~$p$.

\begin{definition}[Min/max $q$-representative family]
Let $M = (E,\mathcal{I})$ be a matroid, 
$\scal$ be a family of subsets of $E$, and $w \colon \scal \to \mathbb{N}$ be a non-negative weight function.
A subfamily $\widehat S \subseteq S$ is min
$q$-representative (resp. max $q$-representative) for $S$ if for every set $Y \subseteq E$ of size at most $q$, if there is a set $X \in S$ disjoint from $Y$ with $X \cup Y \in \mathcal{I}$, then there is a set $\widehat X \in \widehat S$ disjoint from $Y$ with (a) $\widehat X \cup Y \in \mathcal{I}$ and (b) $w(\widehat X) \le w(X)$ (resp. $w(\widehat X) \ge w(X)$).
\end{definition}

When all weights are zero, we obtain a simpler notion of a $q$-representative family.
Observe that when $X$ is a $p$-element set in a matroid of rank $k$ and $Y$ satisfies $X \cap Y = \emptyset$ and $X \cup Y \in \mathcal{I}$ then $|Y| \le k - p$.
We make note of this fact.

\begin{observation}\label{obs:prelim:repr:rank}
If $\scal$ is a $p$-family in a matroid of rank $k$ and $\widehat \scal \sub^{k-p}_{\mathrm{maxrep}} \scal$ then $\widehat \scal \sub^{k}_{\mathrm{maxrep}} \scal$.
\end{observation}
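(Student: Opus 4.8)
The statement to prove is Observation~\ref{obs:prelim:repr:rank}: if $\scal$ is a $p$-family in a matroid of rank $k$ and $\widehat\scal \sub^{k-p}_{\mathrm{maxrep}} \scal$, then $\widehat\scal \sub^{k}_{\mathrm{maxrep}} \scal$.

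The plan is to unfold the definition of max $q$-representative family and observe that enlarging $q$ from $k-p$ to $k$ adds no new test sets $Y$ that could possibly witness a failure. First I would fix an arbitrary set $Y \sub E$ with $|Y| \le k$ and an arbitrary $X \in \scal$ such that $X \cap Y = \emptyset$ and $X \cup Y \in \mathcal{I}$; the goal is to produce $\widehat X \in \widehat\scal$ with $\widehat X \cap Y = \emptyset$, $\widehat X \cup Y \in \mathcal{I}$, and $w(\widehat X) \ge w(X)$. The key step is the size bound: since $X$ has exactly $p$ elements (as $\scal$ is a $p$-family), $X \cap Y = \emptyset$, and $X \cup Y$ is independent in a matroid of rank $k$, we get $|Y| = |X \cup Y| - |X| \le k - p$. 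This is precisely the observation recorded in the sentence preceding the statement. Hence $Y$ is in fact a valid test set for the $(k-p)$-representative property of $\widehat\scal$.

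Having established $|Y| \le k-p$, I would simply invoke the hypothesis $\widehat\scal \sub^{k-p}_{\mathrm{maxrep}} \scal$ applied to this $Y$ and this $X$: it guarantees the existence of $\widehat X \in \widehat\scal$ with $\widehat X \cap Y = \emptyset$, $\widehat X \cup Y \in \mathcal{I}$, and $w(\widehat X) \ge w(X)$. Since $Y$ and $X$ were arbitrary (subject to $|Y| \le k$, disjointness, and independence of the union), this is exactly what it means for $\widehat\scal$ to be max $k$-representative for $\scal$, completing the argument.

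There is essentially no obstacle here; the only thing to be careful about is that the proof uses that every member of $\scal$ has size exactly $p$ (not merely at most $p$) — this is what the definition of a $p$-family supplies — and that independence of $X \cup Y$ together with the rank bound forces $|X \cup Y| \le k$. The argument is symmetric for the min variant, but since the statement is phrased only for $\mathrm{maxrep}$ I would keep it in that form. No appeal to any structural matroid property beyond the rank definition is needed.
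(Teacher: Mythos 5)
Your proof is correct and follows exactly the argument the paper itself indicates: the sentence preceding the observation already records that independence of $X \cup Y$ in a rank-$k$ matroid forces $|Y| \le k - p$, so every test set for $k$-representativity is already a test set for $(k-p)$-representativity. Nothing is missing.
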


The following lemmas have been stated in \cite{representative-efficient} for the unweighted version of representative families but with a remark that they work as well for the weighted version (as stated below).

\begin{lemma}[{\cite[Lemma 3.1]{representative-efficient}}]
\label{lem:prelim:repr:transitive}
Let $M = (E,\mathcal{I})$ be a matroid and $\scal$ be a family of subsets of $E$.
If $\tilde\scal \sub^q_{\mathrm{maxrep}} \scal$ and $\widehat \scal \sub^q_{\mathrm{maxrep}} \tilde\scal$ then $\widehat \scal \sub^q_{\mathrm{maxrep}} \scal$.
\end{lemma}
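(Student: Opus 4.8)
The plan is to unfold the definition of min/max $q$-representative families twice and chain the two implications. Let $Y \subseteq E$ be an arbitrary set with $|Y| \le q$, and suppose there is a set $X \in \scal$ with $X \cap Y = \emptyset$ and $X \cup Y \in \mathcal{I}$; I want to produce $\widehat X \in \widehat\scal$ with $\widehat X \cap Y = \emptyset$, $\widehat X \cup Y \in \mathcal{I}$, and $w(\widehat X) \ge w(X)$. First I would apply the hypothesis $\tilde\scal \sub^q_{\mathrm{maxrep}} \scal$ to the pair $(X,Y)$: this yields a set $\tilde X \in \tilde\scal$ with $\tilde X \cap Y = \emptyset$, $\tilde X \cup Y \in \mathcal{I}$, and $w(\tilde X) \ge w(X)$. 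Now $\tilde X \in \tilde\scal$ witnesses, together with the same $Y$ (which still has size at most $q$), the hypothesis of $\widehat\scal \sub^q_{\mathrm{maxrep}} \tilde\scal$; applying it gives $\widehat X \in \widehat\scal$ with $\widehat X \cap Y = \emptyset$, $\widehat X \cup Y \in \mathcal{I}$, and $w(\widehat X) \ge w(\tilde X)$. Combining the two weight inequalities via transitivity of $\ge$ gives $w(\widehat X) \ge w(\tilde X) \ge w(X)$, which is exactly what is required. Since $Y$ was arbitrary, $\widehat\scal \sub^q_{\mathrm{maxrep}} \scal$.

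The only point that needs a word of care is that the second application of the definition requires $\tilde X$ to belong to $\tilde\scal$ and $Y$ to have size at most $q$ — both hold by construction — and that the weight function used throughout is the same $w$ restricted appropriately; since $\widehat\scal \subseteq \tilde\scal \subseteq \scal$, the value $w$ assigns to any set is unambiguous and does not depend on which family we view it in. There is essentially no obstacle here: the statement is a direct syllogism, and the argument for the min-representative variant is identical with every $\ge$ replaced by $\le$. I would present it in two or three sentences exactly along the lines above.
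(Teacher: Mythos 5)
Your argument is correct: the paper does not prove this lemma at all but imports it from Fomin et al.~\cite{representative-efficient}, and your two-step unfolding of the definition (first relating $\scal$ to $\tilde\scal$, then $\tilde\scal$ to $\widehat\scal$ with the same $Y$, chaining the weight inequalities) is exactly the standard proof of that cited Lemma 3.1. Your side remarks — that $\widehat\scal \subseteq \tilde\scal \subseteq \scal$ and that the weight of a set is independent of the family it is viewed in — cover the only points needing care, so nothing is missing.
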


\begin{lemma}[{\cite[Lemma 3.2]{representative-efficient}}]
\label{lem:prelim:repr:union}
Let $M = (E,\mathcal{I})$ be a matroid and $\scal$ be a family of subsets of $E$.
If $\scal = \scal_1 \cup \scal_2 \cup \dots \cup \scal_\ell$ and $\widehat \scal_i \sub^q_{\mathrm{maxrep}} \scal_i$, then $\bigcup_{i=1}^\ell \widehat\scal_i \sub^q _{\mathrm{maxrep}} \scal$.
\end{lemma}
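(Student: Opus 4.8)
The plan is to prove the statement directly from the definition of a max $q$-representative family, without invoking any of the heavier machinery. Fix an arbitrary set $Y \sub E$ with $|Y| \le q$, and suppose there exists $X \in \scal$ with $X \cap Y = \emptyset$ and $X \cup Y \in \mathcal{I}$; our goal is to produce $\widehat X \in \bigcup_{i=1}^\ell \widehat\scal_i$ with $\widehat X \cap Y = \emptyset$, $\widehat X \cup Y \in \mathcal{I}$, and $w(\widehat X) \ge w(X)$.

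The first step is simply to locate $X$: since $\scal = \scal_1 \cup \dots \cup \scal_\ell$, there is some index $i \in [\ell]$ with $X \in \scal_i$. Now $X$ witnesses, for this particular $\scal_i$ and this particular $Y$, exactly the hypothesis in the definition of $\widehat\scal_i \sub^q_{\mathrm{maxrep}} \scal_i$: namely $X \in \scal_i$, $X \cap Y = \emptyset$, $|Y| \le q$, and $X \cup Y \in \mathcal{I}$. Applying that definition to $Y$ yields a set $\widehat X \in \widehat\scal_i$ with $\widehat X \cap Y = \emptyset$, $\widehat X \cup Y \in \mathcal{I}$, and $w(\widehat X) \ge w(X)$. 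The second (and final) step is to observe that $\widehat X \in \widehat\scal_i \sub \bigcup_{j=1}^\ell \widehat\scal_j$, so $\widehat X$ is exactly the witness required to show $\bigcup_{j=1}^\ell \widehat\scal_j \sub^q_{\mathrm{maxrep}} \scal$. Since $Y$ was arbitrary, this completes the argument.

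There is essentially no obstacle here: the lemma is a routine bookkeeping fact, and the only thing to be careful about is that the weight function on each $\scal_i$ is the restriction of $w$ to $\scal_i$, so the inequality $w(\widehat X) \ge w(X)$ transfers verbatim (the weights depend only on the sets, not on which subfamily they are viewed in). The analogous statement for min $q$-representative families follows by the same argument with the inequality reversed.
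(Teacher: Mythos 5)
Your argument is correct: it is the routine unfolding of the definition of a max $q$-representative family, locating the index $i$ with $X \in \scal_i$ and passing the witness $\widehat X \in \widehat\scal_i$ (together with the trivial observation $\bigcup_i \widehat\scal_i \sub \scal$) back to the union. The paper itself does not prove this lemma but cites it from Fomin et al.\ \cite{representative-efficient}, and your proof coincides with the standard argument given there, so there is nothing further to compare.
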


\begin{lemma}[{\cite[Lemma 3.3]{representative-efficient}}]
\label{lem:prelim:repr:transitive-product}
Let $M = (E,\mathcal{I})$ be a matroid or rank $k$ and $\scal_1$ be a $p_1$-family of independent sets, $\scal_2$ be a $p_2$-family of independent sets, $\widehat \scal_1 \sub^{k-p_1}_{\mathrm{maxrep}} \scal_1$, $\widehat \scal_2 \sub^{k-p_2}_{\mathrm{maxrep}} \scal_2$.
Then $\widehat \scal_1 \bullet \widehat \scal_2 \sub^{k-p_1-p_2}_{\mathrm{maxrep}} \scal_1 \bullet \scal_2$.
\end{lemma}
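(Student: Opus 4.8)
The plan is to take a witnessing member $Z\in\scal_1\bullet\scal_2$ together with a forbidden set $Y$, decompose $Z=X_1\cup X_2$ with $X_1\in\scal_1$ and $X_2\in\scal_2$ disjoint, and then apply the two representativeness hypotheses one after another: first use $\widehat\scal_1\sub^{k-p_1}_{\mathrm{maxrep}}\scal_1$ to replace $X_1$ by some $\widehat X_1\in\widehat\scal_1$ while treating $X_2\cup Y$ as the forbidden part, and then use $\widehat\scal_2\sub^{k-p_2}_{\mathrm{maxrep}}\scal_2$ to replace $X_2$ by some $\widehat X_2\in\widehat\scal_2$ while treating $\widehat X_1\cup Y$ as the forbidden part; the set $\widehat X_1\cup\widehat X_2$ will be the desired member of $\widehat\scal_1\bullet\widehat\scal_2$. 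Throughout, $\scal_1\bullet\scal_2$ is viewed as a $(p_1+p_2)$-family of independent sets, and for $Z\in\scal_1\bullet\scal_2$ I take $w(Z)$ to be the maximum of $w(X_1)+w(X_2)$ over decompositions $Z=X_1\cup X_2$ into disjoint $X_1\in\scal_1$, $X_2\in\scal_2$.

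Concretely, I would first fix $Y\sub E$ with $|Y|\le k-p_1-p_2$ and $Z\in\scal_1\bullet\scal_2$ with $Z\cap Y=\emptyset$ and $Z\cup Y\in\mathcal I$, and choose a decomposition $Z=X_1\cup X_2$ attaining $w(Z)=w(X_1)+w(X_2)$. For the first swap set $Y_1=X_2\cup Y$; since $X_2\sub Z$ is disjoint from $Y$, we have $X_1\cap Y_1=\emptyset$, $X_1\cup Y_1=Z\cup Y\in\mathcal I$, and $|Y_1|=p_2+|Y|\le k-p_1$, so $\widehat\scal_1\sub^{k-p_1}_{\mathrm{maxrep}}\scal_1$ produces $\widehat X_1\in\widehat\scal_1$ disjoint from $Y_1$ with $\widehat X_1\cup Y_1\in\mathcal I$ and $w(\widehat X_1)\ge w(X_1)$; in particular $\widehat X_1\cup X_2\cup Y\in\mathcal I$ and $\widehat X_1$ is disjoint from both $X_2$ and $Y$. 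For the second swap set $Y_2=\widehat X_1\cup Y$; then $X_2\cap Y_2=\emptyset$, $X_2\cup Y_2=\widehat X_1\cup X_2\cup Y\in\mathcal I$, and $|Y_2|=|\widehat X_1|+|Y|=p_1+|Y|\le k-p_2$, so $\widehat\scal_2\sub^{k-p_2}_{\mathrm{maxrep}}\scal_2$ produces $\widehat X_2\in\widehat\scal_2$ disjoint from $Y_2$ with $\widehat X_2\cup Y_2\in\mathcal I$ and $w(\widehat X_2)\ge w(X_2)$.

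Finally I would set $\widehat Z=\widehat X_1\cup\widehat X_2$. As $\widehat X_1\cap\widehat X_2=\emptyset$ and $\widehat X_1\cup\widehat X_2\sub\widehat X_1\cup\widehat X_2\cup Y\in\mathcal I$, the set $\widehat Z$ is independent, hence $\widehat Z\in\widehat\scal_1\bullet\widehat\scal_2\sub\scal_1\bullet\scal_2$; moreover $\widehat Z\cap Y=\emptyset$, $\widehat Z\cup Y\in\mathcal I$, and $w(\widehat Z)\ge w(\widehat X_1)+w(\widehat X_2)\ge w(X_1)+w(X_2)=w(Z)$, which is exactly what $\widehat\scal_1\bullet\widehat\scal_2\sub^{k-p_1-p_2}_{\mathrm{maxrep}}\scal_1\bullet\scal_2$ requires. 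The argument is essentially bookkeeping, so I do not expect a genuine obstacle; the only points that need care are the \emph{order} of the two swaps and the cardinality accounting $|Y_1|\le k-p_1$ and $|Y_2|\le k-p_2$ — this is precisely where the hypothesis $|Y|\le k-p_1-p_2$ is used — together with the observation that after the first swap the newly obtained $\widehat X_1$ (of size $p_1$, like $X_1$) may be absorbed into the forbidden part of the second application without leaving the allowed size range.
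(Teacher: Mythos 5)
Your argument is correct and complete: the two sequential exchanges, with the cardinality accounting $|Y_1|\le p_2+|Y|\le k-p_1$ and $|Y_2|\le p_1+|Y|\le k-p_2$, constitute exactly the standard proof of this lemma. The paper itself does not reprove it --- it imports it from Fomin et al.\ (Lemma~3.3 of the cited work, with the weighted variant noted to hold as well) --- and your weight convention for the product family (the maximum of $w(X_1)+w(X_2)$ over disjoint decompositions) matches how the paper actually uses the lemma, e.g.\ in the join-node computation.
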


The following theorem is the key to employ representative families in the design of single-exponential algorithms.
We state it only in the maximization variant.

\begin{theorem}[{\cite[Theorem 3]{representative-efficient}}]
\label{thm:prelim:repr:efficient}
Let $M = (E,\mathcal{I})$ be a linear matroid of rank $p+q=k$ given together with its representation matrix $A_M$ over a field $\mathbb{F}$.  Let $\scal$ be a $p$-family of independent sets in $M$.
Then a max $q$-representative family $\widehat S \subseteq S$ for $S$ with at most $k \choose p$ elements can be found in
$\Oh\left(|\scal|\cdot {k \choose p} \cdot p^\omega + |\scal|\cdot {k \choose p}^{\omega-1}\right)$
operations over $\mathbb{F}$.
\end{theorem}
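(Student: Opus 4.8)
The plan is to pass to the exterior algebra of $\mathbb{F}^k$ and thereby reduce the combinatorial condition defining a representative family to a statement about linear spans. Assume without loss of generality that $A_M$ has exactly $k$ rows (an arbitrary representation can be row-reduced to this form), and for $e\in E$ write $A_M[e]\in\mathbb{F}^k$ for the column indexed by $e$. To a $p$-element independent set $X=\{e_1,\dots,e_p\}$ I would associate the vector $\vec{v}_X = A_M[e_1]\wedge\cdots\wedge A_M[e_p]$ in the exterior power $\bigwedge^p\mathbb{F}^k$, whose coordinates, indexed by $p$-subsets of $[k]$, are the $p\times p$ minors of the $k\times p$ matrix with columns $A_M[e_1],\dots,A_M[e_p]$; recall $\dim\bigwedge^p\mathbb{F}^k={k\choose p}$. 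The one fact I borrow from exterior algebra is the Grassmann--Pl\"ucker / Cauchy--Binet identity: for disjoint $X,Y\sub E$ with $|X|=p$, the set $X\cup Y$ is independent in $M$ if and only if $\vec{v}_X\wedge\vec{v}_Y\neq 0$; and if $X\cap Y\neq\emptyset$ this wedge is $0$ because of a repeated factor. Equivalently, writing $\phi_Y$ for the linear map $z\mapsto z\wedge\vec{v}_Y$ on $\bigwedge^p\mathbb{F}^k$, we have $X\cap Y=\emptyset$ and $X\cup Y\in\mathcal{I}$ precisely when $\phi_Y(\vec{v}_X)\neq 0$.

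The heart of the argument is then the observation: if $\widehat{\scal}\sub\scal$ has the property that $\{\vec{v}_X : X\in\widehat{\scal}\}$ spans the subspace $U=\mathrm{span}\{\vec{v}_X : X\in\scal\}$, then $\widehat{\scal}$ is $q$-representative for $\scal$. Indeed, given $X\in\scal$ and $|Y|\le q$ with $X\cap Y=\emptyset$ and $X\cup Y\in\mathcal{I}$, write $\vec{v}_X=\sum_i c_i\vec{v}_{X_i}$ with $X_i\in\widehat{\scal}$; since $\phi_Y(\vec{v}_X)\neq 0$ and $\phi_Y$ is linear, $\phi_Y(\vec{v}_{X_i})\neq 0$ for some $i$, so $X_i\cap Y=\emptyset$ and $X_i\cup Y\in\mathcal{I}$. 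To respect the weights in the max-variant I would construct $\widehat{\scal}$ by scanning $\scal$ in non-increasing order of $w$ and adding $X$ exactly when $\vec{v}_X$ is not yet in the span of the vectors already chosen. This maintains the invariant that every $\vec{v}_X$ ($X\in\scal$) lies in the span of $\{\vec{v}_{X'} : X'\in\widehat{\scal},\ w(X')\ge w(X)\}$, so in the displayed combination the $X_i$ may be taken with $w(X_i)\ge w(X)$, which is condition (b); and since the chosen vectors are linearly independent in $\bigwedge^p\mathbb{F}^k$, we get $|\widehat{\scal}|\le\dim U\le{k\choose p}$.

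For the running time I would split the work into two phases. First, for each $X\in\scal$ compute the ${k\choose p}$-dimensional vector $\vec{v}_X$; computing its coordinates amounts to evaluating the ${k\choose p}$ maximal minors of a $k\times p$ matrix, which costs $\Oh({k\choose p}\cdot p^\omega)$ field operations, for a total of $\Oh(|\scal|\cdot{k\choose p}\cdot p^\omega)$, and stack the results as the rows of a $|\scal|\times{k\choose p}$ matrix $V$. Second, run the weight-ordered greedy row-basis selection on $V$: this is a Gaussian-elimination-style computation, and using fast matrix multiplication it can be organized in blocks (reducing each batch of up to ${k\choose p}$ rows against the current basis via a single fast product, and extracting a greedy independent subset within the batch) so that, since at most ${k\choose p}$ rows are ever retained, it runs in $\Oh(|\scal|\cdot{k\choose p}^{\omega-1})$ operations. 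Adding the two phases yields the claimed bound.

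I expect the main obstacle to be not any single step but packaging the linear algebra at the stated efficiency. Conceptually everything rests on the exterior-algebra reformulation --- the Grassmann--Pl\"ucker identity turns ``$X\cup Y$ independent'' into ``a fixed linear functional does not vanish on $\vec{v}_X$'', hence turns ``representative'' into ``spanning'', which makes correctness of the weighted greedy essentially immediate. The delicate part is realizing the minor computations and, especially, the weight-ordered elimination within $\Oh(|\scal|\cdot{k\choose p}\cdot p^\omega + |\scal|\cdot{k\choose p}^{\omega-1})$ field operations rather than the naive $\Oh(|\scal|\cdot{k\choose p}^2)$, which requires the standard trick of performing the elimination in blocks of size ${k\choose p}$ via fast matrix multiplication while discarding redundant rows on the fly.
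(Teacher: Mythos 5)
This theorem is not proved in the paper---it is quoted verbatim from Fomin, Lokshtanov, Panolan, and Saurabh \cite{representative-efficient}---and your argument is essentially their proof: encode each $X\in\scal$ by the wedge (Pl\"ucker) vector of its columns, observe that disjointness plus independence of $X\cup Y$ is equivalent to non-vanishing of the linear map $z\mapsto z\wedge\vec{v}_Y$, conclude that a spanning subset of the vectors is representative, take a greedy basis in non-increasing weight order for the max-variant, and bound the two phases by $\Oh(|\scal|\cdot\binom{k}{p}\cdot p^{\omega})$ for the minors and $\Oh(|\scal|\cdot\binom{k}{p}^{\omega-1})$ for the blocked fast Gaussian elimination. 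The reasoning is correct and matches the cited source's approach, so there is nothing to fix.
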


We present a more concise corollary suited for our applications.

\begin{lemma}
\label{lem:prelim:repr:efficient-final}
Let $M = (E,\mathcal{I})$ be a graphic matroid of rank $k$.  Let $\mathcal{S}$ 
be a family of subsets of $E$. 
Then 
$\widehat \scal \sub^k_{\mathrm{maxrep}} \scal$ with at most $2^k$ elements can be found in time
$\Oh\left(|\scal| \cdot 2^{(\omega-1) k} \cdot k^\omega\right)$.
\end{lemma}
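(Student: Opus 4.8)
The plan is to derive \cref{lem:prelim:repr:efficient-final} from \cref{thm:prelim:repr:efficient} essentially by handling two technicalities: the family $\scal$ in the hypothesis is not promised to be a $p$-family (its sets may have varying sizes, and may not even be independent), and \cref{thm:prelim:repr:efficient} is phrased for a fixed split $k = p+q$ with a linear representation supplied in the input. First I would invoke \cref{lem:prelim:graphic} to compute in polynomial time a representation matrix $A_M$ of the graphic matroid $M$ over, say, a field of size $\Oh(k)$ (or $\mathbb{Q}$ with bounded bit-length), so that the matroid operations in \cref{thm:prelim:repr:efficient} run in time polynomial in $k$; this accounts for the $k^\omega$ factor rather than an $\omega$-th power of a matrix dimension. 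Then I would discard from $\scal$ every set that is dependent in $M$ or has more than $k$ elements (none of these can ever appear in a max $q$-representative subfamily, since such a set admits no extension $Y$ with $X \cup Y \in \mathcal{I}$, so removing them preserves the representative property); this costs $\Oh(|\scal|\cdot \poly(k))$.

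Next I would partition the remaining family by cardinality: write $\scal = \scal_0 \cup \scal_1 \cup \dots \cup \scal_k$ where $\scal_p$ consists of the sets of size exactly $p$, each of which is now an independent $p$-family. For each $p$ I would apply \cref{thm:prelim:repr:efficient} with $q = k - p$ to obtain $\widehat\scal_p \sub^{k-p}_{\mathrm{maxrep}} \scal_p$ with $|\widehat\scal_p| \le \binom{k}{p}$, in time $\Oh\!\left(|\scal_p|\cdot \binom{k}{p}\cdot p^\omega + |\scal_p|\cdot \binom{k}{p}^{\omega-1}\right)$ arithmetic operations, each of which costs $\poly(k)$ by the bounded-size representation. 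By \cref{obs:prelim:repr:rank}, $\widehat\scal_p \sub^{k}_{\mathrm{maxrep}} \scal_p$ as well. I would then set $\widehat\scal = \bigcup_{p=0}^{k}\widehat\scal_p$; by \cref{lem:prelim:repr:union} (with $q = k$) this is a max $k$-representative family for $\scal$, i.e.\ $\widehat\scal \sub^{k}_{\mathrm{maxrep}} \scal$.

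It remains to bound the size and the running time. For the size, $|\widehat\scal| \le \sum_{p=0}^{k}\binom{k}{p} = 2^k$, as claimed. For the running time, I would bound each $\binom{k}{p} \le 2^k$ but more carefully use $\binom{k}{p}^{\omega-1} \le 2^{(\omega-1)p}\cdot\binom{k}{p}^{\,2-\omega}\cdots$ — actually the clean route is $\binom{k}{p} \cdot p^\omega \le 2^{(\omega-1)k}\cdot k^\omega$ is too lossy, so instead I would note $\binom{k}{p}^{\omega-1} = \binom{k}{p}\cdot\binom{k}{p}^{\omega-2}$ and, since $\omega < 3$ gives $\omega - 2 < 1$, bound $\binom{k}{p}^{\omega-2} \le 2^{(\omega-2)k}$; hence the cost for level $p$ is $\Oh\!\left(|\scal_p|\cdot 2^{(\omega-1)k}\cdot k^\omega\right)$ after folding in the $\poly(k)$ cost per field operation. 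Summing over $p$ and using $\sum_p |\scal_p| \le |\scal|$ yields the total bound $\Oh\!\left(|\scal|\cdot 2^{(\omega-1)k}\cdot k^\omega\right)$.

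The main obstacle is the running-time bookkeeping: one must make sure the $\binom{k}{p}$ and $\binom{k}{p}^{\omega-1}$ terms from \cref{thm:prelim:repr:efficient}, summed over all levels $p$, collapse to a single $2^{(\omega-1)k}$ factor without an extra polynomial blow-up, and that the field is chosen small enough that each matroid/arithmetic operation costs only $\poly(k)$ (so it can be absorbed into the $k^\omega$). A minor subtlety to state carefully is that \cref{obs:prelim:repr:rank} and \cref{lem:prelim:repr:union} are being applied with the weight function inherited from $\scal$ restricted to each $\scal_p$, which is immediate but should be mentioned. Everything else is routine once the family is split by cardinality.
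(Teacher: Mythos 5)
Your proposal is correct and follows essentially the same route as the paper: split $\scal$ by cardinality into independent $p$-families, apply \cref{thm:prelim:repr:efficient} with $q=k-p$, combine via \cref{obs:prelim:repr:rank} and \cref{lem:prelim:repr:union}, and use the trivial bound ${k \choose p} \le 2^k$, which (since $\omega \ge 2$) already yields the claimed running time without your extra binomial bookkeeping. The only cosmetic difference is the choice of field: the paper represents the graphic matroid over $\mathbb{F}_2$ via \cref{lem:prelim:graphic}, so each field operation is constant-time and there is no per-operation $\poly(k)$ factor to absorb.
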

\begin{proof}
Thanks to \cref{lem:prelim:graphic} we can efficiently represent $M$ over $\mathbb{F}_2$.
For $p \in [k]$ let $\scal^p \sub \scal$ be the family of independent sets in $\scal$ of size $p$.
We apply \cref{thm:prelim:repr:efficient} to compute a max $(k-p)$-representative family $\widehat \scal^p \sub^{k-p}_{\mathrm{maxrep}} \scal^p$ of size at most $k \choose p$ for each $\scal_p$.
By \cref{obs:prelim:repr:rank} we can write $\widehat \scal^p \sub^{k}_{\mathrm{maxrep}} \scal^p$.
From \cref{lem:prelim:repr:union} we know that $\widehat \scal = \bigcup_{p = 1}^k \widehat \scal^p$ (plus $\emptyset$ if $\emptyset \in \scal$) is max $k$-representative for $\scal$.
The sizes of $\widehat \scal^p$ sum up to $2^k-1$ and the total running time can be upper bounded as in the statement with a trivial bound ${k \choose p} \le 2^k$.
\end{proof}

If the family $\scal$ has the special form of a product family, then instead of applying \cref{thm:prelim:repr:efficient} directly, one can obtain a slightly better running time.
Such families are of special importance for treewidth-based algorithm since they appear naturally in the computations for join nodes.
In the following theorem, the input consists of $\scal_1, \scal_2, \scal_1 \bullet \scal_2$, and the weight function $w \colon \scal_1 \bullet \scal_2 \to \mathbb{N}$, so it may have size $\Oh(4^k)$.

\begin{theorem}[{\cite[Corrolary 2]{representative-product}}]
\label{thm:prelim:repr:product}
Let $M = (E,\mathcal{I})$ be a linear matroid of rank $k$ given together with its representation matrix $A_M$ over a field $\mathbb{F}$.  
Let $\scal_1, \scal_2$ be two families of independent sets of $M$ and the number of sets of size $p$ in $\scal_1$ and $\scal_2$ be at most ${k+c} \choose p$.
Here, $c$ is a fixed constant.
Let $\scal_{r}^p$ be the subfamily of $\scal_r$ of sets of size $p$, for $r \in \{1,2\}$, $p \in [k]$.
Then for all pairs $p,q \in [k]$ we can find $\widehat \scal^{p,q} \sub^{k-p-q}_{\mathrm{maxrep}} \scal_1^p \bullet \scal_2^q$ of size $k \choose {p+q}$ in total
$\Oh\br{2^{(\omega - 1)k}3^k \cdot k^\omega}$
operations over $\mathbb{F}$.
\end{theorem}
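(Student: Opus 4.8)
The plan is to reduce the problem to two successive applications of the single-family machinery plus a matroid–tensor computation that never writes down the full product family. First, for every $p\in[k]$ apply \cref{thm:prelim:repr:efficient} to the $p$-family $\scal_1^p$ (of size at most $\binom{k+c}{p}$) over the field $\mathbb{F}$ using $A_M$, obtaining $\widehat\scal_1^p\sub^{k-p}_{\mathrm{maxrep}}\scal_1^p$ with at most $\binom{k}{p}$ sets, and symmetrically $\widehat\scal_2^q\sub^{k-q}_{\mathrm{maxrep}}\scal_2^q$ for every $q\in[k]$. By \cref{lem:prelim:repr:transitive-product} we then have $\widehat\scal_1^p\bullet\widehat\scal_2^q\sub^{k-p-q}_{\mathrm{maxrep}}\scal_1^p\bullet\scal_2^q$, so by \cref{lem:prelim:repr:transitive} it suffices, for each pair $(p,q)$, to compute a family $\widehat\scal^{p,q}\sub^{k-p-q}_{\mathrm{maxrep}}\big(\widehat\scal_1^p\bullet\widehat\scal_2^q\big)$ of size at most $\binom{k}{p+q}$ and then invoke \cref{obs:prelim:repr:rank}. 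The catch is that $\widehat\scal_1^p\bullet\widehat\scal_2^q$ may still have $\binom{k}{p}\binom{k}{q}$ members, so feeding it directly to \cref{thm:prelim:repr:efficient} would cost about $\binom{k}{p}\binom{k}{q}\cdot\binom{k}{p+q}^{\omega-1}$ per pair, which overshoots the target when summed over $p,q$; the point is to exploit the product structure so that each pair is handled without enumerating all $\binom{k}{p}\binom{k}{q}$ products.

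For the product step I would translate everything into exterior algebra, as in the approach underlying \cref{thm:prelim:repr:efficient}. For a $d$-element independent set $X=\{e_1,\dots,e_d\}$ let $\xi_X\in\bigwedge^{d}\mathbb{F}^{k}$ be the wedge of the corresponding columns of $A_M$; it has $\binom{k}{d}$ coordinates, and the basic fact is that $X\cap Y=\emptyset$ and $X\cup Y\in\mathcal I$ hold iff $\xi_X\wedge\xi_Y\neq0$ in $\bigwedge^{k}\mathbb{F}^{k}\cong\mathbb{F}$, so a subfamily of a $d$-family is max $(k-d)$-representative exactly when its $\xi$-vectors contain a maximum-weight basis of the span of all the family's $\xi$-vectors (which is what \cref{thm:prelim:repr:efficient} extracts, by weighted Gaussian elimination processing sets in non-increasing order of weight). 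Now use $\xi_{X\cup Z}=\pm\,\xi_X\wedge\xi_Z$: the map $w\mapsto\xi_X\wedge w$ is linear from $\bigwedge^{q}\mathbb{F}^{k}$ to $\bigwedge^{p+q}\mathbb{F}^{k}$, has rank $\binom{k-p}{q}$ (its image is $\xi_X\wedge\bigwedge^{q}\mathbb{F}^{k}\cong\bigwedge^{q}(\mathbb{F}^{k}/P_X)$, where $P_X$ is the $p$-dimensional column span of $X$), and a sparse representing matrix $L_X$ is computable from $\xi_X$ in $k^{\Oh(1)}$ time. I would iterate over $X\in\widehat\scal_1^p$ while maintaining a running subfamily $\mathcal R\subseteq\widehat\scal_1^p\bullet\widehat\scal_2^q$ of size at most $\binom{k}{p+q}$ whose $\xi$-vectors are kept in reduced form with their weights: for each new $X$, compute (via a rectangular matrix multiplication using $L_X$ and the reduced matrix of the $\xi_Z$'s) a spanning set of $\xi_X\wedge\mathrm{span}\{\xi_Z : Z\in\widehat\scal_2^q\}$ of size at most $\binom{k-p}{q}$, merge these candidates into $\mathcal R$ by one weighted Gaussian elimination, and re-truncate $\mathcal R$ to a maximum-weight basis. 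Taking $\widehat\scal^{p,q}:=\mathcal R$, correctness follows from \cref{lem:prelim:repr:union} and \cref{lem:prelim:repr:transitive} together with the basis-exchange argument behind \cref{thm:prelim:repr:efficient}, on top of the first-stage reductions.

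For the running time, the dominant cost per pair $(p,q)$ is the $\binom{k}{p}$ iterations, each one matrix product and one merge, both carried out by fast matrix multiplication on matrices whose dimensions are bounded by $\binom{k-p}{q}$, $\binom{k}{q}$, and $\binom{k}{p+q}$; up to $k^{\Oh(1)}$ factors this amounts to roughly $\binom{k}{p}\binom{k-p}{q}$ wedge-vector reductions against a $\le\binom{k}{p+q}$-dimensional echelon basis, each reduction costing at most $\binom{k}{p+q}^{\omega-1}\le 2^{(\omega-1)k}$, plus strictly lower-order overhead. Summing over all $p,q\in[k]$ and using $\sum_{p,q}\binom{k}{p}\binom{k-p}{q}=\sum_{p}\binom{k}{p}2^{\,k-p}=3^{k}$ yields the claimed $\Oh\!\big(2^{(\omega-1)k}3^{k}\cdot k^{\omega}\big)$, a factor $(4/3)^{k}$ better than the $2^{(\omega+1)k}$ bound one would get by applying \cref{thm:prelim:repr:efficient} to the full product family of size up to $4^{k}$; the first-stage reductions contribute only $\sum_p|\scal_1^p|\binom{k}{p}^{\omega-1}+\sum_q|\scal_2^q|\binom{k}{q}^{\omega-1}=\Oh(2^{\omega k}k^{\Oh(1)})$, which is absorbed. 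I expect the main obstacle to be exactly this bookkeeping: arranging the incremental merges and the shapes of the matrix multiplications so that the contribution of each $X\in\widehat\scal_1^p$ scales with $\binom{k-p}{q}$ (hence the benign sum $3^{k}$) rather than with $\binom{k}{q}$ or $\binom{k}{p}\binom{k}{q}$, while simultaneously carrying the maximum-weight matroid-basis invariant through every elimination step so that \cref{lem:prelim:repr:transitive}, \cref{lem:prelim:repr:union}, and \cref{obs:prelim:repr:rank} apply verbatim.
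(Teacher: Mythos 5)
You should first be aware that the paper contains no proof of \cref{thm:prelim:repr:product} at all: the statement is imported as a black box from \cite[Corollary~2]{representative-product}, and the only in-paper content is the remark following it that $2^\omega+2\le 2^{\omega-1}\cdot 3$ for every $\omega\ge 2$, which collapses the original two-term bound $\Oh\big((2^\omega+2)^k k^\omega + 2^{(\omega-1)k}3^k k^\omega\big)$ into the displayed one. So your attempt has to be judged against the known proof of the cited result. Your outer skeleton is the right one and matches it: reduce each $\scal_1^p,\scal_2^q$ via \cref{thm:prelim:repr:efficient}, pass to the product of the reduced families via \cref{lem:prelim:repr:transitive-product} and \cref{lem:prelim:repr:transitive}, and arrange the final eliminations so that only about ${k \choose p}{{k-p} \choose q}$ candidate products per pair are reduced against a ${k \choose {p+q}}$-dimensional space, which sums to $2^{(\omega-1)k}3^k$ by $\sum_{p,q}{k\choose p}{{k-p}\choose q}=3^k$.

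The genuine gap is in your middle step. The family $\widehat\scal^{p,q}$ must be a \emph{subfamily} of $\scal_1^p\bullet\scal_2^q$, i.e., actual sets $X\cup Z$ carrying the (additive) weight $w(X)+w(Z)$; but ``a spanning set of $\xi_X\wedge\mathrm{span}\{\xi_Z\}$ of size at most ${{k-p}\choose q}$'' consists of linear combinations of the vectors $\xi_X\wedge\xi_Z$, which correspond to no member of the family and carry no weight, so they can neither be inserted into $\mathcal R$ nor used in the weighted exchange argument, which requires every discarded $\xi_X\wedge\xi_Z$ to lie in the span of \emph{kept products} of weight at least $w(X)+w(Z)$. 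The clean repair is to select, for every $X\in\widehat\scal_1^p$, actual sets: compute a max $(k-p-q)$-representative family of $\widehat\scal_2^q$ in the \emph{contracted} matroid $M/X$ (rank $k-p$, hence size at most ${{k-p}\choose q}$), take its products with $X$, union over all $X$, and truncate once more with \cref{thm:prelim:repr:efficient}. Note that this per-$X$ selection necessarily scans all of $\widehat\scal_2^q$, i.e., it scales with ${k\choose q}$ --- exactly what you insisted on avoiding --- but it is affordable: over all $X,p,q$ it costs $\Oh\big((2^\omega+2)^k k^{\Oh(1)}\big)$, which is precisely the first term of the original bound in \cite[Corollary~2]{representative-product} and is absorbed by $2^{(\omega-1)k}3^k$ thanks to the paper's remark. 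Relatedly, your claim that $L_X$ is computable in $k^{\Oh(1)}$ time cannot stand (the matrix alone has ${k\choose{p+q}}\cdot{k\choose q}$ entries); this loose accounting is a symptom of the same unresolved step, and as written the proposal does not yield the claimed theorem without the fix above.
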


We remark that in the original statement in \cite{representative-product} the number of operations is upper bounded by $\Oh\br{(2^\omega + 2)^k\cdot k^\omega + 2^{(\omega - 1)k}3^k \cdot k^\omega}$ but for every value of $\omega \ge 2$ it holds that $2^\omega + 2 \le 2^{\omega-1}\cdot 3$.

\begin{lemma}
\label{lem:prelim:repr:product-final}
Let $M = (E,\mathcal{I})$ be a graphic matroid of rank $k$.
Let $\scal_1, \scal_2$ be two families of independent sets of $M$, each of size at most $2^{k}$.
Then we can find $\widehat \scal \sub^k_{\mathrm{maxrep}} \scal_1 \bullet \scal_2$ of size at most $2^k$ in time
$\Oh\br{2^{(\omega - 1)k}3^k \cdot k^\omega}$.
\end{lemma}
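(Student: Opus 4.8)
The plan is to reduce Lemma~\ref{lem:prelim:repr:product-final} to Theorem~\ref{thm:prelim:repr:product} in essentially the same way that Lemma~\ref{lem:prelim:repr:efficient-final} reduces to Theorem~\ref{thm:prelim:repr:efficient}. First I would invoke \cref{lem:prelim:graphic} to obtain a representation matrix $A_M$ of the graphic matroid $M$ over $\mathbb{F}_2$ in polynomial time, so that the hypotheses of \cref{thm:prelim:repr:product} on having a linear representation are met. Then, for $r \in \{1,2\}$ and $p \in [k]$, let $\scal_r^p$ be the subfamily of $\scal_r$ consisting of the independent sets of size exactly $p$; since $|\scal_r| \le 2^k$, we trivially have $|\scal_r^p| \le 2^k \le \binom{k+c}{p}$ for a suitable constant $c$ (indeed $\sum_p \binom{k+c}{p} \ge 2^{k+c} \ge 2^k$ already forces this once $c$ is large enough — or one simply observes $2^k = \binom{2k}{k}/\text{something}$; the cleanest route is to pad each $\scal_r$ with dummy copies so that the size-$p$ subfamily has at most $\binom{k+c}{p}$ sets, which is harmless for the conclusion). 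Sets in $\scal_r$ that are dependent in $M$ are discarded, as they can never contribute to $\scal_1 \bullet \scal_2$.

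Next I would apply \cref{thm:prelim:repr:product} to the pair $(\scal_1, \scal_2)$: for every pair $p, q \in [k]$ it produces a family $\widehat\scal^{p,q} \sub^{k-p-q}_{\mathrm{maxrep}} \scal_1^p \bullet \scal_2^q$ of size at most $\binom{k}{p+q}$, and the whole computation takes $\Oh\br{2^{(\omega-1)k} 3^k \cdot k^\omega}$ operations over $\mathbb{F}_2$, which matches the claimed time bound. By \cref{obs:prelim:repr:rank}, each $\widehat\scal^{p,q} \sub^{k}_{\mathrm{maxrep}} \scal_1^p \bullet \scal_2^q$. Now set $\widehat\scal = \bigcup_{p,q \in [k]} \widehat\scal^{p,q}$, together with $\emptyset$ if $\emptyset \in \scal_1 \bullet \scal_2$. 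Since $\scal_1 \bullet \scal_2 = \bigcup_{p,q}(\scal_1^p \bullet \scal_2^q)$ (every element of the product family has some size, and decomposes according to the sizes of its two parts — here one uses that the parts are disjoint so sizes add), \cref{lem:prelim:repr:union} gives $\widehat\scal \sub^k_{\mathrm{maxrep}} \scal_1 \bullet \scal_2$. For the size bound, $\sum_{p,q \in [k]} \binom{k}{p+q} \le \sum_{s=0}^{k} (s+1)\binom{k}{s}$, which is $\Oh(k \cdot 2^k)$ rather than $2^k$; to get the sharper bound $2^k$ claimed in the statement, I would instead group by the sum $s = p+q$ and note that it suffices to keep, for each $s$, a single max $k$-representative subfamily of $\bigcup_{p+q=s}(\scal_1^p \bullet \scal_2^q)$ of size $\binom{k}{s}$ — obtained by one more application of \cref{thm:prelim:repr:efficient} (or \cref{lem:prelim:repr:transitive} after merging) to collapse the $\Oh(k)$ families of size-$s$ sets into one of size $\binom{k}{s}$, which costs only $\Oh(k^{O(1)} 2^{(\omega-1)k})$ extra and is absorbed. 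Then $\sum_{s=0}^k \binom{k}{s} = 2^k$.

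The main obstacle — really the only non-routine point — is reconciling the size bound: \cref{thm:prelim:repr:product} naturally outputs $\Oh(k^2)$ families whose sizes sum to more than $2^k$, so a second representative-family pass (justified by transitivity, \cref{lem:prelim:repr:transitive}, and the union lemma, \cref{lem:prelim:repr:union}) is needed to compress back down to $2^k$ total, exactly as in the proof of \cref{lem:prelim:repr:efficient-final}. One must also check that this extra pass does not blow up the running time, which is immediate since it operates on an input of size $\Oh(k \cdot 2^k)$ and produces output of size $\Oh(2^k)$, well within $\Oh\br{2^{(\omega-1)k}3^k \cdot k^\omega}$. Everything else is bookkeeping: choosing the padding constant $c$, discarding dependent sets, and assembling the union.
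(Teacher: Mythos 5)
There is a genuine gap at exactly the point you flag as the only non-routine one, but on the other side of it: the size hypothesis of \cref{thm:prelim:repr:product}, not the output size. That theorem requires, for a \emph{fixed} constant $c$, that the number of sets of size $p$ in each input family is at most ${k+c \choose p}$, and this per-size-class bound is what underlies its running-time guarantee. Your justification $|\scal_r^p| \le 2^k \le {k+c \choose p}$ is false for most $p$: for example ${k+c \choose 1} = k+c$, while nothing in the hypotheses of \cref{lem:prelim:repr:product-final} prevents all (up to $2^k$) sets of $\scal_r$ from having the same small cardinality, so $|\scal_r^p|$ can vastly exceed ${k+c \choose p}$. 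The observation that $\sum_p {k+c \choose p} \ge 2^k$ says nothing about individual size classes, and ``padding with dummy copies'' only adds sets, so it cannot bring an oversized size class below the threshold. Consequently you cannot apply \cref{thm:prelim:repr:product} directly to $(\scal_1,\scal_2)$ — neither its conclusion nor its $\Oh\br{2^{(\omega-1)k}3^k\cdot k^\omega}$ bound is available to you.

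The missing idea, which is how the paper proceeds, is a preprocessing pass: for each $r \in \{1,2\}$ and $p \in [k]$, first apply \cref{thm:prelim:repr:efficient} to $\scal_r^p$ to obtain $\tilde\scal_r^p \sub^{k-p}_{\mathrm{maxrep}} \scal_r^p$ of size at most ${k \choose p}$; this costs $\Oh(2^{\omega k}\cdot k^\omega)$, which is within the claimed budget. The families $\bigcup_{p} \tilde\scal_1^p$ and $\bigcup_{p} \tilde\scal_2^p$ now genuinely satisfy the hypothesis of \cref{thm:prelim:repr:product}, which yields $\widehat\scal^{p,q} \sub^{k-p-q}_{\mathrm{maxrep}} \tilde\scal_1^p \bullet \tilde\scal_2^q$. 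To transfer representativeness back to the original products you then need \cref{lem:prelim:repr:transitive-product} — which your proposal never invokes — combined with \cref{obs:prelim:repr:rank} and \cref{lem:prelim:repr:transitive}, giving $\widehat\scal^{p,q} \sub^{k}_{\mathrm{maxrep}} \scal_1^p \bullet \scal_2^q$. From there your finishing steps essentially coincide with the paper's: the decomposition $\scal_1 \bullet \scal_2 = \bigcup_{p,q}\scal_1^p \bullet \scal_2^q$ and \cref{lem:prelim:repr:union} give a family of size $\Oh(k\cdot 2^k)$, and one more compression pass (the paper uses \cref{lem:prelim:repr:efficient-final}; your grouping by $p+q$ would also do) brings the size down to $2^k$ at negligible cost, with correctness by \cref{lem:prelim:repr:transitive}. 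Your remarks about representing $M$ over $\mathbb{F}_2$ via \cref{lem:prelim:graphic}, discarding dependent sets, and handling $\emptyset$ are fine.
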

\begin{proof}
As observed before, we can efficiently represent $M$ over $\mathbb{F}_2$ and we can assume that $\emptyset \not \in \scal_1 \bullet \scal_2$ as otherwise it can added in the end.
For $p \in [k], r \in \{1,2\}$, let $\scal^p_r \sub \scal_r$ be the family of independent sets in $\scal_r$ of size $p$.
We first apply \cref{thm:prelim:repr:efficient} for each pair $(r,p)$ to compute $\tilde \scal^p_r \sub^{k-p}_{\mathrm{maxrep}} \scal^p_r$ of size $k \choose p$.
The total running time is bounded by $\Oh(2^{\omega k} \cdot k^\omega)$ which is bounded by $\Oh\br{2^{(\omega - 1)k}3^k \cdot k^\omega}$.
Now we can apply \cref{thm:prelim:repr:product} to $\bigcup_{p=1}^k \tilde \scal^p_1$ and $\bigcup_{p=1}^k \tilde \scal^p_2$.
We obtain, for each pair $p,q \in [k]$, a family
$\widehat \scal^{p,q} \sub^{k-p-q}_{\mathrm{maxrep}} \tilde \scal_1^p \bullet \tilde \scal_2^q$.
By Lemmas~\ref{lem:prelim:repr:transitive}, \ref{lem:prelim:repr:transitive-product}, and \cref{obs:prelim:repr:rank}, we get that
$\widehat \scal^{p,q} \sub^{k}_{\mathrm{maxrep}} \scal_1^p \bullet  \scal_2^q$.
\cref{lem:prelim:repr:union} implies that $\widehat \scal' = \bigcup_{p,q \in [k]} \widehat \scal^{p,q} \sub^{k}_{\mathrm{maxrep}} \scal_1 \bullet \scal_2$.
The family $\widehat \scal'$ contains at most $k\cdot 2^k$ sets; we use \cref{lem:prelim:repr:efficient-final} to find $\widehat \scal \sub^{k}_{\mathrm{maxrep}}  \widehat \scal'$ of size at most $2^k$ in time $\Oh(2^{\omega k} \cdot k^{\omega+1})$ which is negligible compared to the running time from \cref{thm:prelim:repr:product}.
The claim follows from \cref{lem:prelim:repr:transitive}.
\end{proof}

}
\section{Chordal Deletion}
\label{sec:chordal}

We begin with a simple treewidth-preserving reduction from \textsc{Feedback Vertex Set}. 

\begin{restatable}[\starr]{lemma}{lemChLB}
\label{lem:chordal:LB}
Let $G$ be a graph and $\ell \in \mathbb{N}$.
Let $G'$ be obtained from $G$ by subdividing each edge.
Then $\tw(G') = \tw(G)$ and $G$ has a feedback vertex set (FVS) of size $\ell$ if and only if $G'$ has a chordal deletion set of size $\ell$.
\end{restatable}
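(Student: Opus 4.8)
The plan has three parts: the treewidth equality, a structural characterization of chordality in subdivided graphs, and the equivalence of the two deletion problems. All three are routine once set up correctly.

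\textbf{Treewidth.} First I would note that $G$ is a minor of $G'$: contracting, for each $uv \in E(G)$, the edge joining $u$ to the subdivision vertex $w_{uv}$ recovers $G$. Since treewidth is minor-monotone, $\tw(G) \le \tw(G')$. For the reverse inequality I would take an optimal tree decomposition $(\ttw, \chi)$ of $G$ and, for every edge $uv \in E(G)$, pick a node $t$ with $\{u,v\} \sub \chi(t)$ (such a node exists by the edge condition) and attach to it a fresh leaf node with bag $\{u,v,w_{uv}\}$; the result is a tree decomposition of $G'$ of width $\max(\tw(G),2)$. Hence $\tw(G') = \tw(G)$ whenever $\tw(G) \ge 2$. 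If $\tw(G) \le 1$ then $G$ is a forest, and so is $G'$ (subdividing an edge of a forest yields a forest), so $\tw(G) = \tw(G') \in \{0,1\}$ in this case too.

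\textbf{Subdivisions and chordality.} The key observation is that $G'$ is bipartite, with sides $V(G)$ and the set $W$ of subdivision vertices, both of which are independent in $G'$, and that every vertex of $W$ has degree exactly $2$. Therefore every cycle of $G'$ strictly alternates between $V(G)$ and $W$ and, after suppressing the (degree-$2$) $W$-vertices, projects to a cycle of $G$; conversely each cycle of $G$ lifts to a cycle of $G'$ by subdividing its edges. Combining this with the fact that a chordal bipartite graph is a forest (\cref{lem:prelim:bip-chordal}), I obtain: for every $S \sub V(G')$ the induced subgraph $G'-S$, being again bipartite, is chordal if and only if it is a forest.

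\textbf{Equivalence of the deletion sets.} I would prove that the minimum size of a feedback vertex set of $G$ equals the minimum size of a chordal deletion set of $G'$; the stated biconditional follows since, for each fixed $\ell$, either side admits a solution of size at most $\ell$ exactly when $\ell$ is at least this common optimum (feedback vertex sets and chordal deletion sets are closed under taking supersets within the respective vertex sets, the latter by \cref{lem:prelim:closed}). For $(\Rightarrow)$: if $X$ is a feedback vertex set of $G$ then $X \sub V(G) \sub V(G')$, and any cycle of $G'-X$ would, by the alternation property, have all its $W$-vertices of degree $2$, hence project to a cycle of $G-X$ — impossible — so $G'-X$ is a forest, hence chordal, and $X$ is a chordal deletion set of $G'$ of the same size. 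For $(\Leftarrow)$: let $Y$ be a chordal deletion set of $G'$, so $G'-Y$ is a forest by the previous paragraph; form $Y' \sub V(G)$ by replacing every subdivision vertex $w_{uv} \in Y$ with one of its two endpoints and keeping the original vertices of $Y$, so that $|Y'| \le |Y|$. If $G-Y'$ had a cycle $C$, its lift $C'$ in $G'$ would avoid $Y$: each original vertex of $C'$ lies outside $Y'$, hence (originals of $Y$ stay in $Y'$) outside $Y$; and each subdivision vertex $w_{uv}$ of $C'$ has both $u,v \notin Y'$, hence $w_{uv} \notin Y$ by the construction of $Y'$ — contradicting that $G'-Y$ is a forest. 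Thus $Y'$ is a feedback vertex set of $G$ with $|Y'| \le |Y|$, which gives the reverse inequality between the optima.

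\textbf{Main difficulty.} None of the steps is hard; the proof is careful bookkeeping. The points requiring attention are the degenerate cases in the treewidth equality (forests and edgeless graphs) and, in the $(\Leftarrow)$ direction, mapping each deleted subdivision vertex to an endpoint so that the feedback vertex set does not grow — this is exactly what makes the two optima coincide rather than match only up to a constant.
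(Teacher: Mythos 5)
Your proof is correct and follows essentially the same route as the paper: reduce chordality of subgraphs of $G'$ to acyclicity via bipartiteness, transfer an FVS of $G$ directly, convert a chordal deletion set of $G'$ into an FVS of $G$ by shifting each deleted subdivision vertex to an endpoint, and preserve treewidth by attaching size-3 leaf bags (treating the forest case separately). The only difference is cosmetic bookkeeping: you replace all subdivision vertices at once and lift cycles from $G$, whereas the paper swaps them one at a time inside $G'$ before restricting to $V(G)$.
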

\full{
\begin{proof}
When $S \sub V(G)$ is a FVS in $G$ then it is also a FVS in $G'$.
Since $G'-S$ is acyclic, it is also chordal.
In the second direction, consider a chordal deletion set $S' \sub V(G')$ in $G'$.
As $G'$ is bipartite, then $G'-S'$ is as well, so by \cref{lem:prelim:bip-chordal} it must be acyclic.
So $S'$ is a FVS in $G'$.
If $S'$ contains a vertex $w$ introduced by subdividing an edge $uv \in E(G)$ then every cycle in $G'$ going through $w$ also goes through $u$ and $v$.
Therefore $(S' \sm w) \cup \{u\}$ is also a FVS in $G'$.
We can thus assume that $S' \sub V(G)$ and it forms a FVS in $G$.

It remains to upper bound $\tw(G')$ as clearly $\tw(G') \ge \tw(G)$.
If $\tw(G) = 1$ then $G$ is a forest and so is $G'$.
Suppose that $\tw(G) \ge 2$ and consider a tree decomposition of $G$ of optimal width.
We can transform it into a tree decomposition of $G'$ as follows: for each $uv \in E(G)$ pick a node $t$ whose bag contains both $u, v$ and create a node $t_{uv}$, adjacent only to $t$, with a bag $\{u,v,w\}$, where $w$ is a vertex introduced on the edge $uv$.
Since all the created bags have size three, the width of the decomposition does not change.
\end{proof}
}

As a consequence, the base of the exponent $c$ in \cref{thm:intro:chordal} must be at least $3$ under Strong Exponential Time Hypothesis~\cite{cygan2011solving} and $c$ must be at least $2^\omega + 1$ if the current-best deterministic algorithm for \textsc{Feedback Vertex Set} parameterized by treewidth is optimal~\cite{wlodarczyk2019clifford}. While we have no evidence that the mentioned algorithm should be optimal for deterministic time, we provide this comparison to indicate that breaching this gap for \textsc{ChVD} would imply the same for a more heavily studied problem.

\subparagraph{Minimal vertex separators.}

We set the stage for the proof of \cref{prop:intro:homo}.
First we need to develop some theory about \minver{s} in chordal graphs.

\begin{definition}
Let $\texttt{MinSep}(G)$ denote the set of minimal vertex separators in a graph $G$.
For a graph $G$ and a (possibly empty) set $S \subseteq V(G)$, we define $\texttt{Comp}(G, S)$ to be the set of connected components $C_i$ of $G-S$ for which it holds that $N_G(C_i) = S$.
\end{definition}

Note that whenever $G$ is disconnected then $\emptyset \in \minsep(G)$
and
$\texttt{Comp}(G, \emptyset)$ is just the set of connected components of $G$.
According to \cref{lem:prelim:separator-minimal-comp}, the set $S$ is a minimal $(u,v)$-separator if and only if $u,v$ belong to some (distinct) components from $\comp(G,S)$.
For later use, we establish a relation between sets $\minsep(G)$, $\comp(G,S)$ in $G$ and a graph obtained by a removal of a simplicial vertex.

\begin{lemma}[\starr]\label{lem:chordal:separator-recurse}
Let $v$ be a simplicial vertex in $G$
and $S \in \minsep(G)$.
If $S \ne N_G(v)$ then $S \in \minsep(G-v)$ and $|\comp(G,S)| = |\comp(G-v,S)|$.
\end{lemma}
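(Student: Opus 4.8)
The plan is to verify the two assertions separately, using the characterization of minimal separators from \cref{lem:prelim:separator-minimal-comp} together with the fact that a simplicial vertex $v$ has a clique neighborhood. First I would establish a structural claim about how $v$ sits relative to the components of $G-S$: since $N_G(v)$ is a clique and $S$ is a minimal separator (hence, by \cref{lem:prelim:separator-minimal-simplicial}, contains no simplicial vertex, so in particular $v \notin S$), the vertex $v$ lies in exactly one connected component, say $C_v$, of $G-S$. I would further argue that, because $N_G(v)$ is a clique while (by \cref{lem:prelim:separator-minimal-comp} applied to the pair defining $S$) $S$ meets at least two distinct components of $G-S$, and no clique can be split across two components of $G-S$, we must have $N_G(v) \not\supseteq S$; combined with the hypothesis $S \ne N_G(v)$ this gives $S \sm N_G(v) \ne \emptyset$. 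Actually the cleanest route: from $S \ne N_G(v)$ and the fact that $N_G(v)$ cannot contain $S$ (a clique cannot contain a minimal separator, since a minimal separator is never a clique in a graph with $\ge 2$ sides — or more directly, $N_G(v) \cup \{v\}$ is a clique, so if $S \subseteq N_G(v)$ then $S$ would be simplicial-ish, contradicting that $S$ separates two components each adjacent to all of $S$), we get an element $w \in S \sm N_G(v)$.

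Next I would prove $S \in \minsep(G-v)$. Since $v \in C_v$ and $v \notin S$, the set $S$ is still a subset of $V(G-v)$. Take $u_1, u_2$ in two distinct components $C_1, C_2 \in \comp(G,S)$ witnessing that $S$ is a minimal $(u_1,u_2)$-separator in $G$ (this uses \cref{lem:prelim:separator-minimal-comp}). At most one of $C_1, C_2$ equals $C_v$; if, say, $C_2 \ne C_v$, then $C_2$ is untouched by deleting $v$, and $C_1 \sm \{v\}$ still has a vertex (pick $u_1 \ne v$, possible because $N_G(w) \cap C_v$ for $w \in S \sm N_G(v)$ gives a vertex of $C_v$ other than $v$ adjacent to $w$, hence $C_v$ has $\ge 2$ vertices, or simply note $|\comp| \ge 2$ components remain nonempty). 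I would then check $N_{G-v}(C_1') = N_{G-v}(C_2) = S$ where $C_1'$ is the component of $C_1 \sm \{v\}$ containing $u_1$; the key point is that the $w \in S \sm N_G(v)$ witnesses that deleting $v$ cannot strip any vertex of $S$ from the neighborhood of the relevant components, so the "$=S$" direction survives, and then \cref{lem:prelim:separator-minimal-comp} (reverse direction) yields minimality in $G-v$.

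Finally, for the cardinality equality $|\comp(G,S)| = |\comp(G-v,S)|$, I would argue that deleting the simplicial vertex $v$ induces a bijection between $\comp(G,S)$ and $\comp(G-v,S)$. Every component $C \in \comp(G,S)$ with $C \ne C_v$ is unaffected, so $N_{G-v}(C) = N_G(C) = S$ and it remains in $\comp(G-v,S)$. For $C_v$: deleting $v$ either leaves $C_v \sm \{v\}$ connected or splits it; I would show it stays connected and keeps $N = S$. Connectivity: since $N_G(v)$ is a clique contained in $C_v \cup S$, any two neighbors of $v$ inside $C_v$ are already adjacent, so removing $v$ cannot disconnect $C_v$; and $N_{G-v}(C_v \sm \{v\}) = S$ because the witness $w \in S \sm N_G(v)$ is adjacent to $C_v$ via a vertex other than $v$, and every other element of $S$, being adjacent to $C_v$ in $G$ through some vertex $\ne v$ (possible since if $s \in S$ were adjacent to $C_v$ only via $v$, then $s \in N_G(v)$, fine — but we still need a $C_v$-neighbor of $s$ other than $v$; here I use that $s \in N_G(v)$ with $N_G[v]$ a clique means $s$ is adjacent to all of $N_G(v) \cap C_v$, and $N_G(v)\cap C_v \ne \emptyset$ unless $C_v = \{v\}$, which is impossible as $w$ forces $|C_v|\ge 2$), retains a $C_v\sm\{v\}$-neighbor. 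Hence $C_v \sm \{v\} \in \comp(G-v,S)$, giving the bijection and the equality.

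The main obstacle I anticipate is the bookkeeping in the last paragraph: carefully ruling out that removing $v$ either disconnects $C_v$ or shrinks $N_G(C_v)$ below $S$, and conversely that no new component of $\comp(G-v,S)$ appears. Both hinge on the clique structure of $N_G[v]$ and on the existence of the witness $w \in S \sm N_G(v)$ supplied by the hypothesis $S \ne N_G(v)$; making these two facts do all the work, rather than getting lost in case analysis, is the delicate part.
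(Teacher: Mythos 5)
There is a genuine gap, and it sits exactly where you hid the case analysis. Your derivation of a witness $w \in S \sm N_G(v)$ rests on the claim that $S \subseteq N_G(v)$ is impossible (``a clique cannot contain a minimal separator''). That claim is false: minimal vertex separators can perfectly well be cliques (in chordal graphs they always are), and they can be contained in the neighborhood of a simplicial vertex. Concretely, let $N_G(v)=\{a,b,c\}$ induce a triangle and add a vertex $d$ adjacent exactly to $a,b$; then $S=\{a,b\}$ is a minimal $(v,d)$-separator (both sides have neighborhood exactly $S$, cf.\ \cref{lem:prelim:separator-minimal-comp}) with $S \subsetneq N_G(v)$, so $S \sm N_G(v)=\emptyset$ and your witness does not exist. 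In that regime the argument can be repaired without $w$: there is $u \in N_G(v)\sm S$ inside $C_v$, and simpliciality gives $N_G(v)\subseteq N_G[u]$, which is all one needs to keep $C_v - v$ connected with unchanged neighborhood --- this is exactly the paper's second case --- but as written several of your steps lean on the nonexistent $w$.

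The second, independent problem is the complementary case $N_G(v)\subsetneq S$, which you silently exclude with the assertion that ``$w$ forces $|C_v|\ge 2$.'' It does not: take a four-cycle $x,a,y,w$ and attach $v$ to $a$ only; then $v$ is simplicial, $S=\{a,w\}$ is a minimal $(x,y)$-separator, $w \in S\sm N_G(v)$ exists, yet $C_v=\{v\}$. (Your inference would need $C_v\in\comp(G,S)$, which is not given.) This case is not vacuous and needs its own --- easy --- treatment, which is the paper's first case: when $N_G(v)\subseteq S$ the component of $v$ in $G-S$ is $\{v\}$, it does not belong to $\comp(G,S)$ since $N_G(v)\ne S$, and by \cref{lem:prelim:separator-minimal-comp} $S$ is not a minimal $(v,u)$-separator for any $u$; deleting $v$ thus removes only this irrelevant singleton component, leaving $\comp(G,S)$ and the two components witnessing minimality untouched. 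So the correct structure is the paper's dichotomy $N_G(v)\subsetneq S$ versus $N_G(v)\not\subseteq S$; your bijection argument for $C_v$ (clique neighborhood keeps $C_v-v$ connected and preserves its neighborhood) is sound in the second case, but the way you dispose of the first case is wrong.
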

\full{
\begin{proof}
Suppose that $S \ne N_G(v)$.
By \cref{lem:prelim:separator-minimal-simplicial} we know that $v \not\in S$.

First, consider the case $N_G(v) \subsetneq S$.
Then $\{v\}$ forms a connected component of $G-S$ but
$\{v\} \not\in \comp(G,S)$.
Next, \cref{lem:prelim:separator-minimal-comp} implies that $S$ is not a minimal $(v,u)$-separator for any $u \in V(G)$.
Therefore $S \in \minsep(G-v)$
and $|\comp(G,S)| = |\comp(G-v,S)|$.

In the second case $N_G(v) \not\subseteq S$.
Let $u \in N_G(v) \setminus S$ and $C$ be the connected component of $G-S$ which contains $v$.
Then $u \in V(C)$.
Since $v$ is simplicial, we have $N_G(v) \subseteq N_G[u]$.
Therefore, $C - v$ is connected, $N_{G-v}(C \sm v) = N_G(C)$, and so inserting $v$ to $(G-v)-S$ does not affect the number of connected components nor their neighborhoods.
This means that $S \in \minsep(G-v)$ and $|\comp(G,S)| = |\comp(G-v,S)|$.
\end{proof}}

We need a simple technical lemma about minimal vertex separators.

\begin{lemma}[\starr]
\label{lem:separator-many-sets}
Let $G$ be a connected graph and $V_1, \dots, V_k \subseteq V(G)$, $k \ge 2$, be disjoint sets so that $G[V_i]$ is connected, for $i \in [k]$, and $E_G(V_i, V_j) = \emptyset$, for $i\ne j$.
Then there exists a minimal vertex separator $S \sub V(G) \sm (V_1 \cup \dots \cup V_k)$ in $G$
which is a $(V_i,V_j)$-separator for some $i \ne j$ and
each set $V_i$ is contained in some component $C \in \comp(G,S)$.
\end{lemma}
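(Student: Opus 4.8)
The plan is to contract each of the sets $V_1,\dots,V_k$ to a single vertex and reduce the statement to the existence of a suitable minimal $(x,y)$-separator in the contracted graph, then pull it back. First I would form the graph $G'$ obtained from $G$ by contracting each connected set $V_i$ to a single vertex $v_i$; since the $V_i$ are pairwise disjoint, pairwise non-adjacent, and induce connected subgraphs, the vertices $v_1,\dots,v_k$ are pairwise non-adjacent in $G'$ and $V(G')\setminus\{v_1,\dots,v_k\}$ is in bijection with $V(G)\setminus(V_1\cup\dots\cup V_k)$. Because $G$ is connected, so is $G'$, and since $k\ge 2$ the vertices $v_1,v_2$ are distinct and non-adjacent, so some set separates them: take a minimal $(v_1,v_2)$-separator $S'$ in $G'$. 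Note $S'$ cannot contain any $v_i$: if it did, then since $N_{G'}(v_i)$ induces no constraint forcing it, we argue that $v_i$ is ``boundary-like'' — more carefully, it suffices to observe that removing $v_i$ from $S'$ would still separate $v_1$ from $v_2$ unless $v_i$ lies on the $v_1$–$v_2$ side structure; the cleanest route is: among all pairs $v_i,v_j$ that are separated by $S'$ when $S'$ is chosen as an \emph{inclusion-minimal} $(v_a,v_b)$-separator over all choices of $a\ne b$, minimality of $S'$ together with \cref{lem:prelim:separator-minimal-comp} forces $N_{G'}(C_a)=N_{G'}(C_b)=S'$ for the components $C_a,C_b$ of $v_a,v_b$, and then any $v_i\in S'$ would be adjacent (in $G'$) to two distinct components $C_a,C_b$, hence $v_i\in N_{G'}(C_a)\cap N_{G'}(C_b)$, which is fine — so instead I rule out $v_i\in S'$ directly by noting $v_i$ has no neighbours among $\{v_1,\dots,v_k\}$ and picking the separator to avoid such ``pendant-to-the-family'' vertices, exactly as in the proof of \cref{lem:prelim:separator-minimal-simplicial}.

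Having obtained a minimal vertex separator $S'\subseteq V(G')\setminus\{v_1,\dots,v_k\}$ that separates $v_i$ from $v_j$ for some $i\ne j$, I would let $S\subseteq V(G)\setminus(V_1\cup\dots\cup V_k)$ be the corresponding vertex set in $G$ under the bijection. The key point is that contracting the connected sets $V_1,\dots,V_k$ does not interact with $S$ (as $S$ is disjoint from all $V_i$), so the connected components of $G-S$ are obtained from those of $G'-S'$ by ``blowing up'' each $v_i$ back into $G[V_i]$; in particular the component of $G'-S'$ containing $v_i$ corresponds to a component $C$ of $G-S$ with $V_i\subseteq V(C)$, and $N_G(C)=N_{G'}(C')$ where $C'$ is the corresponding component of $G'-S'$. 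Then by \cref{lem:prelim:separator-minimal-comp} applied in $G'$, the components $C'_i,C'_j$ of $v_i,v_j$ satisfy $N_{G'}(C'_i)=N_{G'}(C'_j)=S'$, hence their blow-ups $C_i,C_j$ in $G$ satisfy $N_G(C_i)=N_G(C_j)=S$, so $S$ is a minimal $(V_i,V_j)$-separator in $G$ and $C_i,C_j\in\comp(G,S)$ with $V_i\subseteq V(C_i)$, $V_j\subseteq V(C_j)$. For the remaining sets $V_\ell$ with $\ell\notin\{i,j\}$: each $v_\ell$ lies in \emph{some} component $C'_\ell$ of $G'-S'$ (it is not in $S'$), and its blow-up $C_\ell$ is a component of $G-S$ containing $V_\ell$; this gives the final clause that every $V_\ell$ is contained in some component of $G-S$, though we should double-check that such a component belongs to $\comp(G,S)$ only for $i,j$ — but the statement only asks for containment in ``some component $C\in\comp(G,S)$'' for each $V_i$, so I must verify this stronger claim, namely that $N_G(C_\ell)=S$ for the component $C_\ell$ containing $V_\ell$.

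The main obstacle I anticipate is precisely that last point: a priori the component of $G-S$ containing some $V_\ell$ ($\ell\ne i,j$) might have neighbourhood a proper subset of $S$, so it would not lie in $\comp(G,S)$. To handle this I would strengthen the choice of $S$: rather than taking an arbitrary minimal $(v_i,v_j)$-separator in $G'$, I choose $S'$ to be an inclusion-minimal set among all $(v_a,v_b)$-separators over all pairs $a\ne b$ — equivalently, start from $V(G')\setminus\{v_1,\dots,v_k\}$ minus a maximal set that can be removed without reconnecting any two $v_a,v_b$, or run the standard argument: take any minimal separator $S_0$ of two of the $v$'s, then while some $v_\ell$'s component $C'_\ell$ has $N_{G'}(C'_\ell)\subsetneq S_0$, replace $S_0$ by $N_{G'}(C'_\ell)$, which still separates some pair and has strictly smaller size. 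This process terminates and yields $S'$ such that for \emph{every} $v_\ell$ in a component $C'_\ell$ of $G'-S'$ that is ``between'' the separated pairs we have $N_{G'}(C'_\ell)=S'$. Actually the cleanest formulation: choose $S'$ minimal such that $v_1,\dots,v_k$ do not all lie in one component of $G'-S'$; then each component of $G'-S'$ containing some $v_\ell$ must have $N_{G'}=S'$, for otherwise shrinking $S'$ to that neighbourhood keeps the $v$'s split. Pulling this back through the blow-up correspondence gives exactly $N_G(C_\ell)=S$ for the component $C_\ell\supseteq V_\ell$ of every $\ell\in[k]$, completing the proof. The routine verifications — that blow-up preserves connectivity, components, and neighbourhoods of $S$-free vertex sets, and that $v_i\notin S'$ — are straightforward and I would only sketch them.
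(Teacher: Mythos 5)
Your final argument is correct and, at its core, it is the same argument as the paper's, transplanted into the contracted graph. The paper works directly in $G$: it takes $S$ inclusion-minimal among subsets of $V(G)\setminus(V_1\cup\dots\cup V_k)$ that separate some pair $V_i,V_j$ (such a set exists, e.g.\ $N_G(V_1)$), gets minimality of the separator immediately (a proper subset of $S$ is still disjoint from all $V_\ell$, so by connectivity of the $G[V_\ell]$ it would again separate $V_i$ from $V_j$, contradicting the choice), and gets the full-neighbourhood property by exactly your shrinking step: if the component $C$ containing some $V_h$ had $N_G(C)\subsetneq S$, then $N_G(C)$ would be a smaller set in the same family, separating $V_h$ from whichever of $V_i,V_j$ lies outside $C$. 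Your detour through the contraction $G'$ and the blow-up correspondence, and your derivation of minimality via \cref{lem:prelim:separator-minimal-comp} after establishing the full neighbourhoods, buy nothing extra; they only add the routine pull-back verifications you defer at the end.

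Two caveats about the write-up. First, the claim in your opening paragraph that you can rule out $v_i\in S'$ ``directly by noting $v_i$ has no neighbours among $\{v_1,\dots,v_k\}$,'' in analogy with \cref{lem:prelim:separator-minimal-simplicial}, is not valid: a contracted vertex need not be simplicial, and a minimal $(v_1,v_2)$-separator may well contain $v_3$ (consider the five-vertex path $v_1,a,v_3,b,v_2$, where $\{v_3\}$ is a minimal $(v_1,v_2)$-separator). This does not damage the proof only because your final choice of $S'$ is made exclusively among subsets of $V(G')\setminus\{v_1,\dots,v_k\}$, so the question never arises. Second, in your ``cleanest formulation'' that restriction must stay explicit: if $S'$ were allowed to contain contracted vertices, then already $\{v_\ell\}$ satisfies ``$v_1,\dots,v_k$ do not all lie in one component of $G'-S'$'' (a deleted vertex lies in no component), and inclusion-minimality would give nothing. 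With the restriction in place, your shrink-to-$N_{G'}(C'_\ell)$ argument is precisely the paper's and the lemma follows.
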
 
\full{
\begin{proof} 
Let $S \sub V(G) \sm (V_1 \cup \dots \cup V_k)$ be an inclusion-minimal set
with the following property: $S$~separates sets $V_i, V_j$ for some $i \ne j$.
Such a set $S$ must exist because $N_G(V_1)$ has this property.

We argue that $S$ satisfies the conditions of the lemma.
Clearly $S$ is a minimal $(v_i, v_j)$-separator for each $v_i \in V_i$, $v_j \in V_j$.
Suppose that for some $h \in [k]$ the set $V_h$ is not contained in any component from $\comp(G,S)$.
Let $C$ be the component of $G-S$ that contains $V_h$.
Since $C \not\in  \comp(G,S)$, we have $N_G(C) \subsetneq S$.
At least one of the sets $V_i, V_j$ is not contained in $C$; assume w.l.o.g. that it is $V_i$.
Then $N_G(C)$ is a $(V_h,V_i)$-separator being a proper subset of $S$, which contradicts the choice of $S$.
The claim follows.
\end{proof}
}


\short{
We will use the following concept which appears in the algorithm for \textsc{ChVD} by Jansen~et~al. In the full version, we also provide several properties of this operation, used to process partial solutions in a treewidth DP.}
\full{We will use the following concept which appears in the algorithm for \textsc{ChVD} by Jansen~et~al.}

\begin{definition}[{\cite[Def. 5.55]{JansenKW21arxiv}}]
\label{def:prelim:condense}
For a graph $G$ and a vertex set $X \subseteq V(G)$ let the graph $\compress(G,X)$ be obtained from $G$ by contracting the connected connected components of $G-X$ into single vertices and then removing those of them which are simplicial.
\end{definition}

\short{
\refstepcounter{theorem}
\refstepcounter{theorem}
\refstepcounter{theorem}
}
\full{
We say that $G$ is {\em condensed} with respect to $X$ if $G = \compress(\widehat G,X) $ for some graph $\widehat G$ or,
equivalently,  $G = \compress( G,X)$.
Due to the following facts, condensation forms a handy tool for efficiently storing partial solutions for \textsc{ChVD}.

\begin{lemma}[{\cite[Lem. 5.57]{JansenKW21arxiv}}]
\label{lem:chordal:condense-glue}
Consider compatible boundaried graphs $(G,X)$, $(H,X)$ so that $G,H$ are chordal.
Let $\widehat G = \compress(G,X)$.
Then $(G,X) \oplus (H,X)$ is chordal if and only if  $(\widehat G,X) \oplus (H,X)$ is chordal.
\end{lemma}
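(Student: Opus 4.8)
The plan is to reduce the lemma to a claim about a single contraction, and then to prove that claim by chasing a hole through the contraction. First I would write $G' = (G,X)\oplus(H,X)$, let $\widehat G_0$ be the graph obtained from $G$ by contracting every connected component of $G-X$ to a single vertex but \emph{without} deleting any simplicial vertices, and let $D$ be the set of those contracted vertices that are simplicial in $\widehat G_0$, so $\widehat G = \widehat G_0 - D$. Setting $\widehat G_0' = (\widehat G_0,X)\oplus(H,X)$, the preliminary observation is that $\widehat G_0'$ is exactly the graph obtained from $G'$ by contracting each component of $G-X$ (these components are connected in $G'$ and $X$ separates them from $V(H)\sm X$, so the two contractions agree) and that $(\widehat G,X)\oplus(H,X) = \widehat G_0'-D$. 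The first, easy, step is to remove $D$: each $d\in D$ has $N_{\widehat G_0'}(d)=N_G(C)\sub X$ for the component $C$ it represents, so $d$ is simplicial in $\widehat G_0'$ as well and distinct vertices of $D$ are non-adjacent; since deleting a vertex preserves chordality and inserting a vertex whose neighbourhood is a clique into a chordal graph preserves chordality (no hole can pass through such a vertex), reinserting the vertices of $D$ one at a time — each keeping its whole, clique, neighbourhood inside $X$ — shows that $(\widehat G,X)\oplus(H,X)$ is chordal iff $\widehat G_0'$ is. Moreover, $\widehat G_0'$ is an edge-contraction of $G'$, so by \Cref{lem:prelim:closed} chordality of $G'$ implies chordality of $\widehat G_0'$. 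Hence everything reduces to: if $G'$ has a hole, then so does $\widehat G_0'$.

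For this I would argue contrapositively. Fix a hole $\mathcal O$ in $G'$ minimizing, lexicographically, first $|V(\mathcal O)\cap(V(G)\sm X)|$ and then $|V(\mathcal O)|$. Since $G'[V(G)]=G$ and $G'[V(H)]=H$ are chordal and $X$ separates $V(G)\sm X$ from $V(H)\sm X$, the hole $\mathcal O$ meets both sides, so it decomposes into maximal arcs inside $V(G)\sm X$, maximal arcs inside $V(H)\sm X$, and stretches of vertices of $X$; for a maximal arc $P$ of $\mathcal O$ inside a component $C$ of $G-X$, its two $\mathcal O$-neighbours $s,t$ lie in $X$, and one checks $s\ne t$ and $st\notin E(G')$ (otherwise the path $s,P,t$, closed up, would already be a hole or a triangle inside $G$, using that the complementary arc of $\mathcal O$ meets $V(H)\sm X$ and is therefore nontrivial). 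Now contract all components of $G-X$; this turns $\mathcal O$ into a closed walk $\mathcal O'$ in $\widehat G_0'$ in which each arc $P$ becomes a length-two path $s-c-t$ through the vertex $c$ representing $C$. Outside the vertices $c$, the walk $\mathcal O'$ inherits exactly the edges of $\mathcal O$, so $\mathcal O'$ has length at least $4$ and its only conceivable defects are (i) some vertex $c$ occurring twice, and (ii) a chord incident to some $c$ — which, since $N_{\widehat G_0'}(c)=N_G(C)\sub X$, must be an edge $cv$ with $v\in V(\mathcal O)\cap X\sm\{s,t\}$ and $v\in N_G(C)$.

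The main obstacle is ruling out (i) and (ii). Defect (i) happens precisely when $\mathcal O$ has two distinct arcs inside one component $C$; then I would join these arcs by a path inside $C$, re-extract a hole from the resulting closed walk, and argue that this new hole has no more vertices in $V(G)\sm X$ and strictly fewer overall, contradicting minimality. For (ii), given $v\in N_G(C)$ with a neighbour $w\in C$, I would reroute $\mathcal O$ through the edge $vw$ and a path inside $C$ to the arc $P$, cutting out the stretch of $\mathcal O$ between $v$ and an endpoint of $P$, and again derive a smaller hole of $G'$, contradicting minimality. Making these two exchanges go through — in particular choosing the rerouting path inside $C$ so that the re-extracted cycle is genuinely a hole and does not inflate the $V(G)\sm X$-count, which is exactly where the chordality of $G$ and the induced structure of the component $C$ are used — is the technical heart of the argument. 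Once (i) and (ii) are excluded, $\mathcal O'$ is a hole in $\widehat G_0'$, so $\widehat G_0'$ is not chordal, which completes the proof.
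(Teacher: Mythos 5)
First, note that the paper itself does not prove this statement: it is imported verbatim from \cite[Lem.~5.57]{JansenKW21arxiv}, so there is no in-paper proof to compare against and you are in effect re-proving the prior work's lemma. The parts of your argument that are carried out are correct: the reduction to the graph $\widehat G_0'$ (each $d\in D$ has $N_{\widehat G_0'}(d)=N_G(C)\sub X$ a clique, distinct such vertices are non-adjacent, and adding or deleting simplicial vertices preserves chordality in both directions), the observation that $\widehat G_0'$ is an edge-contraction of $G'=(G,X)\oplus(H,X)$ so one direction follows from \cref{lem:prelim:closed}, and the analysis showing that the image of a hole $\mathcal O$ under the contraction is a closed walk of length at least $4$ whose only possible defects are a repeated contracted vertex or a chord $cv$ with $v\in X\cap V(\mathcal O)$, $v\in N_G(C)$.

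The gap is that the remaining claim — ruling out defects (i) and (ii) — is the entire content of the lemma, and your treatment of it is a plan rather than a proof; you acknowledge this yourself. Two concrete obstacles stand in the way of the sketched exchanges. First, ``re-extract a hole from the resulting closed walk'' is not a legitimate step on its own: a cycle of length at least four with chords need not contain any hole at all (a $4$-cycle plus one chord is chordal), so after rerouting through $C$ you must argue that the specific cycle you build is induced, or at least forces a hole, and this is precisely where the chordality of $G$ and the structure of $C$ must enter — nothing in the proposal says how. Second, the lexicographic measure need not decrease: the connecting path inside $C$ (joining the two arcs in case (i), or joining $w$ to $P$ in case (ii)) can be arbitrarily long and contributes new vertices of $V(G)\sm X$, so the claimed ``no more vertices in $V(G)\sm X$ and strictly fewer overall'' is unsubstantiated; worse, in case (ii) the stretch of $\mathcal O$ you propose to cut out may contain \emph{all} vertices of $V(H)\sm X$, in which case the rerouted cycle lies entirely inside the chordal graph $G$ and no hole can be extracted from it, so even the choice of which stretch to delete and which path in $C$ to use requires an argument you have not supplied. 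Until these exchange steps are actually carried out (or replaced, e.g., by an argument exploiting clique trees or minimal separators of the chordal graph $G$), the hard direction of the lemma remains unproven.
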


\begin{lemma}[{\cite[Lem. 5.53]{JansenKW21arxiv}}]
\label{lem:chordal:condense-size}
Consider a chordal graph $G$ with a non-empty vertex subset $X \sub V(G)$.
If $G$ is condensed with respect to $X$, then $|V(G)| \le 2|X| - 1$.
\end{lemma}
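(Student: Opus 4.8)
The plan is to bound the number of vertices of a condensed chordal graph $G$ by analyzing separately the vertices inside $X$ and the contracted components of $G-X$. Since $|V(G)| = |X| + |\comp'|$ where $\comp'$ is the set of contracted components of $G-X$ that survived (i.e., became non-simplicial vertices), it suffices to show $|\comp'| \le |X| - 1$. First I would observe that $G$ being condensed means every connected component $D$ of $G-X$ has been contracted to a single vertex $v_D$, and the surviving ones are exactly those $v_D$ whose neighborhood $N_G(v_D) = N_G(D) \subseteq X$ is \emph{not} a clique. So the task reduces to: a chordal graph $G$ in which every component $D$ of $G-X$ has $N_G(D) \subseteq X$ a non-clique can have at most $|X| - 1$ such components.

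The key structural step is to show that for each surviving component $D$, the set $N_G(D)$ contains a \minver{} of $G[X]$, or more precisely that the non-adjacency inside $N_G(D)$ forces a separator in $G$ whose ``cost'' can be charged against the vertices of $X$. Concretely: if $N_G(D)$ is not a clique, pick a pair $a,b \in N_G(D)$ with $ab \notin E(G)$; then $D$ gives an $a$--$b$ path through $V(G)\setminus X$, so $a,b$ lie in the same component of $G - (X\setminus\{a,b\})$ but... this needs care. The cleaner route is to use \cref{lem:separator-many-sets}: the distinct surviving components $D_1, \dots, D_m$ are disjoint, each $G[D_i]$ connected, and they are pairwise non-adjacent (two components of $G-X$ have no edge between them), so for any two of them there is a minimal vertex separator $S \subseteq X$ separating them with each $D_i$ inside a component of $\comp(G,S)$. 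I would then set up an induction or a direct counting argument: build an auxiliary bipartite-like incidence structure between the $D_i$'s and a spanning-forest-type structure on $X$, and show that chordality of $G$ forbids this structure from having more than $|X|-1$ ``leaves''. A natural realization: form the graph $G'$ by contracting each $D_i$ to $v_i$; this is still chordal by \cref{lem:prelim:closed}; each $v_i$ is non-simplicial, so $N_{G'}(v_i)$ contains a non-edge; running a simplicial-vertex elimination (\cref{lem:prelim:simplicial-exists}) one peels vertices, and a $v_i$ can only be eliminated after at least one vertex of $X$ has been eliminated that ``resolves'' its non-edge — giving an injective-ish charging from the $v_i$'s into $X$, losing one because the last vertex standing can absorb no charge.

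The main obstacle I anticipate is making the charging argument genuinely injective (or at least $m \le |X|-1$) rather than merely $m \le |X|$: two surviving components could a priori share the same ``witnessing'' non-edge in $X$, so I need chordality to separate their witnesses. I expect \cref{lem:separator-many-sets} combined with \cref{lem:chordal:separator-recurse} (tracking what happens to minimal separators of $G$ under removal of a simplicial vertex) to be exactly the tool that prevents this collision, and the $-1$ should come from the base case of the induction (a chordal graph on $|X|$ vertices where $X$ itself is, say, connected but not complete admits at most $|X|-1$ components hanging off it, with equality witnessed by a path). If the clean induction stalls, the fallback is to appeal directly to \cite{JansenKW21arxiv} for the structural claim and only re-derive the arithmetic, since the statement is quoted from there.
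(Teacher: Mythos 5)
First, note that the paper does not prove this lemma at all: it is imported verbatim from \cite[Lem.~5.53]{JansenKW21arxiv}, so there is no in-paper argument to match, and your ``fallback'' of appealing to that reference is exactly what the paper does. Judged as a standalone proof, however, your proposal has a genuine gap at precisely the step that carries all the content. You correctly reduce the claim to showing that the number $m$ of vertices of $V(G)\setminus X$ (which form an independent set of non-simplicial vertices with neighborhoods in $X$) is at most $|X|-1$. But the charging argument you sketch --- ``a $v_i$ can only be eliminated after at least one vertex of $X$ has been eliminated that resolves its non-edge, giving an injective-ish charging'' --- is never established: you yourself flag injectivity as the main obstacle and only express the hope that \cref{lem:separator-many-sets} and \cref{lem:chordal:separator-recurse} will prevent two components from sharing a witness. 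Neither lemma does this out of the box: \cref{lem:separator-many-sets} produces a single minimal separator for a given family of components (not a per-component witness one could charge injectively), and \cref{lem:chordal:separator-recurse} only tracks how $\minsep$ changes when a simplicial vertex is deleted. Without an argument for injectivity, and without an argument producing the ``$-1$'' (your ``last vertex standing absorbs no charge'' is not justified --- the first or last eliminated $X$-vertex can a priori be charged), the proof is not there.

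For what it is worth, your elimination-ordering idea can be completed, but it needs two specific observations you do not make. Fix a perfect elimination ordering of $G$. Each $a\in V(G)\setminus X$ is initially non-simplicial and its current neighborhood (a subset of $X$) changes only when an $X$-neighbor is deleted, so there is a first deletion, of some $y_a\in N_G(a)\cap X$, after which $a$ is simplicial; charge $a$ to $y_a$. Injectivity: if $a\neq a'$ charge the same $y$, then at the moment $y$ is deleted both $a$ and $a'$ lie in its current neighborhood, which is a clique because $y$ is simplicial when deleted, so $aa'\in E(G)$, contradicting independence of $V(G)\setminus X$. The $-1$: at the moment $y_a$ is deleted, the current neighborhood of $a$ minus $y_a$ is a non-empty clique (non-empty because some $u\in N_G(a)$ with $uy_a\notin E(G)$ is still present), so some $X$-vertex outlives $y_a$; hence the $X$-vertex eliminated last is never charged, giving $m\le |X|-1$. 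Alternatively, the bound also follows from the paper's own machinery: each non-simplicial component has a non-empty spanning signature, the signatures are pairwise disjoint and their union is acyclic in $\base(G[X])$ by \cref{lem:chordal:sign-proof}, and the rank bound of \cref{lem:chordal:rank} caps their total size by $|X|-1$. Either route would close your gap; as written, the proposal does not.
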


\begin{observation}
\label{lem:chordal:condense-assocciate}
Consider compatible boundaried graphs $(G,X)$, $(H,X)$.
Let $\widehat G = \compress(G,X)$ and  $\widehat H = \compress(H,X)$.
Then $\compress((G,X) \oplus (H,X),X) = (\widehat G,X) \oplus (\widehat H,X)$.
\end{observation}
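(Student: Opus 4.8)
The plan is to unfold both sides of the claimed identity directly from \cref{def:prelim:condense} and check that the two graphs coincide, using the simple fact that the gluing operation $\oplus$ only identifies the vertices of $X$ and never touches the components of $G-X$ or $H-X$. First I would fix compatible boundaried graphs $(G,X)$, $(H,X)$ with $G,H\in\gcal_{X,B}$, and let $J=(G,X)\oplus(H,X)$, so $V(J)=V(G)\cup V(H)$ with $V(G)\cap V(H)=X$ and $E(J)=E(G)\cup E(H)$. The first key step is the observation that the connected components of $J-X$ are exactly the connected components of $G-X$ together with the connected components of $H-X$: indeed $J-X$ is the disjoint union of $G-X$ and $H-X$, because no edge of $E(G)\cup E(H)$ joins $V(G)\setminus X$ to $V(H)\setminus X$ (every edge of $G$ has both endpoints in $V(G)$, likewise for $H$).

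Next I would compare the contraction operation on both sides. On the right-hand side, $\widehat G=\compress(G,X)$ is obtained from $G$ by contracting each component of $G-X$ and deleting the simplicial ones; similarly for $\widehat H$, and then we glue. On the left-hand side we first form $J$, then contract each component of $J-X$, which by the previous step means contracting each component of $G-X$ and each component of $H-X$; since these contractions are local to $V(G)\setminus X$ and $V(H)\setminus X$ respectively and only affect adjacencies into $X$, performing them inside $J$ gives the same vertex for component $C$ of $G-X$, with the same neighbourhood $N_G(C)\subseteq X$, as performing them inside $G$ — and crucially $N_J(C)=N_G(C)$ because $C\subseteq V(G)\setminus X$ and all its edges lie in $E(G)$. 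So after the contraction phase, $J$ with its $J-X$ components contracted equals $(G',X)\oplus(H',X)$ where $G'$ (resp. $H'$) is $G$ (resp. $H$) with its $G-X$ (resp. $H-X$) components contracted.

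The last step is the deletion of simplicial contracted vertices. Here the point to verify is that a contracted vertex $c_C$ coming from a component $C$ of $G-X$ is simplicial in $J'=(G',X)\oplus(H',X)$ if and only if it is simplicial in $G'$: its neighbourhood $N(c_C)=N_G(C)$ is a subset of $X$, and whether $N_G(C)$ induces a clique is determined entirely by $J'[X]=G'[X]=B$, which is the same graph in $G'$ and in $J'$. Hence the sets of vertices deleted in the condensation of $J$ are precisely (the images of) the vertices deleted in condensing $G$ plus those deleted in condensing $H$, and the resulting graph is $(\widehat G,X)\oplus(\widehat H,X)$. I expect the main (mild) obstacle to be bookkeeping: making the identification of vertices across the $\oplus$ operation precise enough that "the contracted vertex of $C$ in $J$" and "the contracted vertex of $C$ in $G$" are literally the same object, so that the two constructions produce equal — not merely isomorphic — graphs; this is routine but must be stated carefully, and it is where the hypothesis $V(G)\cap V(H)=X$ (implicit in compatibility) is used.
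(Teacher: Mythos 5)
Your proof is correct; since the paper states this as an unproved observation, your argument simply supplies the routine verification the paper leaves implicit, and it matches the intended reasoning (components of the glued graph minus $X$ are exactly those of $G-X$ and $H-X$, contraction preserves their neighborhoods in $X$, and simpliciality of a contracted vertex depends only on $B=G[X]=H[X]$). No gaps.
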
 }

In this section we will exploit the following property of condensation.

\begin{restatable}[\starr]{lemma}{lemChCritOld}
\label{lem:chordal:criterion-old}
Consider a graph $G$ with a vertex set $X$ so that $G[X]$ is chordal. 
Then $G$ is chordal if and only if the following conditions hold:
\begin{enumerate}[nolistsep]
    \item for each connected component $C$ of $G-X$ the graph $G[X \cup C]$ is chordal,
    \item the graph $\compress(G,X)$ is chordal. 
\end{enumerate}
\end{restatable}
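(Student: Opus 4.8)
The plan is to prove the two implications separately. The forward direction is immediate: if $G$ is chordal, then since both $G[X\cup C]$ (for any component $C$ of $G-X$) and $\compress(G,X)$ are obtained from $G$ by vertex deletions and edge contractions, \cref{lem:prelim:closed} tells us they remain chordal. So the work is all in the converse. Assume conditions (1) and (2) hold and suppose, for contradiction, that $G$ has a hole $W$. The first observation is that $W$ cannot be contained in $G[X]$ (which is chordal by assumption) nor in any single $G[X\cup C]$ (chordal by (1)). Hence $W$ must meet $V(G)\sm X$ in at least two \emph{distinct} connected components $C_1,\dots,C_r$ of $G-X$ with $r\ge 2$, and $W$ also meets $X$.

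The heart of the argument is to show that such a hole survives the contraction defining $\compress(G,X)$, contradicting (2). Write $W = (w_1,w_2,\dots,w_m,w_1)$. Each component $C_i$ of $G-X$ that $W$ visits is visited along one or more maximal subpaths of $W$; for each such subpath, the two endpoints are in $X$ (or the subpath is a single vertex of $C_i$, again flanked by vertices of $X$ along $W$) — but here one has to be careful, since $W$ might enter and leave $C_i$ several times, and after contracting $C_i$ to a single vertex these multiple visits would collapse, potentially creating chords or shortcuts. The key claim is that, because $W$ is an \emph{induced} cycle in $G$ and $X$ separates different components of $G-X$ from each other, each component $C_i$ is visited by $W$ along exactly one maximal subpath. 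Indeed, if $W$ entered $C_i$ twice, the two entry/exit points on the $X$-side would be non-adjacent (they lie on a hole), but one could then find a chord or a shorter cycle through $C_i$ (as $C_i$ is connected and adjacent to both points), contradicting that $W$ is chordless. A symmetric argument, using that $W$ is induced, controls how $W$ interacts with each component of $G[X]$ that it traverses. Once each relevant component is visited along a single subpath, contracting each component of $G-X$ to a single vertex turns $W$ into a closed walk in $\compress(G,X)$ of length at least $4$ that visits each contracted vertex once and each $X$-vertex once, and — by the induced-ness of $W$ and the fact that contraction of a component $C_i$ does not create edges between vertices that were non-adjacent through $C_i$ — this closed walk is in fact a chordless cycle, i.e.\ a hole in $\compress(G,X)$ (note the simplicial contracted vertices removed by $\compress$ cannot lie on such a hole, since a vertex on a hole has two non-adjacent neighbours on the hole, so its neighbourhood is not a clique). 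This contradicts (2).

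The main obstacle, and the step deserving the most care, is precisely the claim that the hole $W$ visits each connected component of $G-X$ (and suitably interacts with the components of $G[X]$) along a single contiguous subpath, so that contraction genuinely preserves the hole rather than destroying or deforming it. This is where one must invoke that $W$ is \emph{induced}: a repeated visit to a connected piece would, via connectivity of that piece, yield either a chord of $W$ or a strictly shorter induced cycle, and in the latter case one restarts the argument with a minimal counterexample — so it is cleanest to take $W$ to be a hole of \emph{minimum length} in $G$ from the outset. With that minimality in hand, the repeated-visit scenario is ruled out directly, and the rest of the argument (tracking endpoints in $X$, checking the contracted closed walk has length $\ge 4$ and is chordless, and observing removed simplicial vertices are irrelevant) is routine bookkeeping.
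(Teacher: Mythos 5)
Your forward direction is fine, but the converse rests on two claims that are false as stated, and they are exactly where the difficulty of the statement lies. First, the claim that a minimum-length hole meets each connected component of $G-X$ in a single contiguous subpath is not true. Take $X=\{x_1,x_2,x_3,x_4\}$ with no edges inside $X$; let one component $C$ of $G-X$ be a long induced path $c_1,q_1,\dots,q_m,c_2$ (with $m\ge 4$), where $c_1$ is adjacent to $x_1,x_2$ and $c_2$ to $x_3,x_4$; let two further components be single vertices $d$ (adjacent to $x_2,x_3$) and $e$ (adjacent to $x_4,x_1$). Then $G[X]$ and every $G[X\cup C']$ are forests, hence chordal, so condition (1) holds, yet the unique shortest hole is the $8$-cycle $x_1c_1x_2dx_3c_2x_4e$, which visits $C$ in two separate stretches $\{c_1\}$ and $\{c_2\}$: every other cycle must traverse the whole path through $C$ and has length at least $m+5$ (in the subgraph induced by $\{x_1,\dots,x_4,c_1,c_2,d,e\}$ every vertex has degree two, so the $8$-cycle is the only cycle avoiding the interior of $C$). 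So your dichotomy ``a repeated visit yields a chord of $W$ or a shorter hole'' fails: the connecting path inside $C_i$ lies off $W$ and may be arbitrarily long, so it produces neither. (The example is of course consistent with the lemma -- here $\compress(G,X)$ contains a $4$-hole, so condition (2) fails -- but it refutes the structural claim your contraction step is built on.) Second, even when each component is visited once, the parenthetical ``contraction of a component $C_i$ does not create edges between vertices that were non-adjacent through $C_i$'' is wrong: an $X$-vertex of the hole that is adjacent to some vertex of $C_i$ \emph{not} on the hole becomes adjacent to the contracted vertex, so the image of the hole in $\compress(G,X)$ need not be induced, and a long non-induced cycle does not witness non-chordality. (The worry about components of $G[X]$ is, by contrast, moot: $\compress(G,X)$ does not contract $X$ at all.)

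For comparison, the paper does not push holes through the contraction at all: it proves the converse by induction on the number of components of $G-X$, splitting off one component $A$, observing via \cref{lem:chordal:condense-assocciate} that $\compress(G[A\cup X],X)$ and $\compress(G[B\cup X],X)$ glue to $\compress(G,X)$, and then applying the condensation--gluing lemma of Jansen et al.\ (\cref{lem:chordal:condense-glue}) twice to lift chordality back to $G$; the hard work is encapsulated in that cited lemma. A direct hole-based argument would have to deal head-on with repeated visits and with contraction-created chords; this is essentially what \cref{lem:chordal:criterion1} does later, and it needs minimal separators and the bipartite auxiliary graphs $\aux(G,X,S)$, where any cycle (induced or not) already contradicts chordality by \cref{lem:prelim:bip-chordal} -- a trick unavailable for $\compress(G,X)$ itself, which is not bipartite. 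As it stands, your proof has a genuine gap at its central step.
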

\full{
\begin{proof}
The forward direction is clear as the class of chordal graph is closed under vertex deletions and edge contractions.
We prove the opposite direction by induction on the number $k$ of the connected components in $G-X$.
For $k=1$ the condition (1) suffices to obtain chordality of $G$.
Suppose now that $k > 1$ and consider a partition $V(G) \sm X = A \cup B$ where $A$ induces a single connected component of $G-X$ and $B$ induces the rest of them.
Let $\widehat G_A = \compress(G[A \cup X], X)$ and $\widehat G_B = \compress(G[B \cup X], X)$.
From \cref{lem:chordal:condense-assocciate} we know that $(\widehat G_A, X) \oplus (\widehat G_B, X) = \compress(G,X)$;
in particular this implies that $\widehat G_A, \widehat G_B$ are chordal.
From inductive assumption we get that $G[A \cup X], G[B \cup X]$ are chordal.
We apply \cref{lem:chordal:condense-glue} (twice) to obtain that $G = (G[A \cup X],X) \oplus (G[B \cup X],X)$ is chordal as well.
\end{proof}
}

In order to turn \cref{lem:chordal:criterion-old} into a more convenient criterion, we will compress information about a graph $G$ with a vertex subset $X$ into multiple auxiliary graphs, one for each \minver in $G[X]$.

\begin{definition}\label{def:chordal:aux}
Consider a graph $G$ with a vertex set $X$ so that $G[X]$ is chordal.
For a set $S \in \texttt{MinSep}(G[X])$ we construct the graph $\texttt{Aux}(G,X,S)$ as follows:

\begin{enumerate}[nolistsep]
    \item contract each $C \in \texttt{Comp}(G[X], S)$ into a vertex and remove the remaining vertices of~$X$ (including all of $S$),
    \item contract each connected component of $G-X$ into a vertex.
\end{enumerate}
\end{definition}

Note that $\texttt{Aux}(G,X,\emptyset)$ is obtained by just contracting each connected component of $G[X]$ and each connected component of $G-X$.
Moreover, observe that $\texttt{Aux}(G,X,S)$ is always a bipartite graph because there can be no edges between two components from   $\texttt{Comp}(G[X], S)$ nor between two components of $G-X$.
See \Cref{fig:aux} for an example of this construction.

\begin{figure}
    \centering
    \includegraphics[width=0.8\linewidth]{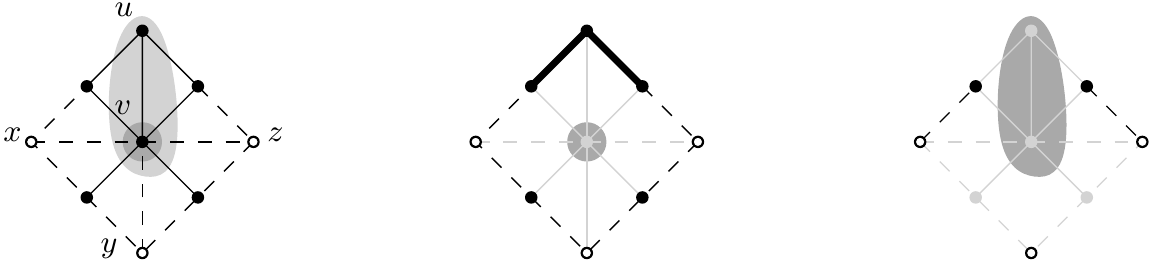}
    \caption{On the left: graph $G$ and set $X \sub V(G)$ represented by black disks. The graph $G[X]$ is drawn with solid edges.
    There are two \minver{s} in $G[X]$: $S_1 = \{v\}$ and $S_2 = \{u,v\}$, sketched in gray.
    In the middle: the graph $\aux(G,X,S_1)$ with thick edges indicating a component that gets contracted into a single vertex;
    the gray vertices and edges are removed.
    On the right: the graph $\aux(G,X,S_2)$; note that $|\comp(G[X],S_2)|=2$ because the lower vertices of $X$ are not adjacent to every vertex in $S_2$.
    The graph  $\aux(G,X,S_1)$ contains a cycle and this witnesses that $G$ is not chordal.
    However, removing from $G$ any single vertex among $x,y,z$ results in a chordal graph.
    }
    \label{fig:aux}
\end{figure}

To make a connection between holes in $G$ and cycles in $\aux(G,X,S)$, we need a criterion to derive existence of a cycle from a closed walk with certain properties.
In the following lemma we consider a cyclic order on a sequence of length $k$.
We define the successor operator as $s(i) = i+1$, for $i \in [k-1]$, and $s(k) = 1$. 

\begin{restatable}[\starr]{lemma}{lemChBipWalk}
\label{lem:bipartite-walk}
Let $G$ be a bipartite graph with vertex partition $V(G) = A \cup B$.
Suppose there exists a sequence of vertices $(v_1, \dots, v_k)$ in $G$ such that:
\begin{enumerate}[nolistsep]
    \item for $i \in [k]$ it holds $v_i = v_{s(i)}$ or $v_iv_{s(i)} \in E(G)$,
    \item the multiset $\{v_1, \dots, v_{k}\}$ contains at most one occurrence of each vertex from $A$,
    \item the set $\{v_1, \dots, v_k\}$ contains at least two vertices from $B$.
\end{enumerate} 
Then $G$ contains a cycle.
\end{restatable}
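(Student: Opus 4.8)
The plan is to extract an actual cycle from the closed walk by contracting the "stall" steps $v_i = v_{s(i)}$ and then arguing that the resulting closed walk in $G$ cannot be "trivial" (i.e., cannot collapse to a single vertex or to a single back-and-forth traversal of one edge). First I would reduce the multiset $(v_1,\dots,v_k)$ to a genuine closed walk $w_1 w_2 \cdots w_m w_1$ in $G$: scan cyclically and delete any entry equal to its successor; condition (1) guarantees that consecutive surviving entries are adjacent in $G$, and the deletion cannot remove all vertices since by (3) at least two distinct vertices of $B$ occur. This yields a closed walk of positive length in which consecutive vertices are genuinely adjacent. Note that the surviving sequence still contains at most one occurrence of each vertex of $A$ (condition (2) is inherited) and still contains at least two distinct vertices of $B$ (condition (3) is inherited).

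Next I would use bipartiteness together with the "$A$ is visited injectively" property. In a closed walk of a bipartite graph with parts $A$ and $B$, the vertices alternate between $A$ and $B$, so $m = 2\ell$ for some $\ell \ge 1$ and exactly $\ell$ of the positions lie in $A$, say $w_{a_1}, \dots, w_{a_\ell}$ (in cyclic order), which by condition (2) are pairwise distinct. Since $m \ge 2$ and there are at least two distinct $B$-vertices on the walk, we actually have $\ell \ge 2$: if $\ell = 1$ the walk would be $a\,b\,a\,b'\cdots$ — but with only one $A$-vertex the walk reads $a, b, a, b, \ldots$ forced back to the same $A$-vertex each time, hence uses only a single $B$-vertex, contradicting (3). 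So $\ell \ge 2$, and we have $\ell$ distinct $A$-vertices $w_{a_1}, \ldots, w_{a_\ell}$ appearing in this cyclic order along the walk.

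Now I would build the cycle directly. Between $w_{a_t}$ and $w_{a_{t+1}}$ (cyclically in $t \in [\ell]$) the walk takes a path through exactly one intermediate $B$-vertex, call it $b_t$; thus $w_{a_t}$ and $w_{a_{t+1}}$ are each adjacent to $b_t$. Consider the auxiliary closed walk $w_{a_1}, b_1, w_{a_2}, b_2, \dots, w_{a_\ell}, b_\ell, w_{a_1}$ of length $2\ell \ge 4$ in $G$. Its $A$-vertices are all distinct. If its $B$-vertices were also all distinct we would already have a cycle and be done. Otherwise some $b_s = b_t$ with $s < t$; then $w_{a_{s+1}}, b_{s+1}, \dots, w_{a_t}, b_t = b_s, w_{a_{s+1}}$ is a shorter closed walk whose $A$-vertices are still distinct and which still has length at least $4$ (it contains the distinct $A$-vertices $w_{a_{s+1}}$ and $w_{a_{t}}$ — distinct because $s+1 \le t$ and all $w_{a_i}$ are distinct — hence at least two $A$-vertices and at least two $B$-vertices). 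Iterating this shortening terminates, and the terminal closed walk has all $A$-vertices distinct and all $B$-vertices distinct, i.e., it is a cycle in $G$ of length at least $4$; in particular $G$ contains a cycle.

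The main obstacle is the bookkeeping in the last paragraph: making precise that the shortening procedure never destroys the invariants "all $A$-vertices distinct" and "length $\ge 4$", and that it strictly decreases the length so it must terminate. An alternative, cleaner way to phrase the same idea is to observe directly that the $\ell$ distinct vertices $w_{a_1}, \ldots, w_{a_\ell}$ together with a minimal choice of $B$-vertices connecting consecutive ones in cyclic order must, by distinctness on the $A$-side, contain a subsequence forming a cycle — but the inductive shortening above is the most self-contained route and is the part I would write out carefully.
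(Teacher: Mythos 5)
Your reduction to an alternating closed walk $w_{a_1},b_1,w_{a_2},b_2,\dots,w_{a_\ell},b_\ell$ with all $A$-vertices distinct is fine, and the overall strategy is the same in spirit as the paper's (which works directly on the sequence with two local reduction rules). But the final shortening step has a genuine gap. When you pick $s<t$ with $b_s=b_t$ and pass to the inner closed walk $w_{a_{s+1}},b_{s+1},\dots,w_{a_t},b_t=b_s$, your justification that it still has length at least $4$ rests on the claim that $w_{a_{s+1}}$ and $w_{a_t}$ are ``distinct because $s+1\le t$''. This is false precisely when $t=s+1$, i.e.\ when two consecutive $B$-entries of the walk coincide ($b_s=b_{s+1}$), which conditions (1)--(3) do not forbid: the walk may enter an $A$-vertex from $b$ and immediately return to $b$ (e.g.\ the reduced walk $b,a_1,b,a_2,b',a_3$ closed up, coming from a legitimate input sequence). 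In that case your chosen sub-walk is the degenerate length-$2$ traversal $b_s,w_{a_{s+1}},b_s$, the invariant ``length $\ge 4$'' is destroyed, and the terminal walk of your iteration can have all vertices distinct without being a cycle. More generally, an arbitrary choice of the repeated pair and of the inner segment may throw away every $B$-vertex other than the repeated one, so the invariant ``at least two distinct $B$-vertices'' (which is what really forces length $\ge 4$ at termination) is not maintained.

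The fix is small but it is exactly the missing idea. Either (i) first eliminate immediate backtracks: if $b_s=b_{s+1}$, delete the spike $w_{a_{s+1}},b_{s+1}$; by condition (2) the middle vertex of such a spike lies in $A$ (this is the paper's second reduction rule, where one observes $v_i\in B$ and $v_{s(i)}\in A$ when $v_i=v_{s(s(i))}\ne v_{s(i)}$), so the set of $B$-vertices on the walk, and hence condition (3), is preserved; or (ii) when shortening at a repeated $B$-vertex $b$, choose the cyclic segment between two consecutive occurrences of $b$ that contains some $B$-vertex different from $b$ (such a segment exists because the walk carries at least two distinct $B$-vertices). With either repair the invariants ``all $A$-vertices distinct'' and ``at least two distinct $B$-vertices'' are maintained, the length strictly decreases, and the terminal walk is an alternating closed walk of length at least $4$ with all vertices distinct, i.e.\ a cycle. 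As written, though, the step you yourself flagged as the delicate bookkeeping is where the argument breaks.
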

\full{
\begin{proof}
We apply modifications to the sequence $(v_1, \dots, v_k)$ while preserving conditions (1-3).
First, if $v_i = v_{s(i)}$ then remove $v_{s(i)}$.
This rule is clearly safe.
Second, if $v_i = v_{s(s(i))} \ne v_{s(i)}$ then remove $v_{s(i)}$ and $v_{s(s(i))}$.
Due to condition (2) it must be 
$v_i \in B$ and $v_{s(i)} \in A$ so the set  $\{v_1, \dots, v_k\} \cap B$ stays invariant, which preserves condition (3).

Each modification shortens the sequence, so after applying them exhaustively we obtain a sequence that cannot be further reduced.
Due to condition (3) the length of the sequence cannot drop below 4.
We claim that each edge from $E(G)$ is now traversed at most once. 
Suppose otherwise that $v_iv_{s(i)}, v_jv_{s(j)}$ represent the same edge for $i \ne j$.
The indices $i,j$ cannot be consecutive
due to the second modification rule.
But then some vertex of $A$ must occur twice in the sequence which contradicts condition (2).
As a result we obtain a non-trivial closed walk in $G$ without repeated edges, which implies the existence of a~cycle.
\end{proof}
}

{We are ready to prove a proposition creating a link between chordality and acyclicity.}

\begin{proposition}\label{lem:chordal:criterion1}
Consider a graph $G$ with a vertex subset $X \sub V(G)$ so that for each connected component $C$ of $G-X$ the graph $G[X \cup C]$ is chordal.
Then $G$ is chordal if and only if for each $S \in \texttt{MinSep}(G[X])$ the graph $\texttt{Aux}(G,X,S)$ is acyclic.
\end{proposition}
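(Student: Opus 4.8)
The forward direction is essentially covered by \cref{lem:chordal:criterion-old} combined with \cref{lem:prelim:closed}: if $G$ is chordal, then every $\aux(G,X,S)$ is obtained from $G$ by vertex deletions and edge contractions, hence chordal; but $\aux(G,X,S)$ is bipartite, so by \cref{lem:prelim:bip-chordal} it is a forest, i.e.\ acyclic. So the whole content is the converse, which I would prove by contraposition: assuming $G$ has a hole $H$, I will exhibit a set $S \in \minsep(G[X])$ for which $\aux(G,X,S)$ contains a cycle.

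\textbf{Locating the hole relative to $X$.}
Fix a hole $H$ in $G$, viewed as a cyclic sequence of vertices. Since each $G[X\cup C]$ for $C$ a component of $G-X$ is chordal, and $G[X]$ is chordal, the hole cannot live entirely inside $X$ or inside a single $X\cup C$; hence $H$ must visit at least two connected components among $\{\text{components of }G-X\} \cup \{\text{components of }G[X]\}$ in a way that forces it to ``cross'' $X$. I would distinguish two cases. \emph{Case A: $H$ meets at least two distinct connected components of $G[X]$.} Here I take $S=\emptyset$ (legal since $G[X]$ disconnected means $\emptyset\in\minsep(G[X])$) and work with $\aux(G,X,\emptyset)$, which contracts each component of $G[X]$ and each component of $G-X$. \emph{Case B: $H$ meets exactly one component of $G[X]$.} Call that component $D$. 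Now I apply \cref{lem:separator-many-sets}: the sets $V(H)\cap D$ split into pieces lying in distinct ``lobes'' of $D-(V(D)\sm V(H))$; more precisely, the portions of $H$ inside $D$ are separated from each other (within $D$) by vertices not on $H$, so there is a minimal vertex separator $S\subseteq V(D)\sm V(H)$ of $G[X]$ that is a $(V_i,V_j)$-separator for two of the relevant components $C_1,C_2 \in \comp(G[X],S)$ with $V(H)\cap C_i \neq\emptyset$, and every vertex of $V(H)\cap D$ lies in some component of $\comp(G[X],S)$. (Here I need to check that $V(H)\cap D$ genuinely decomposes into at least two parts that are mutually separated inside $D$ by non-$H$ vertices — this uses that $H$ is induced and must leave $D$ into $G-X$ and come back through a different spot, since $G[X\cup C]$ is chordal for each single component $C$ of $G-X$.)

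\textbf{From the hole to a cycle in $\aux$.}
In either case, consider the quotient map $q$ sending each vertex of $G$ to: its image under the contractions defining $\aux(G,X,S)$ if it survives, and ``undefined'' if it is deleted (vertices of $X$ outside the relevant components, in particular all of $S$). The key claim is that no vertex of $V(H)$ is deleted: vertices of $V(H)$ inside $X$ lie in relevant components by construction of $S$, and vertices of $V(H)$ outside $X$ lie in components of $G-X$, all of which survive. So $q$ is defined on all of $V(H)$, and walking around the cyclic sequence $H=(h_1,\dots,h_m)$ gives a closed sequence $(q(h_1),\dots,q(h_m))$ in $\aux(G,X,S)$ where consecutive terms are equal or adjacent (contractions preserve adjacency). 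I then verify the three hypotheses of \cref{lem:bipartite-walk} with $A=$ the ``$X$-side'' vertices of $\aux$ (images of components in $\comp(G[X],S)$) and $B=$ the ``non-$X$-side'' vertices (images of components of $G-X$): hypothesis (1) is the walk property just noted; hypothesis (2) — each vertex of $A$ occurs at most once — holds because if two vertices $h_p,h_q$ of $H$ mapped into the same component $C\in\comp(G[X],S)$, then since $H$ is induced, $h_p$ and $h_q$ together with the subpath of $H$ between them that stays outside $C$ would, after contracting, already yield a cycle unless the whole of that subpath is inside $C$ — but the subpath of $H$ strictly between two visits to $C$ must exit $C$, and it must exit into $G-X$ or into a deleted part of $X$; the delicate point is that consecutive visits to the \emph{same} $C$ with the connecting arc in $G-X$ would contradict minimality/the choice of $S$ (this is where \cref{lem:separator-many-sets} and \cref{lem:prelim:separator-minimal-comp} are used to rule it out, forcing distinct components, i.e.\ distinct vertices of $A$); hypothesis (3) — at least two vertices of $B$ — holds in Case B because $H$ leaves $D$ into $G-X$ and returns at a separated spot, so it must pass through two distinct components of $G-X$ (again since $G[X\cup C]$ is chordal for a single $C$), and in Case A, if $H$ used only one component of $G-X$ then $H$ would lie in $X\cup C$ for that one $C$, contradicting chordality of $G[X\cup C]$; so in all cases $|V(H)\cap(G-X)|$ spreads over $\ge 2$ components.

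\textbf{Main obstacle.}
The routine part is the forward direction and the mechanical application of \cref{lem:bipartite-walk}. The real work — and where I expect to spend most effort — is the argument in Case B that the relevant minimal separator $S$ can be chosen so that (i) it avoids $V(H)$ entirely, (ii) every $H$-vertex of $X$ lands in a component of $\comp(G[X],S)$, and (iii) no two ``$X$-excursions'' of $H$ collapse into the same such component. Item (iii) is the crux: it requires combining that $H$ is chordless with the structure of minimal separators (\cref{lem:prelim:separator-minimal-comp,lem:prelim:separator-minimal-simplicial}) and the chordality of each $G[X\cup C]$, to show that whenever $H$ re-enters the same component $C\in\comp(G[X],S)$, the connecting arc of $H$ cannot stay within $C$ nor within a single external component without producing a chord or contradicting the minimality of $S$. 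I would formalize (iii) by choosing $S$ via \cref{lem:separator-many-sets} applied to the family of maximal sub-arcs of $H$ lying inside $D$, noting these are vertex-disjoint, each induces a connected subgraph of $D$, and pairwise non-adjacent in $G[X]$ because $H$ is induced and consecutive arcs are separated by an excursion of $H$ through $G-X$.
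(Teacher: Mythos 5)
Your forward direction and the overall two-case scheme (take $S=\emptyset$ when the hole meets several components of $G[X]$, otherwise get $S$ from \cref{lem:separator-many-sets} applied to the maximal arcs of the hole inside the one component it meets) match the paper. But your application of \cref{lem:bipartite-walk} has a genuine gap: you put the $X$-side images (the contracted components from $\comp(G[X],S)$) into the role of $A$, i.e.\ the side whose vertices may occur at most once in the walk, and you justify hypothesis~(2) by claiming that two hole vertices landing in the same component would either force a cycle or force the connecting arc to stay inside that component. That claim is false. Concretely, take $X=\{a,w,b,c\}$ with $G[X]$ the path $a\text{--}w\text{--}b$ plus the isolated vertex $c$, and $G-X=\{e_1,e_2,e_3\}$ independent with $N(e_1)=\{a,w,b\}$, $N(e_2)=\{b,c\}$, $N(e_3)=\{c,a\}$. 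Every $G[X\cup\{e_i\}]$ is chordal, yet $a\,e_1\,b\,e_2\,c\,e_3$ is a hole meeting two components of $G[X]$, so you are in your Case~A with $S=\emptyset$ (no separator choice available to help). The image walk is $d_1,e_1,d_1,e_2,d_2,e_3$, where $d_1$ is the image of $\{a,w,b\}$: the same $A$-vertex occurs twice, non-consecutively, violating your hypothesis~(2), and the excursion $d_1,e_1,d_1$ contracts to a mere backtrack, not a cycle, so your claimed dichotomy fails as well. (The proposition still holds here -- $\aux(G,X,\emptyset)$ contains the cycle $d_1e_2d_2e_3$ -- but not via your argument.) Note also that even a single visit of length at least two inside one component already puts two occurrences of the same $A$-vertex into the multiset, so condition~(2) cannot be met with your role assignment; and swapping the roles does not save the argument either, because in the original graph $G$ a hole may spend several vertices in, or revisit, one component of $G-X$.

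This is exactly why the paper's proof does not work in $G$ itself: it first passes to the condensed graph $G'=\compress(G,X)$ (\cref{def:prelim:condense}), using \cref{lem:chordal:criterion-old} to see that $G'$ is still non-chordal while every $G'[X\cup\{v\}]$ with $v\in V(G')\sm X$ is chordal, and observing that $\aux(G',X,S)$ is an induced subgraph of $\aux(G,X,S)$, so a cycle there suffices. After condensation each component of $G-X$ is a single vertex and $V(G')\sm X$ is independent, so a hole meets each such vertex at most once; the paper then applies \cref{lem:bipartite-walk} with $A=V(G')\sm X$ (where hypothesis~(2) is now automatic) and with the $X$-side images as $B$, where repetitions are harmless and hypothesis~(3) is supplied by the case assumption or by \cref{lem:separator-many-sets}. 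To repair your proof you should add this condensation step and exchange the roles of $A$ and $B$; the rest of your outline (non-adjacency of the arcs $V_i$ because the hole is chordless, $\ell\ge 2$, every hole vertex of $X$ surviving into a relevant component) then goes through essentially as in the paper.
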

\begin{proof}
First we argue that if $G$ is chordal then all graphs $\texttt{Aux}(G,X,S)$ are acyclic.
Because the class of chordal graphs is closed under vertex deletions and edge contractions, the graphs $\aux(G,X,S)$ are chordal as well.
Since each graph $\aux(G,X,S)$ is also bipartite,
by \cref{lem:prelim:bip-chordal} we obtain that $\aux(G,X,S)$ is acyclic.

Now suppose that $G$ is not chordal.
Let $G' = \compress(G,X)$ (recall \cref{def:prelim:condense}).
By~\cref{lem:chordal:criterion-old}, the graph $G'$ is not chordal as well but for each vertex $v \in V(G') \setminus X$ the graph $G'[X \cup \{v\}]$ is chordal (because contraction preserves chordality).
Note that 
$\texttt{Aux}(G',X,S)$ is an induced subgraph of $\aux(G,X,S)$ for each $S \in \texttt{MinSep}(G[X])$ (they may differ only due to removal of simplicial vertices), so it suffices to show that
one of the graphs $\texttt{Aux}(G',X,S)$ has a cycle.

As $G'$ is not chordal, it contains a hole $H = (u_1, \dots, u_k)$. 
We consider two cases: either $V(H)$ intersects at least two connected components of $G'[X]$ or only one.
In the first case, let $\phi_0 \colon V(G') \to V(\texttt{Aux}(G',X,\emptyset))$ be the mapping given by the contractions from \cref{def:chordal:aux}.
Recall that $V(G') \sm X$ is an independent set in $G'$ so $\phi_0$ is an identity on this set.
The sequence $(\phi_0(u_1), \dots, \phi_0(u_k))$ meets the preconditions of \cref{lem:bipartite-walk} for $A = V(G') \sm X$ and $B = \phi_0(X)$ so $\texttt{Aux}(G',X,\emptyset)$ has a cycle.
As $G'[X] = G[X]$ is disconnected, we have $\emptyset \in \minsep(G[X])$.

In the second case, let $Y \subseteq X$ induce the only connected component of $G'[X]$ that intersects $V(H)$.
Let $V_1, \dots, V_\ell \subseteq Y$ be the vertex sets of maximal subpaths of $H$ within $Y$.
By the definition of a hole, we have $E_{G'}(V_i,V_j)=\emptyset$ for distinct $i, j \in [\ell]$.
It must be $\ell \ge 2$ because for each $v \in V(G') \sm X$ the graph $G'[X \cup \{v\}]$ is chordal and the hole $H$ must visit at least two vertices from the independent set $V(G') \setminus X$.
By \cref{lem:separator-many-sets}, there exists a minimal vertex separator $S \subseteq Y \setminus V(H)$ in $G'[Y]$ such that 
every set $V_i$ is contained in some component from $\comp(G'[Y], S)$ and at least two components from $\comp(G'[Y], S)$ intersect $V(H)$.
Note that $S \in \minsep(G[X])$.
Let $C_S$ be the union of the components from  $\comp(G'[Y], S)$; note that $V(H) \subseteq V(C_S) \cup (V(G') \sm X)$.

Let $\phi_S \colon V(C_S) \cup (V(G') \sm X) \to V(\texttt{Aux}(G',X,S))$ be the mapping given by the contractions from \cref{def:chordal:aux} which turn each component from $\texttt{Comp}(G'[Y], S)$ into a single vertex.
Again, the sequence $(\phi_S(u_1), \dots, \phi_S(u_k))$ meets the preconditions of \cref{lem:bipartite-walk} for $A = V(G') \sm X$ and $B = \phi_S(V(C_S))$ so $\texttt{Aux}(G',X,S)$ has a cycle.
See \Cref{fig:aux} for an illustration.
\end{proof}

\full{
Observe that whenever a component of $G-X$ is simplicial then in every graph $\aux(G,X,S)$ the corresponding vertex has degree one and so it cannot be a part of any cycle.
Therefore the simplicial components of $G-X$ does not affect the criterion from \cref{lem:chordal:criterion1}.
This agrees with the definition of $\compress(G,X)$ where the simplicial components are removed as meaningless.}

\subparagraph{Signatures of boundaried graphs.}

The next step is to construct a graphic matroid $M_B$ for a chordal graph $B$ so that for any two graphs $G_1, G_2 \in \gcal_{X,B}$ 
the information about chordality of $(G_1,X) \oplus (G_2,X)$ could be read from~$M_B$.
\cref{lem:chordal:criterion1} already relates chordality to acyclicity but the corresponding graphic matroids for $G_1, G_2$ are disparate.
To circumvent this, we will further compress the information about cycles.

\begin{definition}
Consider a graph $B$.
For $S \in \minsep(B)$, let $\base(B,S)$
be the complete graph on vertex set $\comp(B,S)$. 
The graph $\base(B)$ is a disjoint union of  all the graphs $\base(B,S)$ for $S \in \minsep(B)$. 
\end{definition}

That is, we treat the components from $\comp(B,S)$ as abstract vertices of a new graph 
which is a union of cliques.

The following transformation is similar to the one used in the algorithm for \textsc{Steiner Tree}
based on representative families~\cite{representative-efficient}.
For the sake of disambiguation, in the  definition below we assume an implicit linear order on the vertices of $B$; this order may be arbitrary.
Since vertices of $\base(B)$ correspond to distinct subsets of $V(B)$, which can ordered lexicographically, fixing the order on $V(B)$ yields an order on $V(\base(B))$.
We can thus assume that also the vertices of $V(\base(B))$ are linearly ordered. 

\begin{definition}\label{def:chordal:span}
Consider a chordal graph $B$ and 
$Y \subseteq V(B)$.
We define the \emph{spanning signature} $\spann(B,Y) \subseteq E(\base(B))$ as follows. 
For each $S \in \minsep(B)$ let $C_{S,Y} \subseteq V(\base(B,S))$ be given by components from $\comp(B,S)$ with a non-empty intersection with~$Y$.
Let $P_{S,Y} \subseteq E(\base(B,S))$ be the path connecting the vertices of $C_{S,Y}$ in the increasing order.
Then $\spann(B,Y) = \bigcup_{S \in \minsep(B)} P_{S,Y}$.
\end{definition}

In other words, $\spann(B,Y)$ is a disjoint union of paths in the graph $\base(B)$, where each path encodes the relation between $Y$ and a respective \minver{}~in~$B$.

The next lemma states that under certain conditions replacing a vertex $v$ with a 
tree over $N(v)$ (in particular: a~path) does not affect acyclicity of the graph.
Note that due to the precondition $|N(u) \cap N(v)| \le 1$ 
we never attempt to insert an edge that is already present.

\begin{lemma}[\starr]
\label{lem:chordal:bip-transform}
Let $G$ be a bipartite graph with a vertex partition $V(G) = A \cup B$ so that for each distinct $u,v \in A$ it holds that $|N_G(u) \cap N_G(v)| \le 1$.
Consider a graph $G'$ obtained from $G$ by replacing each vertex $v \in A$ by an arbitrary tree on vertex set $N_G(v)$.
Then $G$ is acyclic if and only if $G'$ is acyclic.
\end{lemma}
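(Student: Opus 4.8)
The plan is to compare the \emph{cyclomatic numbers} of $G$ and $G'$. For any graph $H$ with $c(H)$ connected components set $\beta(H) = |E(H)| - |V(H)| + c(H)$; it is classical that $\beta(H)\ge 0$, with equality if and only if $H$ is a forest. Thus it suffices to prove $\beta(G) = \beta(G')$.

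First I would record the elementary parameter changes. Since $G$ is bipartite with sides $A$ and $B$, every edge has exactly one endpoint in $A$, and as $G$ is simple the number of edges incident to $v\in A$ is $|N_G(v)|$; hence $|E(G)| = \sum_{v\in A}|N_G(v)|$. The graph $G'$ is obtained by deleting every vertex of $A$ with its incident edges and adding, for each $v\in A$, a tree $T_v$ with vertex set $N_G(v)\subseteq B$, so $V(G') = V(G)\setminus A$ and $|V(G')| = |V(G)| - |A|$. The trees $T_v$ are pairwise edge-disjoint: an edge shared by $T_u$ and $T_v$ with $u\neq v$ would have both endpoints in $N_G(u)\cap N_G(v)$, contradicting $|N_G(u)\cap N_G(v)|\le 1$; moreover $G$ has no edge inside $B$, so $E(G')$ is exactly the disjoint union of the $E(T_v)$ (and $G'$ is simple). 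Writing $a_+$ for the number of vertices $v\in A$ with $N_G(v)\neq\emptyset$ and $a_0 = |A| - a_+$ for the number of isolated vertices of $A$ in $G$, each such $T_v$ has $|N_G(v)| - 1$ edges while the empty trees contribute none, so $|E(G')| = \sum_{v\in A}|N_G(v)| - a_+ = |E(G)| - a_+$.

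Next I would handle the component count, which is the only delicate point. The key claim is that two vertices $b,b'\in B$ lie in the same component of $G$ if and only if they lie in the same component of $G'$. For the forward direction, take a walk in $G$ from $b$ to $b'$; each occurrence of a vertex $v\in A$ is interior and is flanked by two vertices of $N_G(v) = V(T_v)$, so replacing each such sub-walk $b_i\,v\,b_j$ by the $T_v$-path from $b_i$ to $b_j$ yields a walk in $G'$. For the converse, every edge of $G'$ belongs to some $T_v$ and hence joins two vertices of $N_G(v)$, so it can be replaced by the length-two walk through $v$ in $G$. Since $V(G') = B$, the claim gives that $c(G')$ equals the number of components of $G$ that meet $B$, whereas $c(G)$ equals that same number plus $a_0$ (a component of $G$ disjoint from $B$ consists of independent $A$-vertices, hence is a single isolated vertex); thus $c(G) = c(G') + a_0$.

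Combining the three identities, $|E(G')| - |V(G')| + c(G') = (|E(G)| - a_+) - (|V(G)| - a_+ - a_0) + (c(G) - a_0) = |E(G)| - |V(G)| + c(G)$, i.e.\ $\beta(G) = \beta(G')$, and therefore $G$ is acyclic if and only if $G'$ is. The main obstacle to watch out for is precisely the bookkeeping for vertices of $A$ of degree $0$ or $1$: such a vertex is ``replaced'' by an empty tree or a single isolated vertex, which changes the vertex count and the component count in lockstep; tracking $a_0$ and $a_+$ separately is what makes the edge-, vertex-, and component-counts balance, and it is here (together with the walk-rerouting and the edge-disjointness of the trees) that the bipartiteness hypothesis and the common-neighbour hypothesis are actually used.
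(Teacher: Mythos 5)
Your proof is correct and rests on essentially the same Euler-characteristic counting as the paper's: the paper tracks the multiset of $|V(C)|-|E(C)|$ over connected components and shows that each replacement (deleting one vertex of $A$ together with its $d$ incident edges and inserting a $(d-1)$-edge tree, edge-disjoint from previously inserted trees precisely because $|N_G(u)\cap N_G(v)|\le 1$) preserves this multiset up to discarding $1$'s coming from isolated $A$-vertices, which is exactly your global identity $\beta(G)=\beta(G')$ decomposed into per-step increments, with your $a_0$/$a_+$ bookkeeping playing the role of the paper's isolated-vertex case. The one point where you are more explicit is the component count: the paper simply asserts that a replacement keeps the affected component connected, whereas your walk-rerouting argument proves the correspondence of components on $B$ directly.
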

\full{
\begin{proof}
For a graph $G$, let $\mu(G)$ be the multiset of integers $(|V(C)| - |E(C)|)_{C \in \mathcal{C}}$
where $\mathcal{C}$ is the family of connected components of $G$.
A graph $G$ is acyclic if and only if $\mu(G)$ contains only 1's.
We show that the described modifications, performed in an arbitrary order, does not affect $\mu(G)$ except for possibly removing some 1's.
Let $v \in V(G)$, $d = |N_G(v)|$, and $C$ denote the connected component of $v$.
If $v$ is isolated, then removing $v$ translates into removing single 1 from $\mu(G)$.
Otherwise, replacing $v$ with a tree on $N_G(v)$ transforms $C$ into another connected graph $C'$.
We remove one vertex and $d$ edges, so $|V(C)| - |E(C)|$ drops by $d-1$.
On the other hand, any tree over $N_G(C)$ has exactly $d-1$ edges.
Due to the assumption $|N_G(u) \cap N_G(v)| \le 1$ for $u \ne v$, every inserted tree is disjoint from previously inserted edges among $B$.
Hence, we insert exactly $d-1$ new edges and $|V(C)| - |E(C)| = |V(C')| - |E(C')|$.
We also only remove vertices from $A$ so we never remove an endpoint of an inserted edge.
The claim follows by observing that $\mu(G)$ contains an element different from 1 if and only if $\mu(G')$ does.
\end{proof}
}

This allows us to translate the criterion from \cref{lem:chordal:criterion1} into a more convenient one, in which the vertex set of the auxiliary graph depends only on $G[X]$ rather than $G$.

\begin{lemma}\label{lem:chordal:sign-proof}
Consider a graph $G$ with a vertex subset $X \sub V(G)$.  
Let $\mathcal{C}$ denote the family of connected components of $G-X$.
Suppose that for each $C \in \mathcal{C}$ the graph $G[X \cup C]$ is chordal.
Then $G$ is chordal if and only if:
\begin{enumerate}[nolistsep]
    \item the sets $\spann(G[X], N_G(C))$, for different $C \in \mathcal{C}$, are pairwise disjoint, 
    \item the union of sets $\spann(G[X], N_G(C))$, over $C \in \mathcal{C}$, forms an acyclic edge set in $E(\texttt{Base}(G[X]))$.
\end{enumerate}
\end{lemma}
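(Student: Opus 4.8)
The plan is to combine the acyclicity criterion of \cref{lem:chordal:criterion1} with the tree-replacement operation of \cref{lem:chordal:bip-transform}, handling one minimal vertex separator of $G[X]$ at a time and then assembling the pieces. Note first that $G[X]$ is chordal (it is an induced subgraph of $G[X\cup C]$ for any $C\in\mathcal C$, and if $\mathcal C=\emptyset$ the statement is trivial), so all the notions involved are well-defined. By \cref{lem:chordal:criterion1}, whose hypothesis is exactly ours, $G$ is chordal if and only if $\aux(G,X,S)$ is acyclic for every $S\in\minsep(G[X])$. Fix such an $S$. By \cref{def:chordal:aux}, $\aux(G,X,S)$ is the bipartite graph with parts $A=\mathcal C$ and $B=\comp(G[X],S)=V(\base(G[X],S))$, in which $C\in\mathcal C$ is adjacent to $D\in\comp(G[X],S)$ precisely when $V(D)\cap N_G(C)\neq\emptyset$; hence $N_{\aux(G,X,S)}(C)$ is exactly the set $C_{S,N_G(C)}$ of \cref{def:chordal:span}, and $P_{S,N_G(C)}$ is a spanning path, hence a spanning tree, of this neighborhood. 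The heart of the argument is the per-separator equivalence: $\aux(G,X,S)$ is acyclic if and only if the edge sets $P_{S,N_G(C)}$, over $C\in\mathcal C$, are pairwise disjoint and their union is acyclic in $E(\base(G[X],S))$.

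To prove this equivalence I would split into two cases. If $|N_{\aux(G,X,S)}(C)\cap N_{\aux(G,X,S)}(C')|\le 1$ for all distinct $C,C'\in\mathcal C$, then \cref{lem:chordal:bip-transform} applies with these $A,B$: replacing every $C$ by the tree $P_{S,N_G(C)}$ on its neighborhood produces a graph whose edge set is exactly $\bigcup_{C}P_{S,N_G(C)}$, these trees are automatically pairwise edge-disjoint (any two share at most one vertex), and \cref{lem:chordal:bip-transform} says this graph is acyclic iff $\aux(G,X,S)$ is; so both sides of the claimed equivalence hold simultaneously, with the disjointness condition trivially satisfied. In the remaining case some distinct $C,C'$ have two common neighbours $D_1\neq D_2$ in $\aux(G,X,S)$; then $\aux(G,X,S)$ contains a $4$-cycle through $C,D_1,C',D_2$, so it is not acyclic. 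On the other side, both $P_{S,N_G(C)}$ and $P_{S,N_G(C')}$ pass through $D_1$ and $D_2$, so either they share an edge (disjointness fails), or the two subpaths between $D_1$ and $D_2$ that they contain are edge-disjoint, and their union contains a cycle (acyclicity fails). Hence both sides of the equivalence are false, and it holds in this case as well. (Alternatively, the whole equivalence can be obtained uniformly by checking that the circuit rank $|E|-|V|+c$, with $c$ the number of connected components, agrees for $\aux(G,X,S)$ and for the multigraph $\biguplus_{C}P_{S,N_G(C)}$, using that $P_{S,N_G(C)}$ spans $N_{\aux(G,X,S)}(C)$ and that each $N_G(C)\neq\emptyset$.)

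It remains to assemble over all separators. Since $\base(G[X])$ is the disjoint union of the graphs $\base(G[X],S)$ over $S\in\minsep(G[X])$, and $\spann(G[X],Y)=\bigcup_{S}P_{S,Y}$, conditions~1 and~2 of the statement hold together if and only if, for every $S$, the sets $P_{S,N_G(C)}$ ($C\in\mathcal C$) are pairwise disjoint and their union is acyclic; indeed the claimed global conditions are a conjunction, over $S$, of exactly these two per-separator properties. Combining this with the per-separator equivalence of the previous paragraph and with \cref{lem:chordal:criterion1} gives the desired characterization of chordality. The main obstacle is the per-separator equivalence, and within it the degenerate case where the hypothesis of \cref{lem:chordal:bip-transform} fails; this is dealt with above by showing that both sides of the equivalence then fail, and it can also be sidestepped entirely by the circuit-rank computation.
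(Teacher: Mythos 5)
Your proof is correct and follows essentially the same route as the paper: reduce to acyclicity of the graphs $\aux(G,X,S)$ via \cref{lem:chordal:criterion1}, handle the case of two components sharing two neighbours by exhibiting a $4$-cycle on one side and a shared edge or cycle of paths on the other, and otherwise apply \cref{lem:chordal:bip-transform} to replace each component-vertex by its path $P_{S,N_G(C)}$. The only (harmless) difference is organizational: you run the argument separately for each $S\in\minsep(G[X])$ and then assemble using the disjoint-union structure of $\base(G[X])$, whereas the paper applies \cref{lem:chordal:bip-transform} once to the disjoint union of all the graphs $\aux(G,X,S)$.
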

\begin{proof}
From \cref{lem:chordal:criterion1}
we know that $G$ is chordal if and only if for each $S \in \texttt{MinSep}(G[X]) $ the graph $\texttt{Aux}(G,X,S)$ is acyclic.
We consider two cases.

First, suppose that for some $S \in \texttt{MinSep}(G[X]) $ there are two vertices representing distinct components $C_1,C_2 \in \mathcal{C}$ that share two common neighbors $x,y$ in $\texttt{Aux}(G,X,S)$. 
In other words, there are two components from $\comp(G[X],S)$ that intersect both $N_G(C_1)$ and $N_G(C_2)$.
Then $\texttt{Aux}(G,X,S)$ contains a cycle of length 4, so $G$ is not chordal. 
If  $\spann(G[X], N_G(C_1))$ and  $\spann(G[X], N_G(C_2))$ share an edge, then condition (1) fails, so suppose this is not the case.
But then the paths $P_{S,N(C_1)}$ and $P_{S,N(C_2)}$ (recall \cref{def:chordal:span}) are edge-disjoint and they both visit $x$ and $y$.
As a consequence, $x,y$ lie on a cycle contained in the edge set $\spann(G[X], N_G(C_1)) \cup  \spann(G[X], N_G(C_2))$ so condition (2) fails.
In summary, both $G$ is not chordal and one of conditions (1, 2) does~not~hold.

Next, suppose that for each $S \in \texttt{MinSep}(G[X]) $ and any two vertices representing distinct components $C_1,C_2 \in \mathcal{C}$
the intersection of their neighborhoods in $\texttt{Aux}(G,X,S)$ contains at most one element.
This implies condition (1).
Consider a graph $H$ given by a~disjoint union of all graphs $\texttt{Aux}(G,X,S)$ over $S \in \texttt{MinSep}(G[X]) $.
This graph meets the preconditions of \cref{lem:chordal:bip-transform}.
Replacing each $\mathcal{C}\text{-component}$-vertex in $\texttt{Aux}(G,X,S)$ by the path $P_{S,N(C)}$ 
transforms $H$ into a subgraph of $\base(G[X])$ with the edge set $\bigcup_{C \in \mathcal{C}} \spann(G[X], N_G(C))$.
By~\cref{lem:chordal:bip-transform}, this graph is acyclic if and only if the graph $H$ is.
By~\cref{lem:chordal:criterion1}, this condition is equivalent to $G$ being chordal.
The lemma~follows.
\end{proof}

{We are ready to define the graphic matroid encoding all the necessary information about where a~hole can appear after gluing two chordal graphs.}
Recall that a graphic matroid of a graph $G$ is a set system over $E(G)$ where a subset $S \sub E(G)$ is called {independent} when $S$ contains no cycles.
\short{More information about matroids can be found in the preliminaries of the full version of the article.}

\begin{definition}
For a graph $B$ on vertex set $X$ we define matroid $M_B$ as the graphic matroid of the graph $\base(B)$.
For a graph $G \in \mathcal{G}_{X,B}$
the signature  $\texttt{Sign}(G,X) \subseteq E(\base(B))$ is defined as a union of $\spann(B, N_G(C))$ over all connected components $C$ of $G-X$.
\end{definition}

It follows from \cref{lem:chordal:sign-proof} that whenever $G$ is chordal then $\sign(G,X)$ is acyclic and so it forms an independent set in the matroid $M_{G[X]}$.
We can now give the existential part of \cref{prop:intro:homo}.
The mapping $\sigma \colon \mathcal G_{X,B} \to 2^{E(M_B)}$ therein is given here as $\sigma(G) = \sign(G,X)$. 

\begin{lemma}[\starr]\label{lem:chordal:criterion}
Let $(G_1,X)$ and $(G_2,X)$ be compatible boundaried chordal graphs.
Then $G = (G_1,X) \oplus (G_2,X)$ is chordal if and only if the sets $\texttt{Sign}(G_1,X)$, $\texttt{Sign}(G_2,X) \sub E(\base(G[X]))$ are disjoint and $\texttt{Sign}(G_1,X) \cup \texttt{Sign}(G_2,X)$ is acyclic.

Furthermore, $\sign(G,X) = \texttt{Sign}(G_1,X) \cup \texttt{Sign}(G_2,X)$.
\end{lemma}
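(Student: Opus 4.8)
The plan is to reduce the statement to \cref{lem:chordal:sign-proof} by matching up the two conditions there with the two conditions in the present lemma, and then to read off the "furthermore" clause from the same correspondence. The key observation is that if $(G_1,X)$ and $(G_2,X)$ are compatible with $G_1,G_2 \in \gcal_{X,B}$, then $B = G_1[X] = G_2[X] = G[X]$, so all three graphs share the same induced subgraph on $X$; in particular $\base(G[X])$, the matroid $M_{G[X]}$, and the notion of a spanning signature $\spann(G[X],\cdot)$ are the same whether we view them from the side of $G_1$, $G_2$, or $G$. This is what makes the criterion of \cref{lem:chordal:sign-proof} applicable uniformly.

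First I would verify the hypothesis of \cref{lem:chordal:sign-proof} for the glued graph $G$: for each connected component $C$ of $G-X$ we need $G[X \cup C]$ to be chordal. Here I would use that $X$ separates $V(G_1) \sm X$ from $V(G_2) \sm X$ in $G$, so every connected component $C$ of $G-X$ is a connected component either of $G_1 - X$ or of $G_2 - X$, and moreover $N_G(C) = N_{G_i}(C)$ and $G[X \cup C] = G_i[X \cup C]$ for the appropriate $i \in \{1,2\}$. Since $G_i$ is chordal and $G_i[X \cup C]$ is an induced subgraph of $G_i$, it is chordal by \cref{lem:prelim:closed}. So the precondition of \cref{lem:chordal:sign-proof} holds for $G$.

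Next I would identify the family $\mathcal{C}$ of connected components of $G-X$ as the disjoint union $\mathcal{C}_1 \cup \mathcal{C}_2$, where $\mathcal{C}_i$ is the family of connected components of $G_i - X$. By definition, $\sign(G_i,X) = \bigcup_{C \in \mathcal{C}_i} \spann(G[X], N_G(C))$, so $\sign(G_1,X) \cup \sign(G_2,X) = \bigcup_{C \in \mathcal{C}} \spann(G[X], N_G(C))$, which is exactly the union appearing in conditions (1)--(2) of \cref{lem:chordal:sign-proof}. Now I would unpack: condition (1) of \cref{lem:chordal:sign-proof} says the sets $\spann(G[X], N_G(C))$ over $C \in \mathcal{C}$ are pairwise disjoint; splitting $\mathcal{C}$ into $\mathcal{C}_1, \mathcal{C}_2$ this is equivalent to "each of $\sign(G_1,X)$, $\sign(G_2,X)$ is itself a disjoint union of its constituent spanning signatures, and additionally $\sign(G_1,X) \cap \sign(G_2,X) = \emptyset$". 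Since $G_1, G_2$ are individually chordal, \cref{lem:chordal:sign-proof} applied to each $G_i$ separately already guarantees that $\sign(G_i,X)$ is a disjoint union of the $\spann(G[X], N_G(C))$, $C \in \mathcal{C}_i$, and that this union is acyclic; hence for the glued graph, condition (1) of \cref{lem:chordal:sign-proof} is equivalent to just $\sign(G_1,X) \cap \sign(G_2,X) = \emptyset$, and condition (2) — acyclicity of the whole union — is then equivalent to acyclicity of $\sign(G_1,X) \cup \sign(G_2,X)$ (the two pieces being already acyclic, a cycle in the union must use edges from both, but edge-disjointness is already forced). Putting this together with \cref{lem:chordal:sign-proof} gives: $G$ chordal $\iff$ $\sign(G_1,X) \cap \sign(G_2,X) = \emptyset$ and $\sign(G_1,X) \cup \sign(G_2,X)$ acyclic, which is the claim. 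The "furthermore" clause, $\sign(G,X) = \sign(G_1,X) \cup \sign(G_2,X)$, is immediate from the displayed decomposition of $\mathcal{C}$ and the definition of $\sign$, and does not even require chordality of $G$.

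The main obstacle I anticipate is the bookkeeping around which graph the neighborhoods, separators, and $\spann$/$\base$ constructions are computed in: one must be careful that $N_G(C) = N_{G_i}(C)$ for $C \in \mathcal{C}_i$ (which holds because $X$ separates the two sides, so no vertex of $G_{3-i} - X$ can be adjacent to $C$), and that $\minsep(G[X]) = \minsep(B)$ so that $\base(G[X]) = \base(B) = \base(G_i[X])$ literally coincide as graphs (not merely isomorphically), which is what lets us regard all three signatures as subsets of one common ground set $E(\base(B))$ and take unions and intersections of them. Once this identification is nailed down, the argument is a routine translation of \cref{lem:chordal:sign-proof}.
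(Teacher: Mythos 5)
Your proposal is correct and follows essentially the same route as the paper: both apply \cref{lem:chordal:sign-proof} three times (to $G_1$, $G_2$, and the glued graph $G$), using that the components of $G-X$ are exactly those of $G_1-X$ and $G_2-X$ with unchanged neighborhoods, so that the disjointness/acyclicity conditions for $G$ collapse to disjointness of the two signatures plus acyclicity of their union, with the ``furthermore'' clause read off from the same decomposition. Your extra care about identifying $\base(G[X])$ across the three graphs and verifying the precondition of \cref{lem:chordal:sign-proof} is a fine elaboration of steps the paper treats as immediate.
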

\full{
\begin{proof}
Let $C_1, C_2, \dots, C_\ell$ denote the connected components of $G_1 - X$ and 
 $D_1, D_2, \dots, D_r$ denote the connected components of $G_2 - X$.
Clearly all graphs $G_1[X \cup C_i]$ and $G_2[X \cup D_i]$ are chordal.
Let $\mathcal{S}_1$ be the family of sets $\{\spann(G[X], N_{G_1}(C_i)\}_{i=1}^\ell$
and $\mathcal{S}_2$ be $\{\spann(G[X], N_{G_2}(D_i)\}_{i=1}^r$.
It follows from \cref{lem:chordal:sign-proof} that the sets in $\mathcal{S}_1$
are pairwise disjoint and their union, which is $\sign(G_1,X)$, is an acyclic edge set in $E(\base(G[X]))$.
The same holds for $\mathcal{S}_2$ and $\sign(G_2,X)$.
Again by \cref{lem:chordal:sign-proof}, the graph $G = (G_1,X) \oplus (G_2,X)$ is chordal if and only if the sets in the family $\mathcal{S}_1 \cup \mathcal{S}_2$ are pairwise disjoint and their sum is acyclic.
This is equivalent to the condition that $\sign(G_1,X), \sign(G_2,X)$ are disjoint and $\sign(G_1,X) \cup \sign(G_2,X)$ is acyclic, as intended.
By definition, $\sign(G,X)$ is the union of $\spann(G[X], N_{G}(C))$ over all connected components $C$ of $G - X$.
This union equals $\sign(G_1,X) \cup \sign(G_2,X)$.
\end{proof} }

The following lemma is the main ingredient in the running time analysis.
As the bound on the representative family's size is exponential in the rank of a matroid\footnote{We remark that Fomin et al.~\cite{representative-efficient} also considered a case when the rank might be large and the exponential term is governed by a different parameter but it is not applicable in our case.},
it is necessary to bound the rank~of~$M_B$.
It is known that the number of minimal vertex separators in a chordal graph is bounded by the number of vertices but we need a strengthening of  this fact. 


\begin{lemma}\label{lem:chordal:rank}
For a non-empty chordal graph $B$, the rank of $M_B$ is at most $|V(B)| - 1$.
\end{lemma}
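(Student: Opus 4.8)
The plan is to bound the rank of $M_B$ by exhibiting a spanning forest of $\base(B)$ with at most $|V(B)|-1$ edges. Recall that $M_B$ is the graphic matroid of $\base(B)$, whose vertex set is $\bigsqcup_{S\in\minsep(B)}\comp(B,S)$ and which is a disjoint union of cliques, one clique $\base(B,S)$ per minimal vertex separator $S$. Since the rank of a graphic matroid is $|V|$ minus the number of connected components, it suffices to prove that the number of edges in a spanning forest of $\base(B)$ — equivalently $|V(\base(B))| - (\text{number of components})$ — is at most $|V(B)|-1$. The number of components is at least $|\minsep(B)|$ (each clique $\base(B,S)$ contributes one, assuming it is non-empty), so it suffices to show
\[
\sum_{S\in\minsep(B)}\big(|\comp(B,S)|-1\big)\;\le\;|V(B)|-1.
\]

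I would prove this inequality by induction on $|V(B)|$, peeling off a simplicial vertex, which exists by \cref{lem:prelim:simplicial-exists}. Let $v$ be simplicial in $B$ and set $B'=B-v$. By \cref{lem:chordal:separator-recurse}, every $S\in\minsep(B)$ with $S\ne N_B(v)$ remains a minimal vertex separator in $B'$ with $|\comp(B,S)|=|\comp(B',S)|$; conversely one checks that $\minsep(B')\subseteq\minsep(B)\cup\{\text{sets }\subseteq N_B(v)\}$ in a way that lets us account for every term. The only potentially problematic separator is $S^\star=N_B(v)$ itself (if it lies in $\minsep(B)$): removing $v$ may merge components of $\comp(B,S^\star)$ or turn the vertex $\{v\}$ into an isolated simplicial vertex, so the term $|\comp(B,S^\star)|-1$ could in principle be larger than $|\comp(B',S^\star)|-1$. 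The point is that the ``extra'' contribution of $v$ to these separator-counts is paid for exactly by the one new vertex $v$ that $B$ has over $B'$. Concretely, I would argue that $v$ belongs to at most one component of $\comp(B,S)$ for each $S$ — and in fact $v$ can be a genuine element of $\comp(B,S)$ (i.e., with $N_B(\text{that component})=S$) for at most one separator $S$, namely $S=N_B(v)$, since $v$ simplicial forces $N_B(v)$ to be a clique and the component of $v$ in $B-S$ has neighborhood $\subseteq N_B(v)$. Hence passing from $B'$ to $B$ increases $\sum_S(|\comp(B,S)|-1)$ by at most $1$, matching the increase of $|V(B)|-1$ by $1$, and the base case $|V(B)|=1$ gives $\minsep(B)=\emptyset$ so the left side is $0=|V(B)|-1$.

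Alternatively — and this may be cleaner to write — one can use the classical fact that a chordal graph $B$ on $n$ vertices has at most $n-1$ minimal vertex separators, together with the sharper clique-tree description: fix a clique tree $T$ of $B$, where minimal vertex separators correspond exactly to the intersections $K_a\cap K_b$ over edges $ab$ of $T$, and for a fixed separator $S$ the components of $\comp(B,S)$ correspond to the connected pieces of $T$ obtained by deleting all tree-edges labelled $S$. Summing $|\comp(B,S)|-1$ over $S$ then counts, with multiplicity, the tree-edges of $T$ grouped by label, which is at most $|E(T)| = |V(T)|-1$; and since a clique tree can be chosen with $|V(T)| \le n$ maximal cliques, this is at most $n-1$. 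Either route yields $\mathrm{rank}(M_B)=|V(\base(B))|-(\#\text{components})\le |V(B)|-1$.

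The main obstacle is the bookkeeping for the distinguished separator $S^\star=N_B(v)$ in the inductive step: one must verify that the removal of the simplicial vertex $v$ cannot simultaneously inflate the $\comp$-count for two different separators, and must handle the boundary cases where $\base(B,S^\star)$ becomes empty in $B'$ or where $\{v\}$ is itself a (simplicial, hence irrelevant) singleton component. \cref{lem:chordal:separator-recurse} and \cref{lem:prelim:separator-minimal-simplicial,lem:prelim:separator-minimal-comp} are exactly the tools that pin this down, so the argument goes through once those cases are dispatched carefully.
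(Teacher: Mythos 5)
Your proposal is correct in substance and follows essentially the same route as the paper: reduce the rank bound to $\sum_{S\in\minsep(B)}\br{|\comp(B,S)|-1}\le|V(B)|-1$ and prove that inequality by induction, peeling off a simplicial vertex $v$ and invoking \cref{lem:chordal:separator-recurse} for every $S\ne N_B(v)$, with a special accounting for $S^\star=N_B(v)$; the paper does exactly this (splitting additionally into the connected case and a short separate computation for disconnected $B$, where the empty separator contributes $t-1$). One caveat about your justification of the $S^\star$ step: the claim that the component of $v$ in $B-S$ has neighbourhood contained in $N_B(v)$, so that $v$ lies in a ``genuine'' component only for $S=N_B(v)$, is false --- in the path $v\!-\!b\!-\!a\!-\!c$ (with $v$ simplicial, $N_B(v)=\{b\}$) and $S=\{a\}$, the component $\{v,b\}$ containing $v$ has neighbourhood $\{a\}\ne N_B(v)$. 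Fortunately this claim is not needed: for $S\ne N_B(v)$ the equality of counts is precisely \cref{lem:chordal:separator-recurse}, and for $S^\star=N_B(v)$ the right observation is that $\{v\}$ is itself a component of $B-S^\star$ and, since all neighbours of $v$ lie in $S^\star$, deleting $v$ leaves every other component of $B-S^\star$ and its neighbourhood untouched, so $\comp(B,S^\star)=\comp(B-v,S^\star)\cup\{\{v\}\}$; combined with \cref{lem:prelim:separator-minimal-comp} (a set is a \minver{} iff at least two components have it as their full neighbourhood) this gives the paper's three-case check ($|\comp(B,S^\star)|=1,2,\ge 3$) showing the $S^\star$-term grows by at most one, also covering the case $N_B(v)=\emptyset$ so your unified induction handles disconnected $B$ as well. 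Your alternative clique-tree argument (minimal separators as edge labels of a clique tree with at most $|V(B)|$ nodes, components of $\comp(B,S)$ as the pieces after deleting the $S$-labelled tree edges) is a valid and genuinely different route, trading the elementary simplicial-vertex induction for the classical clique-tree machinery, which the paper deliberately avoids since \cref{lem:chordal:separator-recurse} is needed anyway for the enumeration in \cref{lem:chordal:sign-poly}.
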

\begin{proof}
Let $k = |V(B)|$.
The rank of $M_B$ equals the size of a spanning forest in $\base(B)$.
The vertex sets of connected components of $\base(B)$ are the sets $\comp(B,S)$ for $S \in \minsep(B)$.
Therefore it suffices to estimate
\[
\sum_{S \in \minsep(B)} (|\comp(B,S)| - 1) \le k - 1.
\]

We first prove the inequality for connected chordal graphs by induction on $k$.
For $k = 1$ the sum is zero.
Consider $k > 1$.
By \cref{lem:prelim:simplicial-exists}, $B$ contains a simplicial vertex.
Let $v$ be a simplicial vertex in $B$ and suppose that the claim holds for the graph $B-v$ (which is connected).
Let $S$ be a \minver in $B$.
By \cref{lem:chordal:separator-recurse} when 
$S \ne N_B(v)$ then $S \in \minsep(B-v)$
and $|\comp(B,S)| = |\comp(B-v,S)|$.
In that case the summand coming from $S$ is the same for $B$ and $B-v$.

It remains to handle the case $S = N_B(v)$.
Clearly, $\{v\} \in \comp(B,S)$.
If $|\comp(B,S)| = 1$ then $S \not\in \minsep(B)$ (\cref{lem:prelim:separator-minimal-comp}).
If $|\comp(B,S)| = 2$ then $S \in \minsep(B) \sm \minsep(B-v)$ and the sum grows by one.
If $|\comp(B,S)| \ge 3$ then $S \in \minsep(B) \cap \minsep(B-v)$ and $|\comp(B,S)| = |\comp(B-v,S)|+1$
so the sum again grows by one.
This concludes the proof of the inequality for connected chordal graphs.

When $B$ is disconnected, let $B_1, B_2, \dots, B_t$ denote its connected components and let $k_i = |V(B_i)|$. 
We have $|\comp(B,\emptyset)|-1 = t-1$.
Together with the sums for $B_1, B_2, \dots, B_t$ the total sum is at most $\sum_{i=1}^t k_i -t + t-1 = k - 1$.
\end{proof}

The last thing to be checked is whether we can compute the signatures efficiently.
To this end, we enumerate minimal vertex separators using 
\cref{lem:chordal:separator-recurse}.

\begin{lemma}[\starr]\label{lem:chordal:sign-poly}
There is a polynomial-time algorithm that, given a graph $G$ with a vertex subset $X \sub V(G)$ such that $G[X]$ is chordal, computes $\sign(G,X)$.
\end{lemma}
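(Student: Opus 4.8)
The plan is to translate \cref{def:chordal:span} directly into an algorithm, the only nontrivial ingredient being an efficient enumeration of the minimal vertex separators of the chordal graph $B \coloneqq G[X]$. First I would compute, in linear time, the connected components $\mathcal{C}$ of $G - X$ and, for each $C \in \mathcal{C}$, the set $N_G(C) \cap X$. Assuming that $\minsep(B)$ is available together with the set $\comp(B,S)$ for every $S \in \minsep(B)$, the rest is bookkeeping that follows the definitions verbatim: build $\base(B)$ on the disjoint vertex sets $\comp(B,S)$; for each $C \in \mathcal{C}$ and each $S \in \minsep(B)$ pick out the members of $\comp(B,S)$ intersecting $N_G(C)$, sort them according to the fixed lexicographic order on subsets of $V(B)$, and add the corresponding path $P_{S, N_G(C)}$; the union over $S$ is $\spann(B, N_G(C))$, and the union over $C \in \mathcal{C}$ is $\sign(G,X)$. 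The rank of $M_B$ equals $\sum_{S \in \minsep(B)}(|\comp(B,S)| - 1)$ and is at most $|X| - 1$ by \cref{lem:chordal:rank}, so $|V(\base(B))| = \sum_{S}|\comp(B,S)| \le |X| - 1 + |\minsep(B)| = \Oh(|X|)$; hence all of these manipulations run in time polynomial in $|V(G)|$ once $\minsep(B)$ is known.

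It remains to enumerate $\minsep(B)$, together with the sets $\comp(B,S)$, in polynomial time, which I would do by recursion on $|V(B)|$. If $B$ is empty, return the empty list. Otherwise find a simplicial vertex $v$ of $B$ (it exists by \cref{lem:prelim:simplicial-exists} and can be located in polynomial time), recursively enumerate $\minsep(B - v)$ (the graph $B-v$ is again chordal by \cref{lem:prelim:closed}), form the candidate family $\mathcal{K} = \minsep(B - v) \cup \{N_B(v)\}$, and return precisely those $S \in \mathcal{K}$ for which $B - S$ has at least two connected components $C$ with $N_B(C) = S$, that is, $|\comp(B,S)| \ge 2$; the sets $\comp(B,S)$ for the surviving $S$ are recorded along the way. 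Correctness rests on \cref{lem:prelim:separator-minimal-comp}: a set $S$ is a minimal vertex separator of $B$ if and only if $|\comp(B,S)| \ge 2$, which gives soundness of the filter and also shows $N_B(v)$ is returned whenever $N_B(v) \in \minsep(B)$. For the remaining direction, if $S \in \minsep(B)$ with $S \ne N_B(v)$, then \cref{lem:chordal:separator-recurse} yields $S \in \minsep(B - v) \subseteq \mathcal{K}$, so $S$ is a candidate and, being a minimal vertex separator, passes the filter. Thus the procedure outputs exactly $\minsep(B)$.

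For the running time, each recursive call introduces at most one new candidate, namely $N_B(v)$, so $|\minsep(B)| \le |V(B)| - 1$ (which, incidentally, re-derives the bound used in \cref{lem:chordal:rank}); the recursion has depth $|X|$, and at each level we test simpliciality, compute $N_B(v)$, and, for each of the $\Oh(|X|)$ candidates $S$, compute the components of $B - S$ and their neighborhoods, all polynomial in $|V(G)|$. The expected main obstacle is precisely this enumeration step, and the observation that makes it go through is that deleting a simplicial vertex $v$ can affect only the single potential separator $N_B(v)$ — exactly the content of \cref{lem:chordal:separator-recurse} — which keeps both the candidate list and the recursion tree small. Everything downstream, namely assembling $\base(B)$, forming the spanning signatures of \cref{def:chordal:span}, and taking the appropriate unions, is a mechanical unfolding of the definitions and contributes only polynomial overhead.
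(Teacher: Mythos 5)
Your proposal is correct and follows essentially the same route as the paper's proof: enumerate $\minsep(G[X])$ by recursively removing a simplicial vertex $v$, using \cref{lem:chordal:separator-recurse} to argue that the only candidates are $\minsep(B-v)\cup\{N_B(v)\}$, filtering candidates via the criterion of \cref{lem:prelim:separator-minimal-comp} (which also yields the sets $\comp(B,S)$), and then assembling $\sign(G,X)$ directly from \cref{def:chordal:span}. The extra bookkeeping you include (size bounds on $\base(B)$, explicit recursion analysis) is consistent with, and slightly more detailed than, the paper's argument.
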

\full{
\begin{proof}
Let $B = G[X]$.
We show that one can enumerate $\minsep(B)$ in polynomial time.
By \cref{lem:prelim:simplicial-exists} the graph $B$ contains a simplicial vertex $v$.
This vertex can be found in polynomial time.
We recursively enumerate $\minsep(B-v)$.
By \cref{lem:chordal:separator-recurse}, if $S \in \minsep(B)$ then either $S \in \minsep(B-v)$ or $S = N_B(v)$, so the output size increases by at most one.
We can verify which elements of $\minsep(B-v) \cup \{N_B(v)\}$ are minimal vertex separators in $G$ using \cref{lem:prelim:separator-minimal-comp};
as a byproduct we obtain the sets $\comp(B, S)$.
It~remains to directly follow the definition of $\sign(G,X)$.
\end{proof} }

Lemmas \ref{lem:chordal:criterion}, \ref{lem:chordal:rank}, and \ref{lem:chordal:sign-poly} entail \cref{prop:intro:homo} but instead of working with that abstract statement we use these three lemmas directly when describing the final algorithm. 
\short{The results of this section allow us to employ the framework of representative families in order to truncate the number of partial solutions stored at a node of a tree decomposition to $2^{\Oh(\tw)}$.
The dynamic programming algorithm follows the lines of proofs in~\cite{representative-product} and is described in detail in the full version.
The main technical hurdle comes from the necessity to store only the condensed counterparts of the partial solutions.
The condensed graphs have only $\Oh(\tw)$ vertices each, what is the key to obtain a linear dependency on $|V(G)|$.
}

\full{
\subparagraph{Representative families for boundaried graphs.}

We  translate the framework of representative families from the language of matroids to chordal graphs and gluing.

\begin{definition}
Consider a family of chordal graphs
$\mathcal{G} \sub \mathcal{G}_{X,B}$ for some pair $(X,B)$
and a non-negative weight function $w: \mathcal{G} \to \mathbb{N}$.
We say that a subfamily $\widehat {\mathcal{G}} \sub \mathcal{G}$ is max-representative for $\mathcal{G}$ (and write $\widehat \gcal \sub_{\mathrm{maxrep}} \gcal$) if the following holds.
For every graph $H \in \mathcal{G}_{X,B}$, if there exist $G \in \mathcal{G}$ so that $(H,X) \oplus (G,X)$ is chordal, then there exists
$\widehat G \in \widehat{\mathcal{G}}$ so that  $(H,X) \oplus (\widehat G,X)$ is chordal and $w(\widehat G) \ge w(G)$.
\end{definition}

\begin{lemma}\label{lem:chordal:repr-from-graphic}
Consider a family of chordal graphs
$\mathcal{G} \sub \mathcal{G}_{X,B}$ for some pair $(X,B)$
and a non-negative weight function $w: \mathcal{G} \to \mathbb{N}$. 
Suppose that the matroid $M_B$ has rank $r$.
Let $\scal = \{\sign(G,X) \mid G \in \mathcal{G}\}$ and $\tau \colon \scal \to \mathcal{G}$ be given as $\tau(Y) = \mathrm{argmax}\, \{w(G) \mid G \in \mathcal{G},\, \sign(G,X) = Y\}$.
Suppose that $\widehat \scal \sub^r_{\mathrm{maxrep}} \scal$ with respect to matroid $M_B$ and weight function $w_\scal(Y) = w(\tau(Y))$.
Then $\tau(\widehat \scal) \sub_{\mathrm{maxrep}} \mathcal{G}$.
\end{lemma}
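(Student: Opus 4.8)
The plan is to transcribe the definition of a max-representative family from the matroid $M_B$ back to boundaried graphs, using \cref{lem:chordal:criterion} as the dictionary. Fix an arbitrary $H \in \mathcal{G}_{X,B}$ and some $G \in \mathcal{G}$ with $(H,X)\oplus(G,X)$ chordal; I must produce $\widehat G \in \tau(\widehat\scal)$ such that $(H,X)\oplus(\widehat G,X)$ is chordal and $w(\widehat G)\ge w(G)$. First I would record that $H$ itself is chordal: the gluing introduces no edge between $V(H)\sm X$ and $V(G)\sm X$, so $H$ is an induced subgraph of the chordal graph $(H,X)\oplus(G,X)$, hence chordal by \cref{lem:prelim:closed}. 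Thus $(H,X)$ and $(G,X)$ are compatible boundaried \emph{chordal} graphs, and since $H,G\in\mathcal{G}_{X,B}$ we have $H[X]=G[X]=B$, so $\base(H[X])=\base(G[X])=\base(B)$ and both $\sign(H,X)$ and $\sign(G,X)$ are subsets of the common ground set $E(\base(B))=E(M_B)$. Applying \cref{lem:chordal:criterion} to the chordal gluing $(H,X)\oplus(G,X)$ yields that $\sign(H,X)$ and $\sign(G,X)$ are disjoint and $\sign(H,X)\cup\sign(G,X)$ is acyclic, i.e.\ independent in $M_B$.

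Next I would invoke the representative family. Put $Z:=\sign(H,X)$ and $Y:=\sign(G,X)\in\scal$. Since $Z$ is acyclic it is independent in $M_B$, so $|Z|\le r$; moreover $Y\cap Z=\emptyset$ and $Y\cup Z$ is independent in $M_B$. As $\widehat\scal \sub^r_{\mathrm{maxrep}} \scal$ with respect to $M_B$ and $w_\scal$, there is $\widehat Y\in\widehat\scal$ with $\widehat Y\cap Z=\emptyset$, $\widehat Y\cup Z$ independent in $M_B$, and $w_\scal(\widehat Y)\ge w_\scal(Y)$. Set $\widehat G:=\tau(\widehat Y)\in\tau(\widehat\scal)$. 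By the definition of $\tau$ we have $\sign(\widehat G,X)=\widehat Y$ and $w(\widehat G)=w_\scal(\widehat Y)\ge w_\scal(Y)=w(\tau(Y))\ge w(G)$, where the last inequality holds because $\tau(Y)$ is a weight-maximizer among the graphs of $\mathcal{G}$ with signature $Y$ and $G$ is one of them.

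Finally I would translate back through \cref{lem:chordal:criterion}: $H$ and $\widehat G$ are both chordal ($\widehat G\in\mathcal{G}$) and compatible, so the lemma says $(H,X)\oplus(\widehat G,X)$ is chordal if and only if $\sign(H,X)=Z$ and $\sign(\widehat G,X)=\widehat Y$ are disjoint and $Z\cup\widehat Y$ is acyclic, and both of these were established in the previous step; this completes the argument. I expect the only delicate point to be the opening observation that $H$ is forced to be chordal — without it \cref{lem:chordal:criterion}, which is stated for pairs of chordal boundaried graphs, could not be applied in either direction — together with the bookkeeping that makes the three signatures $\sign(H,X)$, $\sign(G,X)$, $\sign(\widehat G,X)$ live in the single ground set $E(\base(B))$; once these are settled the proof is a mechanical unfolding of the matroidal definition of a max-representative family.
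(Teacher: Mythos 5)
Your proposal is correct and follows essentially the same route as the paper's own proof: apply \cref{lem:chordal:criterion} to translate chordality of the gluing into disjointness/acyclicity of signatures in $M_B$, invoke the max $r$-representative family to replace $\sign(G,X)$, and translate back via \cref{lem:chordal:criterion}, with the weight chain $w(\widehat G)=w_\scal(\widehat Y)\ge w_\scal(Y)\ge w(G)$. The only difference is that you spell out details the paper leaves implicit (why $H$ is chordal, why $|\sign(H,X)|\le r$, and that all signatures live in $E(\base(B))$), which is fine.
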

\begin{proof}
Let $H \in \mathcal{G}_{X,B}$ and $G \in \mathcal{G}$ be such that $(H,X) \oplus (G,X)$ is chordal.
Clearly $H$ must be chordal as well.
The set $S = \texttt{Sign}(G,X)$ belongs to $\scal$ and $w_\scal(S) \ge w(G)$.
By \cref{lem:chordal:criterion} we have that
$\texttt{Sign}(H,X) \cap S= \emptyset$ and $\texttt{Sign}(H,X) \cup S$ is acyclic.
By the definition of a max-representative family for a graphic matroid, there exists a set 
$\widehat S \in \widehat \scal$ so that 
$\texttt{Sign}(H,X) \cap \widehat S= \emptyset$, $\texttt{Sign}(H,X) \cup \widehat S$ is acyclic, and $w_\scal(\widehat S) \ge w_\scal(S)$.
Let $\widehat G = \tau(\widehat S)$.
Again by  \cref{lem:chordal:criterion} we infer that $(H,X) \oplus (\widehat G,X)$ is chordal.
Finally, $w(\widehat G) = w_\scal(\widehat S) \ge w_\scal(S) \ge w(G)$.
\end{proof}

\begin{lemma}
\label{lem:tw:repr:efficient}
Consider a family of chordal graphs
$\mathcal{G} \sub \mathcal{G}_{X,B}$ for some pair $(X,B)$
and a non-negative weight function $w: \mathcal{G} \to \mathbb{N}$.
Suppose that $|X| = k > 0$, $|\mathcal G| \le 2^{k+c}$, and each $G \in \mathcal{G}$ has at most $ck$ vertices.
Here, $c$ is a fixed constant.
Then a max-representative family $\widehat G \sub \widehat{\mathcal{G}}$ for $\mathcal{G}$ of size at most $2^{k-1}$ can be computed in time $\Oh(2^{\omega k}\cdot k^{\omega})$.
\end{lemma}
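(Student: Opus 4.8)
The plan is to reduce the statement to the matroid-level machinery established earlier, in particular \cref{lem:chordal:repr-from-graphic}, \cref{lem:chordal:rank}, \cref{lem:chordal:sign-poly}, and the efficient construction of representative families in graphic matroids from \cref{lem:prelim:repr:efficient-final}. First I would invoke \cref{lem:chordal:rank}: since $B=G[X]$ is a nonempty chordal graph on $k$ vertices, the matroid $M_B$ has rank $r\le k-1$. (If $G[X]$ fails to be chordal then $\gcal$ is empty and the claim is trivial; note that any $G\in\gcal$ is chordal, so $G[X]$ is chordal whenever $\gcal\ne\emptyset$.) Next, using \cref{lem:chordal:sign-poly}, for each $G\in\gcal$ I would compute $\sign(G,X)\sub E(\base(B))$ in polynomial time, forming the family $\scal=\{\sign(G,X)\mid G\in\gcal\}$ over the ground set $E(M_B)$, together with the weight function $w_\scal(Y)=\max\{w(G)\mid G\in\gcal,\ \sign(G,X)=Y\}$ and a witness map $\tau$ choosing, for each $Y\in\scal$, a graph attaining this maximum. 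Since the sets in $\scal$ are signatures of chordal graphs, they are independent in $M_B$ (as noted after the definition of $\sign$), so $\scal$ is a family of independent sets in a graphic matroid of rank at most $k-1$.

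Now I would apply \cref{lem:prelim:repr:efficient-final} to $M_B$ and $\scal$ (with the weight function $w_\scal$, using the weighted variant of the representative-families construction) to obtain $\widehat\scal\sub^{r}_{\mathrm{maxrep}}\scal$ of size at most $2^{r}\le 2^{k-1}$, computed in time $\Oh\!\br{|\scal|\cdot 2^{(\omega-1)r}\cdot r^{\omega}}$. Since $|\scal|\le|\gcal|\le 2^{k+c}$ and $r\le k-1$, this is $\Oh\!\br{2^{k+c}\cdot 2^{(\omega-1)k}\cdot k^{\omega}}=\Oh\!\br{2^{\omega k}\cdot k^{\omega}}$, absorbing the constant $c$ into the $\Oh(\cdot)$. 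Then I would set $\widehat\gcal=\tau(\widehat\scal)\sub\gcal$, which has size at most $|\widehat\scal|\le 2^{k-1}$, and invoke \cref{lem:chordal:repr-from-graphic} (with $r$ in place of the matroid rank there, which is permitted since $\widehat\scal\sub^{r}_{\mathrm{maxrep}}\scal$ exactly matches its hypothesis) to conclude $\widehat\gcal\sub_{\mathrm{maxrep}}\gcal$.

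The remaining bookkeeping concerns the running time of the auxiliary steps: constructing $\base(B)$ and a representation of $M_B$ over $\mathbb{F}_2$ takes polynomial time by \cref{lem:prelim:graphic} and \cref{lem:chordal:sign-poly}; computing all $|\gcal|\le 2^{k+c}$ signatures, each graph $G$ having $\Oh(k)$ vertices, costs $2^{\Oh(k)}\cdot\poly(k)$, which is dominated by the bound above; and grouping signatures to build $w_\scal$ and $\tau$ (e.g. by sorting the at most $2^{k+c}$ signature sets) is likewise $2^{\Oh(k)}\cdot\poly(k)$. I do not expect a genuine obstacle here: the only point requiring care is that \cref{lem:prelim:repr:efficient-final} as stated handles the unweighted case, so I would rely on the remark (quoted in the excerpt before \cref{lem:prelim:repr:transitive}) that the construction of Fomin et al.\ works verbatim for the min/max-weighted variant, and the fact that the rank $r$ may be strictly less than $k-1$ only improves the bound. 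Hence the claimed time $\Oh(2^{\omega k}\cdot k^{\omega})$ and size $2^{k-1}$ follow.
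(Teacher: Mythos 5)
Your proposal is correct and follows essentially the same route as the paper's proof: compute signatures via \cref{lem:chordal:sign-poly}, bound the rank of $M_B$ by $k-1$ via \cref{lem:chordal:rank}, apply \cref{lem:prelim:repr:efficient-final} to the signature family, and transfer back through \cref{lem:chordal:repr-from-graphic}, with the same running-time accounting. The only cosmetic point is that \cref{lem:prelim:repr:efficient-final} is already stated in the max-representative (weighted) form, so no extra appeal to the weighted-variant remark is needed.
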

\begin{proof}
Let $\scal = \{\sign(G,X) \mid G \in \mathcal{G}\}$.
This family can be computed in time $2^k\cdot k^{\Oh(1)}$ thanks to \cref{lem:chordal:sign-poly}.
By \cref{lem:chordal:rank} the rank $r$ of $M_B$ is at most $k-1$.
By \cref{lem:chordal:repr-from-graphic} if suffices to find a max $r$-representative family for $\scal$ of the requested size.
This can be done with \cref{lem:prelim:repr:efficient-final}
in time $\Oh(2^{\omega k}\cdot k^{\omega})$.
Since $2 < 2^\omega$ the latter term is dominating in the running time.
\end{proof}

We also provide an efficient algorithm for processing large families of graphs that arise when handling a join node.
For a graph family $\gcal$ we write $\gcal \cap \texttt{chordal}$ to indicate the subfamily of chordal graphs in $\gcal$.

\begin{lemma}
\label{lem:tw:repr:product}
Consider two families of chordal graphs
$\mathcal{G}_1, \mathcal{G}_2 \sub \mathcal{G}_{X,B}$ for some pair $(X,B)$.
Suppose that $|X| = k > 0$, $|\mathcal G_1|, |\mathcal G_2| \le 2^{k}$, and each $G \in \mathcal{G}_1 \cup  \mathcal{G}_2$ has $\Oh(k)$ vertices.
Let $\mathcal{G} = \{ (G_1,X) \oplus (G_2,X) \mid G_1 \in \mathcal{G}_1, G_2 \in \mathcal{G}_2\} \cap \texttt{chordal}$
and $w: \mathcal{G} \to \mathbb{N}$ be a non-negative weight function.
Then $\widehat {\mathcal{G}} \sub_{\mathrm{maxrep}} \mathcal{G}$
of size at most $2^{k-1}$ can be computed in time $\Oh(2^{(\omega - 1)k}3^k \cdot k^{\omega})$ when given $\gcal_1, \gcal_2, \gcal, w$.
\end{lemma}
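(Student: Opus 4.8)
The plan is to follow the proof of \cref{lem:tw:repr:efficient} almost verbatim, adding one structural observation: by \cref{lem:chordal:criterion} the family of signatures of the graphs in $\gcal$ is a \emph{product family} in the graphic matroid $M_B$, so the call to \cref{lem:prelim:repr:efficient-final} can be replaced by the faster product-family computation behind \cref{lem:prelim:repr:product-final}. This is exactly what turns the $\Oh(4^k\cdot 2^{(\omega-1)k}k^\omega)$ running time of the naive approach (treating $\gcal$ as an unstructured family of size up to $4^k$) into the claimed $\Oh(2^{(\omega-1)k}3^k\cdot k^\omega)$.

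First I would put $\scal_i=\{\sign(G,X)\mid G\in\gcal_i\}$ for $i\in\{1,2\}$. Since each $G\in\gcal_i$ is chordal, $\sign(G,X)$ is acyclic, hence an independent set of $M_B$; thus $\scal_i$ is a family of independent sets with $|\scal_i|\le|\gcal_i|\le 2^k$, computable in time $\Oh(2^k\poly(k))$ because every $G\in\gcal_i$ has $\Oh(k)$ vertices and \cref{lem:chordal:sign-poly} runs in polynomial time. By \cref{lem:chordal:criterion}, for $G_1\in\gcal_1$, $G_2\in\gcal_2$ the graph $(G_1,X)\oplus(G_2,X)$ is chordal if and only if $\sign(G_1,X)\cap\sign(G_2,X)=\emptyset$ and $\sign(G_1,X)\cup\sign(G_2,X)$ is independent in $M_B$, in which case $\sign((G_1,X)\oplus(G_2,X),X)=\sign(G_1,X)\cup\sign(G_2,X)$. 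Hence $\scal:=\{\sign(G,X)\mid G\in\gcal\}$ is precisely the product family $\scal_1\bullet\scal_2$. As in \cref{lem:chordal:repr-from-graphic} I would set $\tau(Y)=\mathrm{argmax}\,\{w(G)\mid G\in\gcal,\ \sign(G,X)=Y\}$ and $w_\scal(Y)=w(\tau(Y))$; these are read off the explicit input $\gcal,w$ in time $\Oh(|\gcal|\poly(k))=\Oh(4^k\poly(k))$, which is within budget since $4^k\le 6^k\le(3\cdot 2^{\omega-1})^k=2^{(\omega-1)k}3^k$ for $\omega\ge 2$.

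By \cref{lem:chordal:rank} the matroid $M_B$ has rank $r\le|V(B)|-1=k-1$, so by \cref{lem:chordal:repr-from-graphic} it suffices to compute $\widehat\scal\sub^{r}_{\mathrm{maxrep}}\scal_1\bullet\scal_2$ with respect to $w_\scal$, of size at most $2^{r}\le 2^{k-1}$, and return $\widehat\gcal=\tau(\widehat\scal)$. This I would do by re-running the proof of \cref{lem:prelim:repr:product-final} with the matroid rank being $r$ rather than $k$: pre-reduce $\scal_1,\scal_2$ one size-class at a time with \cref{thm:prelim:repr:efficient} to families having at most $\binom{r}{p}$ sets of each size $p$, in total time $\Oh(2^k\cdot 2^{(\omega-1)r}\cdot r^\omega)=\Oh(2^{\omega k}k^\omega)$; then apply \cref{thm:prelim:repr:product} to the pre-reduced families, the induced product, and $w_\scal$, in time $\Oh(2^{(\omega-1)r}3^r r^\omega)=\Oh(2^{(\omega-1)k}3^k k^\omega)$; combine the resulting $\widehat\scal^{p,q}$ through Lemmas~\ref{lem:prelim:repr:transitive}, \ref{lem:prelim:repr:transitive-product}, \ref{lem:prelim:repr:union} and \cref{obs:prelim:repr:rank} into a family $\sub^{r}_{\mathrm{maxrep}}\scal_1\bullet\scal_2$ of size $\Oh(k\cdot 2^{r})$; and finally trim it to size at most $2^{r}$ with \cref{lem:prelim:repr:efficient-final}, again in time $\Oh(2^{\omega k}\poly(k))$. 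Every non-dominant term is $\Oh(2^{\omega k}\poly(k))=o\br{2^{(\omega-1)k}3^k k^\omega}$ since $2^{\omega k}/\br{2^{(\omega-1)k}3^k}=(2/3)^k$, so the total running time is $\Oh\br{2^{(\omega-1)k}3^k k^\omega}$ and the output has size at most $2^{k-1}$; correctness of $\widehat\gcal\sub_{\mathrm{maxrep}}\gcal$ is then immediate from \cref{lem:chordal:repr-from-graphic}.

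I expect the only genuinely fiddly part to be the running-time bookkeeping: after replacing ``rank $k$'' by ``rank $r\le k-1$'' throughout the cited representative-family lemmas one must re-check that each subroutine still satisfies its hypotheses (in particular that the size bounds on the pre-reduced families match the rank, which they do since \cref{thm:prelim:repr:efficient} imposes no bound on the size of its input family) and that each auxiliary step --- computing the signatures, computing $\tau$ and $w_\scal$, the pre-reduction, and the final trimming --- stays below the $\Oh(2^{(\omega-1)k}3^k k^\omega)$ bound. No new combinatorial idea is needed beyond \cref{lem:chordal:criterion}, \cref{lem:chordal:rank}, and the cited results on representative families of product families.
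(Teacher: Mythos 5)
Your proposal is correct and follows essentially the same route as the paper: identify the signature family of $\gcal$ as the product family $\scal_1\bullet\scal_2$ via \cref{lem:chordal:criterion}, bound the rank of $M_B$ by $k-1$ via \cref{lem:chordal:rank}, compute a weighted max-representative family for the product via \cref{thm:prelim:repr:product} (the paper simply invokes \cref{lem:prelim:repr:product-final} where you redo its proof with rank $r$), and pull the result back to graphs through $\tau$ and \cref{lem:chordal:repr-from-graphic}. Your extra bookkeeping about the rank-versus-$k$ distinction and the $4^k\cdot k^{\Oh(1)}$ cost of computing signatures and $\tau$ matches the paper's brief remarks and introduces no new ideas.
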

\begin{proof}
Let $\scal = \{\sign(G,X) \mid G \in \mathcal{G}\}$,
$\scal_1 = \{\sign(G,X) \mid G \in \mathcal{G}_1\}$,
and $\scal_2 = \{\sign(G,X) \mid G \in \mathcal{G}_2\}$.
We claim that $\scal = \scal_1 \bullet \scal_2$.
For $\ell \in \{1,2\}$ consider $S_\ell \in \scal_\ell$ and $G_\ell \in \gcal_\ell$ such that $\sign(G_\ell,X) = S_\ell$.
Then $(G_1,X) \oplus (G_2,X)$ belongs to $\gcal$ if and only if it is chordal what, by \cref{lem:chordal:criterion}, is equivalent to the condition that $S_1 \cap S_2 = \emptyset$ and $S_1 \cup S_2$ is independent in $M_B$.
This justifies the claim.

By \cref{lem:chordal:rank} the rank $r$ of $M_B$ is at most $k-1$.
We proceed similarly as in \cref{lem:tw:repr:efficient}
but we use \cref{lem:prelim:repr:product-final} to compute an $r$-representative family of size $2^{k-1}$ for $\scal_1 \bullet \scal_2$ in time $\Oh(2^{(\omega - 1)k}3^k \cdot k^{\omega})$.
Computing the signatures and the mapping $\tau \colon \scal \to \gcal$ takes time $|\gcal| \cdot k^{\Oh(1)} = 4^k \cdot k^{\Oh(1)}$
so the previous term is dominating in the running time.

\end{proof}

\subsection{Dynamic programming}

We present a dynamic programming algorithm processing a tree decomposition.
We begin with describing the states of the dynamic programming routine and their invariants.
Although we do not store tables indexed by partial solutions but rather a family of partial solutions with weights (as in~\cite{representative-efficient}), we keep the notion of `state' to refer to information stored at a node of a decomposition.

\subparagraph{States for DP.}
A state for a node $t \in V(\ttw)$ 
is a family of pairs $(\bcal_{t,X}, h_{t,X})$ assigned to each $X \sub \chi(t)$, where $\bcal_{t,X} \sub \gcal_{X,G[X]}$ is a family of chordal graphs and $h_{t,X} \colon \bcal_{t,X} \to \mathbb{N}$ is  a non-negative weight function.

Recall that  $V_t$ denotes the set of vertices occurring in the subtree rooted at $t \in V(\ttw)$ and $U_t = V_t \sm \chi(t)$.
The intended meaning of $H \sub \bcal_{t,X}$ and $h_{t,X}(H) = s$ is that there should exists a set $A \sub U_t$ of total weight $s$ so that $(A,X)$ is a partial solution equivalent to $H$.
For each $H$ we want to keep track of such a set $A$ of maximal weight.

For sets $X, Y$  we write concisely $(A,B) \sub (X,Y)$ to denote $A \sub X$, $B \sub Y$. 

\subparagraph{Correctness invariant.}
For $t \in V(\ttw)$ and $X \subseteq \chi(t)$
we say that a pair $( \bcal,  h)$ satisfies the correctness invariant for $(t,X)$ if the following holds.
\begin{enumerate}
    \item For each $H \in \bcal$ there exists a set $A \subseteq U_t$ so that $\compress(G[A \cup X], X) = H$, $G[A \cup X]$ is chordal, and $ h(H) = w(A)$. \label{item:correct1}
    \item For each $(A,B) \sub  (U_t, V(G) \sm V_t)$ for which $G[A \cup X \cup B]$ is chordal there exists $H \in \bcal$ so that $(H,X) \oplus (G[B\cup X],X)$ is chordal and $h(H) \ge w(A)$.\label{item:correct2} 
\end{enumerate}

As a consequence of this invariant, we have that for each pair $(A,B)$ from the second condition there exists $\widehat A \subseteq  U_t$ (a replacement for $A$) so that $w(\widehat A) \ge w(A)$, $\compress(G[\widehat A \cup X], X) \in \bcal$, and  $(\compress(G[\widehat A \cup X], X), X) \oplus (G[B \cup X], X])$ is chordal which implies that $G[\widehat A \cup X \cup B]$ is chordal (\cref{lem:chordal:condense-glue}).

Since every graph in $\bcal_{t,X}$ is condensed with respect to $X$,
\cref{lem:chordal:condense-size} implies a bound on its size (we do not subtract one from $2|X|$ to cover the case $X = \emptyset$).
 
\begin{observation}\label{lem:tw:consequence}
If $t \in V(\ttw)$, $X \subseteq \chi(t)$, and $( \bcal,  h)$ satisfies the correctness invariant for $(t,X)$, then every graph in $\bcal$ has at most $2|X|$ vertices.
\end{observation}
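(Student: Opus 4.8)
The plan is to read the bound off directly from the first condition of the correctness invariant for $(t,X)$, combined with \cref{lem:chordal:condense-size}. Fix an arbitrary $H \in \bcal$. By the first condition there is a set $A \sub U_t$ with $H = \compress(G[A \cup X], X)$ and with $G[A \cup X]$ chordal. Since $\compress$ only performs edge contractions and vertex deletions, \cref{lem:prelim:closed} gives that $H$ is chordal as well; and by construction $H$ is condensed with respect to $X$ (it is the output of $\compress(\cdot, X)$) and satisfies $V(H) \supseteq X$.

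If $X \neq \emptyset$, then \cref{lem:chordal:condense-size} applies to the chordal graph $H$, which is condensed with respect to the non-empty set $X$, and yields $|V(H)| \le 2|X| - 1 \le 2|X|$, as required.

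It remains to handle the boundary case $X = \emptyset$, which is exactly the reason the statement asserts only $2|X|$ rather than the sharper $2|X| - 1$. Here $\compress(G[A], \emptyset)$ first contracts each connected component of $G[A]$ into a single vertex; as distinct components are pairwise non-adjacent, the resulting graph is edgeless, so each of its vertices has empty---hence complete---neighbourhood and is therefore simplicial. The second step of $\compress$ then removes all of these vertices, so $V(H) = \emptyset$ and $|V(H)| = 0 = 2|X|$. I do not anticipate any real obstacle in this argument; the only point that needs a moment's care is precisely this degenerate case.
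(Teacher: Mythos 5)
Your proof is correct and follows essentially the same route as the paper: condition (1) of the invariant shows each $H \in \bcal$ is a chordal graph condensed with respect to $X$, and \cref{lem:chordal:condense-size} then gives $|V(H)| \le 2|X|-1$ when $X \ne \emptyset$. Your explicit treatment of the case $X = \emptyset$ (the condensed graph is empty) is precisely what the paper's parenthetical remark about not subtracting one from $2|X|$ is alluding to.
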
 
 
We take advantage of the theory developed so far to keep the sizes of $\bcal_{t,X}$ in check.

\begin{lemma}\label{lem:tw:shrinking-correctness}
Consider $t \in V(\ttw)$ and $X \sub \chi(t)$.
If a pair $(\bcal, h)$ satisfies the correctness invariant for $(t,X)$ and $\widehat \bcal \sub_{\mathrm{maxrep}} \bcal$ then $(\widehat \bcal, h)$ also satisfies the correctness invariant for $(t,X)$.
\end{lemma}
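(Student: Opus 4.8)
The plan is to verify the two conditions of the correctness invariant for the pair $(\widehat\bcal, h)$ separately, exploiting that $\widehat\bcal \sub \bcal$ and that $\widehat\bcal$ is max-representative for $\bcal$. Condition \eqref{item:correct1} is essentially free: since $\widehat\bcal \sub \bcal$, every graph $H \in \widehat\bcal$ already lies in $\bcal$, so the witness set $A \sub U_t$ with $\compress(G[A\cup X],X) = H$, $G[A \cup X]$ chordal, and $h(H) = w(A)$ is inherited directly from the hypothesis that $(\bcal,h)$ satisfies \eqref{item:correct1}. Note that $h$ is unchanged, so there is nothing to adjust.

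The substantive part is condition \eqref{item:correct2}. First I would take an arbitrary pair $(A,B) \sub (U_t, V(G) \sm V_t)$ such that $G[A \cup X \cup B]$ is chordal. Applying \eqref{item:correct2} for $(\bcal,h)$, I obtain some $H \in \bcal$ with $(H,X) \oplus (G[B\cup X], X)$ chordal and $h(H) \ge w(A)$. Now I want to "push" $H$ into $\widehat\bcal$ using max-representativity. To invoke the definition of $\widehat\bcal \sub_{\mathrm{maxrep}} \bcal$ I set $J := G[B \cup X] \in \gcal_{X,G[X]}$: this is a legitimate choice of test graph because $X \sub \chi(t) \sub V_t$ while $B \cap V_t = \emptyset$, so $B$ and $X$ partition the vertex set as required and $J[X] = G[X]$. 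Since $H \in \bcal$ satisfies $(H,X)\oplus(J,X)$ chordal, the definition of max-representative family yields $\widehat H \in \widehat\bcal$ with $(\widehat H, X)\oplus (J,X) = (\widehat H,X)\oplus (G[B\cup X],X)$ chordal and $h(\widehat H) \ge h(H) \ge w(A)$. This $\widehat H$ is exactly the witness required by \eqref{item:correct2} for the pair $(\widehat\bcal, h)$.

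The only mild subtlety — and the step I would be most careful about — is checking that every object I feed to the max-representativity definition is of the right type: that $B \cup X$ is a valid "other side" of a gluing (i.e.\ $(G[A\cup X],X)$ and $(G[B\cup X],X)$ are compatible and their boundary is exactly $X$), and that $G[B\cup X]$ need not itself be chordal for the argument to go through, since the definition of $\sub_{\mathrm{maxrep}}$ only requires $(H,X)\oplus(J,X)$ to be chordal, not $J$. Since $A \sub U_t$ and $B \cap V_t = \emptyset$, the sets $A$ and $B$ are disjoint, and because $\chi(t)$ separates $V_t \sm \chi(t)$ from $V(G)\sm V_t$ in $G$, there are no edges between $A$ and $B$, so $G[A\cup X\cup B] = (G[A\cup X],X)\oplus(G[B\cup X],X)$ as needed. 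With these type checks in place the lemma follows, and there is no quantitative estimate or combinatorial construction to carry out — the proof is a direct translation through the definition of $\sub_{\mathrm{maxrep}}$.
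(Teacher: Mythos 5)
Your proposal is correct and follows essentially the same route as the paper: condition (1) is inherited trivially because $\widehat\bcal \sub \bcal$ and $h$ is unchanged, and condition (2) is obtained by first invoking the invariant for $(\bcal,h)$ and then applying the definition of $\sub_{\mathrm{maxrep}}$ with the test graph $G[B\cup X] \in \gcal_{X,G[X]}$. The extra type-checking remarks (compatibility of the boundaried graphs, no edges between $A$ and $B$) are harmless but not needed for this lemma.
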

\begin{proof}
The first condition is satisfied trivially as the quantification switches to a subset of $\bcal$.
Consider $(A,B) \sub (U_t, V(G) \sm V_t))$ from condition (2).
By the assumption, there exists $H \in \bcal$ so that $(H,X) \oplus (G[B \cup X],X)$ is chordal and $h(H) \ge w(A)$.
By the definition of a max-representative family there exists $\widehat H \sub \widehat \bcal$ so that $(\widehat H,X) \oplus (G[B \cup X],X)$ is chordal and $h(\widehat H) \ge h(H) \ge w(A)$.
The claim follows.
\end{proof}

\subparagraph{Size invariant.}
For $t \in V(\ttw)$ and $X \subseteq \chi(t)$
we say that $\bcal \sub \gcal_{X,G[X]}$ satisfies the size invariant if $|\bcal| \le 2^{|X|}$. 

For a node $t \in \ttw$ we say that its state satisfies the correctness or size invariant if all the pairs $(\bcal_{t,X}, h_{t,X})_{X \sub \chi(t)}$ satisfies it.

We move on to describing the dynamic programming routine for the three non-trivial types of nodes in a nice tree decomposition.
In each case we begin with the algorithm, then prove the correctness invariant, and then analyze the running time together with the size invariant.

\begin{lemma}[Introduce node]
\label{lem:tw:introduce}
Let $(\ttw, \chi)$ be a nice tree decomposition of $G$ of width $k$ and $t \in V(\ttw)$ be an introduce node with a child $t'$.
Suppose that the state for $t'$ satisfying the correctness and size invariants is given.
Then we can compute the state for $t$ which satisfies the correctness and size invariants in time $3^k k^{\Oh(1)}$.
\end{lemma}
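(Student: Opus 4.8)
The plan is to describe the processing of an introduce node $t$ with child $t'$ where $\chi(t) = \chi(t') \cup \{v\}$, and to verify the correctness and size invariants. First I would set up notation: for each subset $X \subseteq \chi(t)$ we must produce $(\bcal_{t,X}, h_{t,X})$. The case analysis splits on whether $v \in X$. If $v \notin X$, then $X \subseteq \chi(t')$ and, crucially, $U_t = U_{t'}$ (the new vertex $v$ is not yet forgotten, so it does not enter $U_t$), and $V(G) \sm V_t = (V(G) \sm V_{t'}) \sm \{v\}$. In this case I would simply copy the state: set $\bcal_{t,X} := \bcal_{t',X}$ and $h_{t,X} := h_{t',X}$. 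Condition~\ref{item:correct1} of the correctness invariant is unchanged since the witnessing sets $A \subseteq U_{t'} = U_t$ still work. For condition~\ref{item:correct2}, a pair $(A,B) \subseteq (U_t, V(G)\sm V_t)$ also satisfies $(A, B \cup \{v\} \text{ or } B) \subseteq (U_{t'}, V(G)\sm V_{t'})$ after possibly moving $v$; but since $X$ does not contain $v$ and $N_G(v) \subseteq \chi(t)$, whether $v\in B$ only changes $G[B\cup X]$ in its interaction with $X$, and we can always take $v\notin B$ without loss of generality because $v$ is non-adjacent to $X\setminus v = X$ only through $\chi(t)$; more carefully, one argues the partial solution $(A,B)$ in $t$ corresponds to the partial solution $(A, B)$ in $t'$ since $v\notin A$, $v\notin X$, and if $v\in B$ we may note that $G[A\cup X\cup B]$ chordal implies $G[A\cup X\cup(B\sm v)]$ chordal, so we reduce to the case handled by the invariant at $t'$.

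If $v \in X$, write $X^- = X \setminus v$; then $X^- \subseteq \chi(t')$ and, since $v \in \chi(t)$, we have $N_G(v) \cap V_t \subseteq \chi(t)$, so in particular $N_G(v) \cap U_t = \emptyset$, i.e. $v$ is non-adjacent to $U_t = U_{t'}$. The plan here is to take each $H^- \in \bcal_{t',X^-}$ and form $H := (H^-, X^-) \oplus (G[X], X^-)$, i.e. add $v$ to $H^-$ together with the edges from $v$ to $N_G(v) \cap X^-$; more precisely, I define an ``introduce'' operation on boundaried graphs adding the isolated-modulo-$X$ vertex $v$ with its correct adjacency to $X$, and note that because $v$ has no neighbors in $U_t$, for every witnessing $A \subseteq U_t$ the graph $G[A\cup X] = (G[A\cup X^-], X^-)\oplus(G[X],X^-)$, and its condensation equals the introduce-operation applied to $\compress(G[A\cup X^-],X^-) = H^-$ (this uses that contracting/removing components of $G-X = G-X^-$ is unaffected by adding $v$, which has $N_G(v)\subseteq X$). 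Set $h_{t,X}(H) := h_{t',X^-}(H^-)$. One subtlety: distinct $H^-$ might map to the same $H$ (or $H$ might need to be a new condensed graph); I would keep only one preimage, of maximum weight — this can only help condition~\ref{item:correct2}. For condition~\ref{item:correct1}: given $H$ coming from $H^-$ with witness $A \subseteq U_{t'} = U_t$, the same $A$ witnesses $H$ at $t$ since $\compress(G[A\cup X],X)$ equals the introduce-operation applied to $H^-$, which is $H$, and $G[A\cup X]$ is chordal as $(G[A\cup X^-],X^-)\oplus(G[X],X^-)$ built from a chordal graph plus one simplicial-on-$X$ vertex (chordality is preserved). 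For condition~\ref{item:correct2}: given $(A,B) \subseteq (U_t, V(G)\sm V_t)$ with $G[A\cup X\cup B]$ chordal, set $B' = B \cup \{v\}$; then $(A,B') \subseteq (U_{t'}, V(G)\sm V_{t'})$ and $G[A\cup X^-\cup B'] = G[A\cup X\cup B]$ is chordal, so by the invariant at $t'$ there is $H^- \in \bcal_{t',X^-}$ with $(H^-,X^-)\oplus(G[B'\cup X^-],X^-)$ chordal and $h_{t',X^-}(H^-)\ge w(A)$; the corresponding $H$ then satisfies $(H,X)\oplus(G[B\cup X],X)$ chordal — here one checks that gluing $H$ (which has $v$ with edges to $N_G(v)\cap X^-$) with $G[B\cup X]$ along $X$ yields the same graph as gluing $H^-$ with $G[B'\cup X^-]$ along $X^-$, because the edges incident to $v$ are exactly $N_G(v)\cap X$ and these appear on both sides.

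For the running time and size invariant: there are at most $2^{k+1}$ subsets $X \subseteq \chi(t)$; for each we either copy $\bcal_{t',X}$ (size $\le 2^{|X|}$) or map $\bcal_{t',X^-}$ (size $\le 2^{|X^-|} = 2^{|X|-1}$) through the introduce operation, so $|\bcal_{t,X}| \le 2^{|X|}$ and the size invariant holds without any pruning. Each graph in the families has $\Oh(k)$ vertices by \cref{lem:tw:consequence}, so each introduce operation and each condensation costs $k^{\Oh(1)}$; the total is $\sum_{X\subseteq\chi(t)} 2^{|X|}\cdot k^{\Oh(1)} = 3^k k^{\Oh(1)}$ by the binomial identity $\sum_{i}\binom{k+1}{i}2^i = 3^{k+1}$. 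The main obstacle I anticipate is the bookkeeping in the $v\in X$ case: carefully justifying that the introduce-operation on condensed boundaried graphs commutes with condensation and with gluing, given $N_G(v)\subseteq \chi(t)$ — essentially an analogue of \cref{lem:chordal:condense-assocciate} and \cref{lem:chordal:condense-glue} specialized to adding a vertex whose neighborhood lies in the boundary. Everything else is routine invariant-chasing.
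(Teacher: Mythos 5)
Your construction and most of the verification follow the paper's proof exactly (copy the state when $v \notin X$; apply an \texttt{Introduce}-type operation to each $H^- \in \bcal_{t',X^-}$ when $v \in X$; the condition~(2) argument via $B' = B \cup \{v\}$ and the running-time count are the same). However, there is a genuine gap in your verification of condition~(1) for the case $v \in X$. You justify chordality of $G[A \cup X]$ by saying it is $(G[A\cup X^-],X^-)\oplus(G[X],X^-)$, ``a chordal graph plus one simplicial-on-$X$ vertex (chordality is preserved).'' This is false: $v$'s neighborhood $N_G(v)\cap X^-$ need not be a clique, so $v$ is not simplicial, and attaching a vertex whose neighborhood lies in $X^-$ can create holes --- even holes passing through $A$. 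Concretely, if $G[A\cup X^-]$ is an induced path $x\,a\,b\,y$ with $x,y\in X^-$, $a,b\in A$, and $N_G(v)=\{x,y\}$, then $G[A\cup X]$ contains a $5$-hole although $G[A\cup X^-]$ and $G[X]$ are chordal. Relatedly, your construction omits the filtering steps the paper uses: it discards $H=\texttt{Introduce}(H',N_G(v)\cap X^-)$ whenever $H$ is not chordal (and sets $\bcal_{t,X}=\emptyset$ when $G[X]$ is not chordal). Without that filter, condition~(1) simply fails for a non-chordal $H$ (no witness $A$ can exist, since the condensation of a chordal graph is chordal), and the stored families would no longer consist of chordal graphs, which the forget/join routines require. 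In the example above your rule would store the $4$-cycle $H$ and your argument would wrongly certify a witness for it.

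The correct argument, as in the paper, keeps only chordal $H$ and then deduces chordality of $G[A\cup X]$ via \cref{lem:chordal:criterion-old} applied with respect to $X^-$: every connected component $C$ of $G[A\cup X]-X^-$ is either a component of $G[A]$ (so $G[C\cup X^-]$ is an induced subgraph of the chordal $G[A\cup X^-]$) or the singleton $\{v\}$ (so $G[C\cup X^-]=G[X]$, chordal by the explicit check), and $\compress(G[A\cup X],X^-)$ equals $H$ up to possibly deleting $v$, hence is chordal because $H$ was kept only if chordal. Your observation that the introduce operation commutes with condensation (since $N_G(v)\cap A=\emptyset$) is correct and matches the paper; it is only this chordality step, and the accompanying filter, that your sketch papers over with an invalid preservation claim.
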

\begin{proof}
We have $\chi(t) = \chi(t') \cup \{v\}$ for some vertex $v$.
Next, $U_t = U_{t'}$ and $V_t = V_{t'} \cup \{v\}$.
Note that $N_G(v) \cap U_t = \emptyset$.

We define operation {\texttt{Introduce}}$(H,N)$ which takes a graph $H$ and a set $N \sub V(H)$
and inserts to $H$ a new vertex $v$ with neighborhood $N$.

If $X \subseteq \chi(t)$ does not contain $v$, we 
set $\bcal_{t,X} = \bcal_{t',X}$ and $h_{t,X} = h_{t',X}$.
Consider $X \subseteq \chi(t)$ containing $v$; let $X^- = X \sm v$.
If $G[X]$ is not chordal we set $\bcal_{t,X} = \emptyset$.
Otherwise for each $H' \in \bcal_{t',X^-}$ we
compute $(H,X) = \texttt{Introduce}(H',N_G(v) \cap X^-)$.
If $H$ is chordal we insert it to  $\bcal_{t,X}$ and set $h_{t,X}(H) =  h_{t',X^-}(H')$.

\subparagraph{Correctness.}
Suppose first that $v \not\in X$.
We check condition (1).
Let $H \in \bcal_{t,X}$.
By the construction, $H \in \bcal_{t',X}$ and, by the inductive assumption, there is a set $A \subseteq U_{t'} = U_t$ so that $\compress(G[A \cup X], X) = H$, $G[A \cup X]$ is chordal, and $h_{t,X}(H) = h_{t',X}(H) = w(A)$.
To see condition (2), consider any
 $A \subseteq U_t$ and $B \subseteq V(G) \sm V_t$.
Then $A \subseteq U_{t'}$ and $B \subseteq V(G) \sm V_{t'}$ so again the claim follows directly from the invariant for $t'$.

Suppose now that $v \in X$.
We check condition (1).
Every $H \in \bcal_{t,X}$ is of the form $H = \texttt{Introduce}(H',N_G(v) \cap X^-)$ for some $H' \in \bcal_{t',X^-}$.
Moreover, $H$ must be chordal by the construction.
By the inductive assumption there is a set $A \sub U_{t'} = U_t$ so that $\compress(G[A \cup X^-],X^-) = H'$, $G[A \cup X^-]$ is chordal, $h_{t,X}(H) = h_{t',X}(H') = w(A)$.
Since $N_G(v) \cap A = \emptyset$ 
the vertex $v$ does not affect which connected components of $G[A]$ are simplicial in $G[A \cup X]$.
Therefore, the graph $\compress(G[A \cup X],X)$ can be obtained from $\compress(G[A \cup X^-],X^-)$ by simply inserting the vertex $v$, and so it equals $H$.

Finally, we need to check that $G[A \cup X]$ is chordal.
We apply criterion from \cref{lem:chordal:criterion-old} with respect to set $X^-$.
For every connected component $C$ of  $G[A \cup X] - X^-$ the graph $G[C \cup X^-]$ is either a subgraph of $G[A \cup X^-]$ or it equals $G[X]$---in both cases it is chordal.
The graph 
$\compress(G[A \cup X], X^-)$ is either isomorphic to $H$ or can be obtained from $H$ by removal of $v$ (when $v$ is simplicial in $H$).
Therefore this graph is also chordal what implies that $G[A \cup X]$ is chordal.

We move on to condition (2) for the case $v \in X$.
Let $(A, B) \sub (U_t, V(G) \sm V_t)$ be such that $G[A \cup X \cup B]$ is chordal.
Note that no such pair exists when $G[X]$ is not chordal, so we only care about the case where $G[X]$ is chordal.
Let $B^+ = B \cup \{v\}$;
then $(A, B^+) \sub (U_{t'}, V(G) \sm V_{t'})$. 
Therefore, there exists $H' \in \bcal_{t',X^-}$ so that $(H',X^-) \oplus (G[B^+ \cup X^-],X^-)$ is chordal and $h_{t',X^-}(H') \ge w(A)$.
Let $H = \texttt{Introduce}(H',N_G(v) \cap X^-)$.
By construction we have $h_{t,X}(H) = h_{t',X^-}(H')$.
Next, $B \cup X = B^+ \cup X^-$.
Since and $N_H(v) \sub X$, the gluing product $(H,X) \oplus (G[B \cup X],X)$ is the same as $(H',X^-) \oplus (G[B \cup X],X^-)$ which is chordal, as noted above.
This also implies that $H$ is chordal so it gets inserted to $\bcal_{t,X}$.
This concludes the proof of the correctness invariant. 

\subparagraph{Running time.}
If $v \not\in X$ then $\bcal_{t,X} = \bcal_{t',X}$ and
if $v \in X$ then each graph from $\bcal_{t',X^-}$ is mapped to a single graph in $\bcal_{t,X}$.
In both cases the size invariant is preserved.
For each $X \sub \chi(t)$ we process at most $2^{|X|}$ graphs 
what in total gives $\sum_{X \sub \chi(t)} 2^{|X|} \le 3^{k+1}$ graphs.
By \cref{lem:tw:consequence} each graph has at most $2(k+1)$ vertices and is processed in time $k^{\Oh(1)}$.
\end{proof}

Before describing the routine for a forget node we prove that the condensing operation applied with respect to $X$ and then to $X \sm v$ results in the same graph as when applying it directly to~$X \sm v$.

\begin{lemma}\label{lem:tw:compress-twice}
Let $X \sub V(H)$ and $v \in X$.
Next, let $H_2 = \emph\compress(H, X)$ and $H_3 = \emph\compress(H_2, X \sm v)$.
Then $H_3 = \emph\compress(H, X \sm v)$.
\end{lemma}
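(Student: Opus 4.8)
We must show that for $X \subseteq V(H)$ and $v \in X$, writing $H_2 = \compress(H,X)$ and $H_3 = \compress(H_2, X \sm v)$, we have $H_3 = \compress(H, X \sm v)$.

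The plan is to identify $H_3$ and $\compress(H, X \sm v)$ by comparing their vertex sets, edge sets, and the induced graph on $X \sm v$ directly. Write $\mathcal{C}$ for the set of connected components of $H-X$, and for $C \in \mathcal{C}$ put $N(C) = N_H(C) \sub X$. Forming $H_2 = \compress(H,X)$ amounts to contracting each $C \in \mathcal{C}$ into a vertex $\widehat C$ with $N_{H_2}(\widehat C) = N(C)$, and then deleting those $\widehat C$ for which $N(C)$ is a clique of $H[X]$ (the \emph{simplicial} ones). Note that $H_2[X] = H[X]$, that no two contracted vertices are adjacent in $H_2$, and that $v \in X$ is never deleted.

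First I would describe the components of $H - (X \sm v)$: since $v \notin V(H) \sm X$, removing $X \sm v$ keeps $v$ and all of $V(H) \sm X$, so the component $D$ containing $v$ equals $\{v\} \cup \bigcup\{C \in \mathcal{C} : v \in N(C)\}$, and every other component is a single $C \in \mathcal{C}$ with $v \notin N(C)$. Hence $\compress(H, X \sm v)$ contracts $D$ to a vertex $w$ with $N_H(D) \sub X \sm v$, keeps a vertex $\widehat C$ (with neighbourhood $N(C) \sub X \sm v$) for each $C$ with $v \notin N(C)$, and deletes $w$ iff $N_H(D)$ is a clique of $H[X \sm v]$, and $\widehat C$ iff $N(C)$ is a clique. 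On the other side, in $H_2 - (X \sm v)$ the component of $v$ is $D' := \{v\} \cup \{\widehat C : v \in N(C),\ N(C)\text{ not a clique}\}$ and the remaining components are the single vertices $\widehat C$ with $v \notin N(C)$ and $N(C)$ not a clique; contracting $D'$ to a vertex $w'$ and deleting simplicial vertices gives $H_3$. Bookkeeping aside, the whole proof reduces to the single identity $N_{H_2}(D') = N_H(D)$ as subsets of $X \sm v$.

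To prove that identity I would compute both sides explicitly: $N_H(D) = (N_H(v) \cap (X \sm v)) \cup \bigcup_{C :\, v \in N(C)} (N(C) \sm v)$, while $N_{H_2}(D') = (N_H(v) \cap (X \sm v)) \cup \bigcup_{C :\, v \in N(C),\ N(C)\text{ not a clique}} (N(C) \sm v)$, using that $D' \cap X = D \cap X = \{v\}$ and that the contracted vertices adjacent to $v$ in $H_2$ are exactly those $\widehat C$ with $v \in N(C)$. So the two sides differ only in the contribution $N(C) \sm v$ of a \emph{simplicial} component $C$ with $v \in N(C)$. But for such $C$ the set $N(C)$ is a clique containing $v$, hence $N(C) \sm v \sub N_H(v) \cap (X \sm v) \sub N_H(D)$ (as $v \in D$), so that contribution is already present on both sides and $N_{H_2}(D') = N_H(D)$. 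With this in hand, the bijection $w \leftrightarrow w'$ and $\widehat C \leftrightarrow \widehat C$ (for $v \notin N(C)$) identifies the two graphs: neighbourhoods into $X \sm v$ agree by the identity just proved together with $N_{H_2}(\widehat C) = N(C) = N_H(C)$; there are no edges among the non-$X$ vertices in either graph, since distinct components of $H-X$ are non-adjacent and no $C$ with $v \notin N(C)$ meets $D$; the induced graph on $X \sm v$ is $H[X \sm v] = H_2[X \sm v]$ in both cases; and the deletion criteria are the same clique tests ``$N_H(D)$ a clique of $H[X\sm v]$'' and ``$N(C)$ a clique'' on both sides. One should additionally note that a vertex $\widehat C$ with $v \notin N(C)$ that survives into $H_2$ (i.e.\ $N(C)$ not a clique) is not deleted when passing to $H_3$, because $N_{H_2}(\widehat C) = N(C)$ is still not a clique in $H_2$; and the degenerate case $X = \{v\}$ is covered as well, since then all candidate vertices are isolated, hence simplicial, and both graphs are empty.

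The main obstacle is exactly the identity $N_{H_2}(D') = N_H(D)$: one must see that deleting a simplicial component adjacent to $v$ — which $\compress(\cdot, X)$ performs, but which would otherwise be absorbed into the blob $D$ around $v$ when condensing with respect to $X \sm v$ — destroys no boundary vertex of that blob. This is precisely where simpliciality (that $N(C)$ is a clique through $v$) is used, and everything else is routine case analysis.
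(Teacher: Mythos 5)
Your proposal is correct and follows essentially the same route as the paper: both reduce the claim to comparing the component of $v$ in $H-(X\sm v)$ with its counterpart in $H_2-(X\sm v)$, and both hinge on the same key observation that a simplicial component $C$ adjacent to $v$ has $N_H(C)\sm\{v\}\subseteq N_H(v)$, so deleting it does not change the neighborhood of the contracted blob around $v$. The surrounding bookkeeping (surviving components with $v\notin N(C)$, identical clique/deletion tests, unchanged graph on $X\sm v$) matches the paper's argument as well.
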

\begin{proof}
Let $C$ be a connected component of $H-X$ which is non-adjacent to $v$.
Then it gets contracted into a single vertex and is present in both $H_2, H_3$ as long as it is non-simplicial.
Consider now the connected components of $H-X$ which are adjacent to $v$.
Let $S_1, S_2, \dots, S_\ell$ be those among them which are simplicial in $H$ and $C_1, C_2, \dots, C_r$ are those which are non-simplicial.
Let $V_{S,C,v} = \bigcup S_i \cup \bigcup D_i \cup \{v\}$ and  $V_{C,v} = \bigcup C_i  \cup \{v\}$.
If $u \in N_H(S_i)$ then, since $v \in N_H(S_i)$ and $S_i$ is simplicial, we have $u \in N_H(v)$.
This implies that $N_H(V_{S,C,v}) = N_H(V_{C,v})$.
In $H_2$ all the components $S_i$ are removed and $C_i$ are contracted to vertices.
In $H_3$ the latter vertices are replaced with a new vertex $v'$ with neighborhood $N_H(V_{C,v})$.
In $\compress(H, X \sm v)$ we directly replace all components $S_i, C_i$ with a new vertex $v''$ with neighborhood $N_H(V_{S,C,v})$.
As observed above, these neighborhoods coincide, hence $H_3 = \compress(H, X \sm v)$.
\end{proof}

\begin{lemma}[Forget node]
\label{lem:tw:forget}
Let $(\ttw, \chi)$ be a nice tree decomposition of $G$ of width $k$ and $t \in V(\ttw)$ be a forget node with a child $t'$.
Suppose that the state for $t'$ satisfying the correctness and size invariants is given.
Then we can compute the state for $t$ which satisfies the correctness and size invariants in time $(2^\omega + 1)^k k^{\Oh(1)}$.
\end{lemma}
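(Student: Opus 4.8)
The plan is to build the state for $t$ from that of its child $t'$ by splitting partial solutions according to whether they use the forgotten vertex $v$, where $\chi(t)=\chi(t')\sm v$. Note $V_t=V_{t'}$ while $U_t=U_{t'}\cup\{v\}$, so a partial solution $(A,X)$ at $t$ with $A\sub U_t$, $X\sub\chi(t)$ either avoids $v$ — then it is a partial solution at $t'$ with boundary $X$ — or contains $v$ — then $(A\sm v, X\cup v)$ is a partial solution at $t'$ with boundary $X\cup v$; both $X$ and $X\cup v$ are subsets of $\chi(t')$. Accordingly, for each $X\sub\chi(t)$ I would first form
\[
\bcal'_{t,X}=\bcal_{t',X}\cup\{\compress(H',X)\mid H'\in\bcal_{t',X\cup v}\},
\]
assigning weight $h_{t',X}(H)$ to a graph $H$ arising from $\bcal_{t',X}$, weight $h_{t',X\cup v}(H')+w(v)$ to one arising as $\compress(H',X)$, and the maximum over all sources to a graph produced in several ways; call this weight $h'_{t,X}$. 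Since $H'$ is condensed with respect to $X\cup v$, \cref{lem:tw:compress-twice} identifies $\compress(H',X)$ with the condensed form of the partial solution obtained by moving $v$ into the interior, and $\compress(H',X)$ is chordal and condensed with respect to $X$, hence a legitimate element of $\gcal_{X,G[X]}$. When $X=\emptyset$ the only condensed graph over the empty boundary is the empty graph, so $\bcal'_{t,\emptyset}$ has at most one element, which I output directly. Otherwise $|\bcal'_{t,X}|\le 2^{|X|}+2^{|X|+1}\le 2^{|X|+2}$ and, by \cref{lem:tw:consequence} and \cref{lem:chordal:condense-size}, all its graphs have at most $2|X|$ vertices, so I would apply \cref{lem:tw:repr:efficient} with $c=2$ to get $\widehat\bcal_{t,X}\sub_{\mathrm{maxrep}}\bcal'_{t,X}$ of size at most $2^{|X|-1}$ in time $\Oh(2^{\omega|X|}|X|^\omega)$, and output it with $h'_{t,X}$ restricted to it. This gives the size invariant directly.

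For the correctness invariant, by \cref{lem:tw:shrinking-correctness} it suffices to check it for $(\bcal'_{t,X},h'_{t,X})$ before shrinking. Condition~(1) is routine: a graph inherited from $\bcal_{t',X}$ keeps the witness $A\sub U_{t'}\sub U_t$ from the invariant at $t'$; for $H=\compress(H',X)$ with $H'\in\bcal_{t',X\cup v}$ and witness $A'\sub U_{t'}$, I take $A=A'\cup v\sub U_t$, so that $G[A\cup X]=G[A'\cup X\cup v]$ is chordal, $\compress(G[A\cup X],X)=\compress(G[A'\cup X\cup v],X)=\compress(H',X)=H$ by \cref{lem:tw:compress-twice}, and $w(A)=w(A')+w(v)$ is the assigned weight.

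The heart of the argument is condition~(2). Given $(A,B)\sub(U_t,V(G)\sm V_t)$ with $G[A\cup X\cup B]$ chordal: if $v\notin A$ then $(A,B)\sub(U_{t'},V(G)\sm V_{t'})$ and condition~(2) at $t'$ for boundary $X$ gives a suitable $H'\in\bcal_{t',X}\sub\bcal'_{t,X}$. If $v\in A$, put $A'=A\sm v\sub U_{t'}$; condition~(2) at $t'$ for boundary $X\cup v$ and the pair $(A',B)$ yields $H'\in\bcal_{t',X\cup v}$ with $(H',X\cup v)\oplus(G[B\cup X\cup v],X\cup v)$ chordal and $h_{t',X\cup v}(H')\ge w(A')$. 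Here I would use that $v$ is forgotten at $t$: by the subtree property of tree decompositions every neighbour of $v$ lies in $V_{t'}=V_t$, hence $N_G(v)\cap B=\emptyset$, so on the $B$-side $v$ is adjacent only to $X$ and all its edges there already appear in $H'$; therefore $(H',X\cup v)\oplus(G[B\cup X\cup v],X\cup v)$ equals, as a graph, $(H',X)\oplus(G[B\cup X],X)$. As $H'$ and $G[B\cup X]$ are both chordal, \cref{lem:chordal:condense-glue} lets me replace $H'$ by $\compress(H',X)$, so $(\compress(H',X),X)\oplus(G[B\cup X],X)$ is chordal, with $h'_{t,X}(\compress(H',X))\ge h_{t',X\cup v}(H')+w(v)\ge w(A')+w(v)=w(A)$; and $\compress(H',X)\in\bcal'_{t,X}$.

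For the running time, the dominating cost for a fixed $X$ with $|X|=j$ is the call to \cref{lem:tw:repr:efficient}, costing $\Oh(2^{\omega j}j^\omega)$, while condensing and merging cost $2^{j+1}j^{\Oh(1)}$; summing over all $X\sub\chi(t)$ and using $|\chi(t)|\le k$ gives $\sum_{j=0}^{k}\binom{k}{j}2^{\omega j}\cdot k^{\Oh(1)}=(2^\omega+1)^k\cdot k^{\Oh(1)}$. The step I expect to be most delicate is exactly the chordality identity in condition~(2): one has to observe that forgetting $v$ makes it non-adjacent to the whole ``future'' set $B$, which is precisely what reduces the gluing along $X\cup v$ to a gluing along $X$ and brings \cref{lem:chordal:condense-glue} into play.
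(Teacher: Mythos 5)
Your proposal is correct and follows essentially the same route as the paper: the same split of partial solutions by whether they contain the forgotten vertex $v$, the same use of \cref{lem:tw:compress-twice} for condition~(1), the same observation $N_G(v)\cap B=\emptyset$ to rewrite the gluing along $X\cup v$ as a gluing along $X$ in condition~(2), followed by \cref{lem:tw:shrinking-correctness} and \cref{lem:tw:repr:efficient} with the identical $(2^\omega+1)^k k^{\Oh(1)}$ accounting. The only cosmetic difference is that you invoke \cref{lem:chordal:condense-glue} to pass from $H'$ to $\compress(H',X)$ where the paper argues directly that contractions and vertex removals preserve chordality; both are fine.
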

\begin{proof}
We have $\chi(t') = \chi(t) \cup \{v\}$ for some vertex $v \in V_t$.
Next, $V_{t} = V_{t'}$ and $U_{t} = U_{t'} \cup \{v\}$.
Note that $N_G(v) \sub V_t$.

Let $X \sub \chi(t)$, $X^+ = X \cup \{v\}$, and $\bcal' = \{\compress(H',X) \mid H' \in \bcal_{t',X^+}\}$.
For each $H \in \bcal_{t',X} \cup \bcal'$
we define its weight $h_{t,X}(H)$ as maximum over $h_{t',X}(H)$ (considered only if $H \in \bcal_{t',X}$) and $\max \{h_{t',X^+}(H') + w(v)) \mid H' \in \bcal_{t',X^+} \land \compress(H',X) = H\}$.
It follows from the construction that we take maximum over a non-empty set.
Finally, we compute $\bcal_{t,X}$ with \cref{lem:tw:repr:efficient} as the max-representative family for $\bcal_{t',X} \cup \bcal'$ with the weight function $h_{t,X}$.

\subparagraph{Correctness.}
We show that the pair $(\bcal_{t',X} \cup \bcal', h_{t,X})$ satisfies the correctness invariant.
Then the claim for $(\bcal_{t, X}, h_{t,X})$ will follow from \cref{lem:tw:shrinking-correctness}.

We check condition (1).
First consider $H \in \bcal_{t',X}$ satisfying $h_{t,X}(H) = h_{t',X}(H)$.
By the inductive assumption there is a set $A \subseteq U_{t'} \subset U_t$ so that $\compress(G[A \cup X], X) = H$, $G[A \cup X]$ is chordal, and $h_{t,X}(H) = w(A)$, as intended.

Now consider $H \in \bcal'$ for which there exists $H' \in \bcal_{t',X^+}$ so that $H = \compress(H',X)$ and $h_{t,X}(H) = h_{t',X^+}(H') + w(v)$.
We know that there exists a set $A \subseteq U_{t'} = U_t \sm v$ so that $\compress(G[A \cup X^+], X^+) = H'$, $G[A \cup X^+]$ is chordal, and $h_{t',X^+}(H') = w(A)$.
The set $A^+ = A \cup \{v\} \sub U_t$ satisfies  $h_{t,X}(H) = w(A^+)$ and $G[A^+ \cup X] = G[A \cup X^+]$ is chordal.
From \cref{lem:tw:compress-twice} we obtain that $\compress(G[A^+ \cup X], X) = \compress(H',X) = H$.

We move on to condition (2).
Let $(A,B) \sub (U_t, V(G) \sm V_t)$ be that $G[A \cup X \cup B]$ is chordal.
First consider the case $v \not\in A$.
Then $(A,B) \sub (U_{t'}, V(G) \sm V_{t'})$ and there exists $H \in \bcal_{t',X}$ so that $(H,X) \oplus (G[B \cup X], X)$ is chordal and $h_{t',X}(H) \ge w(A)$.
By construction $h_{t,X}(H) \ge h_{t',X}(H)$.

Now suppose that $v \in A$ and let $A^- = A \sm v$.
We have $(A^-,B) \sub (U_{t'}, V(G) \sm V_{t'})$ and so there exists $H' \in \bcal_{t',X^+}$ so that $(H',X^+) \oplus (G[B \cup X^+], X^+)$ is chordal and $h_{t',X^+}(H') \ge w(A^-) = w(A) - w(v)$.
Let $H = \compress(H',X) \in \bcal'$.
Since $N_G(v) \cap B = \emptyset$, we deduce that $(H',X^+) \oplus (G[B \cup X^+], X^+) = (H',X) \oplus (G[B \cup X], X)$.
The graph $(H,X) \oplus (G[B \cup X], X)$ can be obtained from the one above by a series edge contractions and possibly a vertex removal so it is chordal as well.
Finally, $h_{t,X}(H) \ge h_{t',X^+}(H') + w(v) \ge w(A)$.

\subparagraph{Running time.}
Consider a non-empty $X \sub \chi(t)$.
By \cref{lem:tw:consequence} each graph $H \in \bcal_{t',X} \cup \bcal'$ has at most $2|X|$ vertices and is processed in time $k^{\Oh(1)}$.
By the assumption $|\bcal_{t',X}|\le 2^{|X|}$ and
$|\bcal_{t',X^+}|\le 2^{|X|+1}$ so the input to \cref{lem:tw:repr:efficient} has size less than $2^{|X|+2}$.
The computation of the max-representative family $\bcal_{t,X}$ takes time $\Oh(2^{\omega \cdot |X|}\cdot |X|^{\omega})$.
This family have size at most $2^{|X|-1}$ so it satisfies the size invariant.
The sum of the exponential terms in the total running time equals $\sum_{X \sub \chi(t)} 2^{\omega \cdot |X|} = (2^{\omega} +1)^{|X|}$.
\end{proof}

\begin{lemma}[Join node]
\label{lem:tw:join}
Let $(\ttw, \chi)$ be a nice tree decomposition of $G$ of width $k$ and $t \in V(\ttw)$ be a join node with a children $t_1, t_2$.
Suppose that the states for $t_1,t_2$ satisfying the correctness and size invariants are given.
Then we can compute the state for $t$ which satisfies the correctness and size invariants in time $\Oh\br{(2^{\omega-1}\cdot 3 + 1)^k \cdot k^{\omega}}$.
\end{lemma}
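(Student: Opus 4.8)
The plan is to mirror the introduce- and forget-node lemmas: assemble a large candidate family of partial solutions for $t$ by gluing partial solutions of the two children, show it satisfies the correctness invariant, and then shrink it with \cref{lem:tw:repr:product} to restore the size invariant. I start from the standard facts $\chi(t)=\chi(t_1)=\chi(t_2)$, $V_t=V_{t_1}\cup V_{t_2}$, $V_{t_1}\cap V_{t_2}=\chi(t)$, so that $U_{t_1},U_{t_2}$ partition $U_t$ and $G$ has no edge with one endpoint in $U_{t_1}$ and the other outside $V_{t_1}$ (symmetrically for $t_2$). Hence for $A\sub U_t$ with $A_i=A\cap U_{t_i}$ we get $G[A\cup X]=(G[A_1\cup X],X)\oplus(G[A_2\cup X],X)$ and, by \cref{lem:chordal:condense-assocciate}, $\compress(G[A\cup X],X)=(\compress(G[A_1\cup X],X),X)\oplus(\compress(G[A_2\cup X],X),X)$; these two identities will do all the bookkeeping.

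For a nonempty $X\sub\chi(t)$ I would set
\[
  \bcal'_{t,X}=\{(H_1,X)\oplus(H_2,X)\mid H_1\in\bcal_{t_1,X},\ H_2\in\bcal_{t_2,X}\}\cap\texttt{chordal},
\]
with weight $h(H)=\max\{\,h_{t_1,X}(H_1)+h_{t_2,X}(H_2)\mid(H_1,X)\oplus(H_2,X)=H\,\}$, and then invoke \cref{lem:tw:repr:product} on $(\bcal_{t_1,X},\bcal_{t_2,X},\bcal'_{t,X},h)$ — whose hypotheses hold by the size invariant and \cref{lem:tw:consequence} — to obtain $\bcal_{t,X}\sub_{\mathrm{maxrep}}\bcal'_{t,X}$ of size at most $2^{|X|-1}$, letting $h_{t,X}$ be the restriction of $h$. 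For $X=\emptyset$ every condensed graph over $\emptyset$ is empty, so I would simply take $\bcal_{t,\emptyset}$ to be the single empty graph with weight $h_{t_1,\emptyset}(\cdot)+h_{t_2,\emptyset}(\cdot)$ and skip the reduction.

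Correctness would follow by showing $(\bcal'_{t,X},h)$ satisfies the correctness invariant and then applying \cref{lem:tw:shrinking-correctness}. Condition~(1) is routine: for $H=(H_1,X)\oplus(H_2,X)\in\bcal'_{t,X}$ realizing $h(H)$, take $\widehat A_i\sub U_{t_i}$ from condition~(1) for the children, put $A=\widehat A_1\cup\widehat A_2$, and use the two identities noted above together with \cref{lem:chordal:condense-glue} (applied twice) to see that $\compress(G[A\cup X],X)=H$, that $G[A\cup X]$ is chordal, and that $w(A)=h(H)$. Condition~(2) is the crux, and the point is that we must replace $A_1$ and $A_2$ \emph{one at a time} by genuine vertex subsets of $G$. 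Given $(A,B)\sub(U_t,V(G)\sm V_t)$ with $G[A\cup X\cup B]$ chordal and $A=A_1\cup A_2$, I would invoke the replacement consequence of the correctness invariant for $t_1$ on the pair $(A_1,A_2\cup B)$ — a legal choice for condition~(2) of $t_1$ since $A_2\cup B\sub V(G)\sm V_{t_1}$ and $G[A_1\cup X\cup A_2\cup B]=G[A\cup X\cup B]$ is chordal — to obtain $\widehat A_1\sub U_{t_1}$ and $H_1:=\compress(G[\widehat A_1\cup X],X)\in\bcal_{t_1,X}$ with $h_{t_1,X}(H_1)=w(\widehat A_1)\ge w(A_1)$ and $G[\widehat A_1\cup X\cup A_2\cup B]$ chordal; then apply the same consequence for $t_2$ to the pair $(A_2,\widehat A_1\cup B)$ — legal because $\widehat A_1\sub U_{t_1}$ is disjoint from $V_{t_2}$ and the relevant induced subgraph is the chordal $G[\widehat A_1\cup X\cup A_2\cup B]$ — to obtain $\widehat A_2\sub U_{t_2}$ and $H_2:=\compress(G[\widehat A_2\cup X],X)\in\bcal_{t_2,X}$ with $h_{t_2,X}(H_2)=w(\widehat A_2)\ge w(A_2)$ and $G[\widehat A_1\cup\widehat A_2\cup X\cup B]$ chordal. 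Setting $\widehat A=\widehat A_1\cup\widehat A_2$ and $H=\compress(G[\widehat A\cup X],X)=(H_1,X)\oplus(H_2,X)$ (which lies in $\bcal'_{t,X}$ since it is chordal, being the condensation of the chordal graph $G[\widehat A\cup X]$), \cref{lem:chordal:condense-glue} — with $G[B\cup X]$ chordal as an induced subgraph of $G[\widehat A\cup X\cup B]$ — gives that $(H,X)\oplus(G[B\cup X],X)$ is chordal iff $G[\widehat A\cup X\cup B]$ is, hence chordal, while $h(H)\ge w(\widehat A_1)+w(\widehat A_2)\ge w(A)$; this is condition~(2).

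For the running time, computing $\bcal'_{t,X}$ and $h$ costs $4^{|X|}\cdot|X|^{\Oh(1)}$ (at most $4^{|X|}$ pairs, each product on $\Oh(|X|)$ vertices by \cref{lem:tw:consequence}, chordality testable in polynomial time), and \cref{lem:tw:repr:product} then costs $\Oh\br{2^{(\omega-1)|X|}3^{|X|}|X|^{\omega}}$; summing over $X\sub\chi(t)$ and using $\sum_{j=0}^{k}\binom{k}{j}(2^{\omega-1}\cdot3)^j=(2^{\omega-1}\cdot3+1)^k$ — together with $\sum_j\binom{k}{j}4^j\cdot k^{\Oh(1)}=5^k\cdot k^{\Oh(1)}$ being dominated since $5<2^{\omega-1}\cdot3+1$ for every $\omega\ge2$ — yields $\Oh\br{(2^{\omega-1}\cdot3+1)^k\cdot k^{\omega}}$, as required. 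I expect the main obstacle to be exactly condition~(2): one cannot replace $A$ in a single step, and the sequential peeling-off of $A_1$ then $A_2$ — each time using the replacement consequence of the child's invariant to recover an actual vertex subset of $G$ rather than an abstract boundaried graph, and folding the part already replaced into the ``outside'' set — must be carried out while tracking, via \cref{lem:chordal:condense-glue} and \cref{lem:chordal:condense-assocciate}, that gluing and condensation commute in the way needed.
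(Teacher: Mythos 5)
Your proposal is correct and follows essentially the same route as the paper: form the glued family $\bcal'$ with the max-combined weights, prove the correctness invariant for $(\bcal',h)$ — in particular the sequential replacement of $A_1$ and then $A_2$ via the children's invariants together with \cref{lem:chordal:condense-glue} and \cref{lem:chordal:condense-assocciate} is exactly the paper's argument for condition (2) — and then truncate with \cref{lem:tw:repr:product} and \cref{lem:tw:shrinking-correctness}. The only deviations are cosmetic (using \cref{lem:chordal:condense-glue} twice for condition (1) where the paper cites \cref{lem:chordal:criterion-old}, and spelling out the $X=\emptyset$ case and the $4^{|X|}$ enumeration cost), so no further changes are needed.
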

\begin{proof}
We have $\chi(t) = \chi(t_1) = \chi(t_2)$ and $U_t = U_{t_1} \cup U_{t_2}$ where the union is disjoint.

Consider $X \subseteq \chi(t)$.
Let $\bcal' = \{(H_1,X) \oplus (H_2,X) \mid H_1 \in \bcal_{t_1,X}, H_1 \in \bcal_{t_1,X}\} \cap \texttt{chordal}$. 
For each $H \in \bcal'$
we define its weight $h_{t,X}(H)$ as $\max \left(w_{t_1,X}(H_1) + w_{t_2,X}(H_2)\right)$ over $\{(H_1,X) \oplus (H_2,X) = H \mid H_1 \in \bcal_{t_1,X}, H_1 \in \bcal_{t_1,X}\}$.
We compute $\bcal_{t, X}$ with \cref{lem:tw:repr:product} as the max-representative family for $\bcal'$ with respect to the weight function $h_{t,X}$.

\subparagraph{Correctness.} We show the correctness invariant for the pair $(\bcal', h_{t,X})$. Then the claim will follow from \cref{lem:tw:shrinking-correctness}. 
Let us fix $X \subseteq \chi(t)$.

We check condition (1).
Let $H \in \bcal'$ and $H_1 \in \bcal_{t_1,X}, H_2 \in \bcal_{t_2,X}$ be such that $H = (H_1,X) \oplus (H_2,X)$ and $h_{t,X}(H) = w_{t_1,X}(H_1) + w_{t_2,X}(H_2)$.
By the assumption, there exist sets $A_1 \sub U_{t_1}, A_2 \sub U_{t_2}$ so that for $i \in \{1,2\}$ it holds that $\compress(G[A_i \cup X], X) = H_i$, $G[A_i \cup X]$ is chordal, and $w_{t_i,X}(H_i) = w(A_i)$.
Observe that there are no edges between $A_1, A_2$
so $G[A_1 \cup A_2 \cup X] = (G[A_1 \cup X], X) \oplus (G[A_2 \cup X], X)$.
By \cref{lem:chordal:condense-assocciate} we have $\compress(G[A_1 \cup A_2 \cup X], X) = \compress(G[A_1 \cup X], X) \oplus \compress(G[A_2 \cup X], X) = (H_1,X) \oplus (H_2,X) = H$.
Since $H$ is chordal by the construction, \cref{lem:chordal:criterion-old} implies that $G[A_1 \cup A_2 \cup X]$ is chordal.
It remains to check that $w(A_1 \cup A_2) = w(A_1) + w(A_2) = h_{t,X}(H)$.

Condition (2).
Let $(A,B) \sub (U_t, V(G) \sm V_t)$ be that $G[A \cup X \cup B]$ is chordal, and $A_1 = A \cap U_{t_1}, A_2 = A \cap U_{t_2}$.
Note that $B \cup A_2 \sub V(G) \sm V_{t_1}$ and there are no edges between $U_{t_1}$ and $U_{t_2}$.
By the assumption, there exists $H_1 \in \bcal_{t_1,X}$ so that $(H_1,X) \oplus (G[B \cup A_2 \cup X)$ is chordal and $h_{t_1,X}(H_1) \ge w(A_1)$.
From condition (1) we obtain that there exists $\widehat A_1 \subseteq U_{t_1}$ so that $\compress(G[\widehat A_1 \cup X], X) = H_1$, $G[\widehat A_1 \cup X]$ is chordal,  and $h_{t_1,X}(H_1) = w(\widehat A_1)$.
It follows from \cref{lem:chordal:condense-glue} that $G[\widehat A_1 \cup A_2 \cup X \cup B]$ is chordal.
Next, we consider $(A_2, B \cup \widehat A_1) \sub (U_{t_2}, V(G) \sm V_{t_2})$.
There exists $H_2 \in \bcal_{t_2,X}$ so that $(H_2,X) \oplus (G[B \cup \widehat A_1 \cup X),X)$ is chordal and $h_{t_2,X}(H_2) \ge w(A_2)$.
Again from condition (1) we obtain  $\widehat A_2 \subseteq U_{t_2}$ so that $\compress(G[\widehat A_2 \cup X], X) = H_2$, $G[\widehat A_2 \cup X]$ is chordal, and $h_{t_2,X}(H_2) = w(\widehat A_2)$.
As before, the graph $G[\widehat A_1 \cup \widehat A_2 \cup X \cup B]$ is chordal.
By \cref{lem:chordal:condense-assocciate} we have that $H = (H_1, X) \oplus (H_2,X)$ equals
$\compress(G[\widehat A_1 \cup \widehat A_2 \cup X], X)$.
Then $(H,X) \oplus (G[B \cup X], X)$ is chordal
which in particular means that $H$ is chordal and belongs to $\bcal'$.
Finally, we check that $h_{t,X}(H) \ge h_{t_1,X}(H_1) + h_{t_2,X}(H_2) \ge w(A_1) + w(A_2) = w(A)$.

\subparagraph{Running time.}
Consider a non-empty $X \sub \chi(t)$.
By \cref{lem:tw:consequence} each graph $H \in \bcal'$ has at most $2|X|$ vertices.
By the assumption $|\bcal_{t_1,X}|, |\bcal_{t_2,X}| \le 2^{|X|}$ so we can use \cref{lem:tw:repr:product} to compute
the max-representative family $\bcal_{t,X}$ in time $\Oh \br{(2^{\omega-1} \cdot 3)^{|X|}\cdot |X|^{\omega}}$.
This family have size at most $2^{|X|-1}$ so it satisfies the size invariant.
The sum of the exponential terms in the total running time equals $\sum_{X \sub \chi(t)} (2^{\omega-1} \cdot 3)^{|X|} = (2^{\omega-1} \cdot 3 + 1)^{|X|}$.
\end{proof}

\thmChordal*
\begin{proof}
Let $(\ttw, \chi)$ be a tree decomposition of $G$ of width $k$.
We can assume that this is a nice tree decomposition and $|V(\ttw)| = \Oh(nk)$.
We fill the states for $t \in V(\ttw)$ in a standard bottom-up fashion, while maintaining the correctness and size invariants.
When $t$ is a base node, we have $\chi(t) = V_t =  \emptyset$ and it suffices to consider $X = \emptyset$.
The family $\bcal_{t,\emptyset}$ then contains only an empty graph with weight zero.
This satisfies both invariants trivially.
We proceed the remaining types of nodes using Lemmas~\ref{lem:tw:introduce}, \ref{lem:tw:forget}, and \ref{lem:tw:join}.
The bottleneck for the running time comes from processing a join node. 

After filling the state of the root node $r$,
we read the value of $h_{r,\emptyset}(\bot)$, where $\bot$ denotes the empty graph.
We claim that this value equals the highest weight of a vertex set in $G$ inducing a chordal graph.
We have $\chi(r) = \emptyset$, $V_t = U_t = V(G)$.
The family $\bcal_{r,\emptyset}$ can contain only the empty graph.
Let $A \sub V(G)$ be a maximal-weight set inducing a chordal graph.
From the correctness condition (2) 
we obtain that there exists $H \in \bcal_{r,\emptyset}$ (clearly $H = \bot$) so that $h_{r,\emptyset}(H) \ge w(A)$.
From the correctness condition (1) 
for $H = \bot$ we get that 
there exists $\widehat A \sub V(G)$ so that $G[\widehat A \cup \emptyset]$ is chordal and $h_{r,\emptyset}(\bot) = w(\widehat A)$.
This implies that $h_{r,\emptyset}(\bot) = w(A)$.
\end{proof}
}

\section{Interval Deletion}
\label{sec:interval}

We switch our attention to \textsc{Interval Vertex Deletion} and show that in this case it is unlikely to achieve any speed-up over the existing $2^{\Oh(\tw \log \tw)}\cdot n$-time algorithm.
We prove \cref{thm:intro:interval} via a~parameterized reduction from $k\times k$ \textsc{Permutation Clique}, which is defined as follows.

\defproblem{$k\times k$ {Permutation Clique}}
{Graph $G$ over the vertex set $[k] \times [k]$.}
{Is there a permutation $\pi \colon [k] \to [k]$ so that $(1,\pi(1)), (2,\pi(2)), \dots, (k,\pi(k))$ forms a~clique in $G$?}

\subparagraph{Permutation gadget.}
We will encode a permutation $\pi \colon [k] \to [k]$ as a family of sets $N_1, N_2, \dots, N_k$ so that $N_i = \pi([i])$ (i.e., $N_i$ is the set of $i$ numbers appearing first in $\pi$). 
First, we need a gadget to verify that such a family represents some permutation.

\begin{definition}
For an integer $k$, let $Y_k$ be a graph on a vertex set
$\{y_1, y_2, \dots, y_{k+2}\}$
 so that $\{y_1, y_2, \dots, y_{k+1}\}$ induces a clique and $y_{k+2}$ is adjacent only to $y_{k+1}$.
\end{definition}

\short{\refstepcounter{theorem}}
\full{
We need a simple observation that every linearly ordered family of sets can be represented by some permutation.

\begin{lemma}
\label{lem:interval:permutation-helper}
Let $N_1, \dots, N_\ell \subseteq [k]$.
Suppose that for each $i,j \in [\ell]$ it holds that $N_i \subseteq N_j$ or $N_j \subseteq N_i$.
Then there exists a permutation $\pi \colon [k] \to [k]$ so that for each $i \in [\ell]$ it holds that $N_i = \pi([n_i])$ where $n_i = |N_i|$.
\end{lemma}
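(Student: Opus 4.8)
The plan is to construct $\pi$ explicitly by concatenating the ``layers'' of the chain $\{N_1, \dots, N_\ell\}$. First I would reduce to the case where the sets are sorted by cardinality and nested consecutively. For any $i,j$ we have $N_i \subseteq N_j$ or $N_j \subseteq N_i$; moreover $N_i \subseteq N_j$ together with $|N_i| = |N_j|$ forces $N_i = N_j$. Hence, after permuting the indices so that $n_1 \le n_2 \le \dots \le n_\ell$ (which is harmless since the statement is symmetric in $N_1, \dots, N_\ell$), we obtain $N_1 \subseteq N_2 \subseteq \dots \subseteq N_\ell$: whenever $n_i < n_{i+1}$ the reverse containment $N_{i+1} \subseteq N_i$ is impossible, and whenever $n_i = n_{i+1}$ the two sets coincide. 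So from now on I may assume this chain form.

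Next I would define an enumeration of $[k]$ layer by layer. Put $N_0 = \emptyset$, $N_{\ell+1} = [k]$, and $D_j = N_j \setminus N_{j-1}$ for $j \in \{1, \dots, \ell+1\}$. Since $N_0 \subseteq N_1 \subseteq \dots \subseteq N_{\ell+1}$, the sets $D_1, \dots, D_{\ell+1}$ partition $[k]$ and satisfy $N_i = D_1 \cup \dots \cup D_i$ for every $i \in \{0, \dots, \ell+1\}$. Now list the elements of $[k]$ by first writing down those of $D_1$ in an arbitrary order, then those of $D_2$, and so on up to $D_{\ell+1}$; this produces an enumeration $a_1, \dots, a_k$ of $[k]$, and I set $\pi(t) = a_t$.

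It remains to check $N_i = \pi([n_i])$ for each $i \in [\ell]$. By construction the first $|D_1| + \dots + |D_i|$ entries of the enumeration are exactly the elements of $D_1 \cup \dots \cup D_i = N_i$, and $|D_1| + \dots + |D_i| = |N_i| = n_i$; hence $\pi([n_i]) = \{a_1, \dots, a_{n_i}\} = N_i$, as required. The only step requiring any thought is the initial reduction---observing that members of a chain with equal cardinality coincide, which lets us assume consecutive nesting---after which the argument is a direct unwinding of the definitions, so I do not anticipate a genuine obstacle.
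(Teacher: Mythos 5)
Your proof is correct. It does, however, proceed differently from the paper's: the paper argues by induction on $k$, peeling off a single element $e \in N_1$, passing to a bijection $\tau \colon [k]\setminus\{e\} \to [k-1]$, applying the inductive hypothesis to the shrunken sets, and then composing to recover $\pi$; you instead give a direct, non-inductive construction, first reducing (via the observation that chain members of equal cardinality coincide) to a sorted chain $N_1 \subseteq \dots \subseteq N_\ell$, and then enumerating $[k]$ layer by layer through the difference sets $D_j = N_j \setminus N_{j-1}$ so that $\pi([n_i]) = D_1 \cup \dots \cup D_i = N_i$ holds by a telescoping cardinality count. Your route makes the combinatorial structure of the chain explicit and avoids the bookkeeping with the auxiliary bijection and recursive composition, at the mild cost of the initial relabeling step (which is harmless, as you note, since the conclusion is symmetric in the $N_i$); the paper's induction is marginally shorter to state but hides the same layered picture inside the recursion. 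Both handle the degenerate cases (empty sets, repeated sets) correctly, yours automatically via $\pi([0]) = \emptyset$ and the coincidence of equal-cardinality chain members.
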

\begin{proof}
Proof by induction on $k$.
For $k = 1$ the claim clearly holds so consider $k > 1$. 
If a set $N_i$ is empty then $N_i = \pi(\emptyset)$ for 
any permutation, so we can assume that all the sets are non-empty.
The family $(N_i)_{i\in[\ell]}$ is linearly ordered so the after reordering the indices we can assume that $N_1 \subseteq N_2 \subseteq \dots \subseteq N_\ell$.
Let $e$ be an arbitrary element from $N_1$ and $\tau \colon [k] \setminus \{e\} \to [k-1]$ be an arbitrary bijection.
We define $N'_i = \tau(N_i \setminus \{e\})$.
Then $N'_1 \subseteq N'_2 \subseteq \dots \subseteq N'_\ell$.
By the inductive assumption, there exists a permutation 
$\pi' \colon [k-1] \to [k-1]$ such that $N'_i = \pi'([n_i - 1])$.
We define $\pi(1) = e$ and for $i>1$ as follows: $\pi(i) = \tau^{-1}(\pi'(i-1))$.
Then $N_i = \{e\} \cup \tau^{-1}(N'_i) = \{\pi(1)\} \cup \tau^{-1}(\pi'([n_i - 1])) = \{\pi(1)\} \cup \pi([2, n_i]) = \pi([n_i])$.
\end{proof}
}

\begin{figure}
    \centering
    \includegraphics[width=0.7\linewidth]{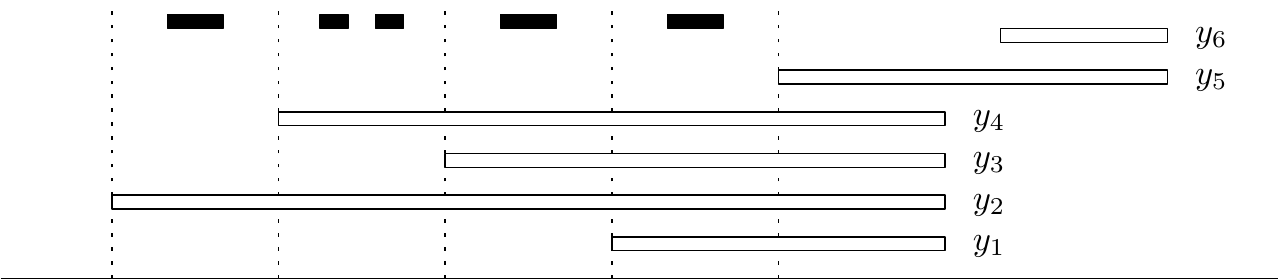}
    \caption{Illustration for \cref{lem:interval:permutation}.
    The intervals for vertices of $Y_4$ are blank, ordered from bottom to top.
    They encode permutation $(2,4,3,1)$.
    The black intervals represent vertices $x_1,x_2,x_3,x_4,x_5$ with neighborhoods encoding sets $\{2\}$, $\{2,4\}$ (twice), $\{2,4,3\}$, and $\{2,4,3,1\}$.
    }
    \label{fig:permutation}
\end{figure}

We shall enforce a linear order on  $N_1, \dots, N_k$ by demanding that a particular supergraph of $Y_k$ is interval.
The corresponding interval model is depicted on \Cref{fig:permutation}.

\begin{lemma}[\starr]
\label{lem:interval:permutation}
Let $N_1, \dots, N_\ell \subseteq [k]$.
Consider a graph $G$ obtained from $Y_k$ by inserting an independent set of vertices $x_1, \dots, x_\ell$ so that $N_G(x_i) = \{y_j \mid j \in N_i\}$.
Then $G$ is interval if and only if there exists a permutation $\pi \colon [k] \to [k]$ so that for each $i \in \ell$ it holds that $N_i = \pi([n_i])$ where $n_i = |N_i|$.
\end{lemma}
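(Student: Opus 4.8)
The plan is to prove both implications, using \cref{lem:prelim:interval-AT} to reduce interval-ness to chordality plus absence of an asteroidal triple, and \cref{lem:interval:permutation-helper} to handle the easy direction once a suitable linear order is available.

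For the backward direction, suppose $\pi$ exists with $N_i = \pi([n_i])$. First I would observe that since all the $N_i$ are initial segments of $\pi$, they are linearly ordered by inclusion. I would then exhibit an explicit interval model, essentially the one in \Cref{fig:permutation}: place the clique $\{y_1,\dots,y_{k+1}\}$ as intervals all sharing a common right endpoint, with the left endpoint of the interval for $y_{\pi(j)}$ placed at position $k+1-j$ (so that $y_{\pi(1)}$ has the longest interval, $y_{\pi(k)}$ the shortest, and $y_{k+1}$ the shortest of all, say a tiny interval at the far right), and give $y_{k+2}$ an interval sticking out to the right of everything, intersecting only $y_{k+1}$. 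Now for $x_i$ with $N_G(x_i) = \{y_j : j \in N_i\} = \{y_{\pi(1)},\dots,y_{\pi(n_i)}\}$: the intervals for exactly these vertices are the ones whose left endpoint is $\le k+1-n_i$ — wait, they are the $n_i$ longest clique intervals, which are precisely the ones containing the point $k+1-n_i$ (just to the right of the left endpoint of $y_{\pi(n_i)}$) while not containing any point to the left of $k+1-n_i+\tfrac12$ that would also meet $y_{\pi(n_i+1)}$. Concretely, I would assign $x_i$ a short interval around the point $k+\tfrac12-n_i$, which lies inside $I(y_{\pi(1)}),\dots,I(y_{\pi(n_i)})$ but to the left of $I(y_{\pi(n_i+1)})$; since the $x_i$ are independent this is consistent, and it realizes exactly the required adjacencies. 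Hence $G$ has an interval model and is interval.

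For the forward direction, suppose $G$ is interval. By \cref{lem:prelim:interval-AT}, $G$ is chordal and AT-free. The core claim is that the family $(N_i)_{i\in[\ell]}$ is linearly ordered by inclusion; once this is established, \cref{lem:interval:permutation-helper} immediately yields the permutation $\pi$ with $N_i = \pi([n_i])$. I would prove linearity by contradiction: if some $N_a, N_b$ are incomparable, pick $p \in N_a \setminus N_b$ and $q \in N_b \setminus N_a$. Then $x_a$ is adjacent to $y_p$ but not $y_q$, and $x_b$ is adjacent to $y_q$ but not $y_p$. I claim $\{x_a, x_b, y_{k+2}\}$ is an asteroidal triple. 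Between $x_a$ and $x_b$ there is the path $x_a - y_p - y_q - x_b$ (using that $y_p y_q$ is a clique edge), and it avoids $N[y_{k+2}] = \{y_{k+1}, y_{k+2}\}$ since $p, q \le k$. Between $x_a$ and $y_{k+2}$ there is the path $x_a - y_p - y_{k+1} - y_{k+2}$, which avoids $N[x_b]$: indeed $y_p \notin N(x_b)$ by choice of $p$, and $y_{k+1} \notin N(x_b)$ because $k+1 \notin [k] \supseteq N_b$ (the sets $N_i$ are subsets of $[k]$). Symmetrically there is a path from $x_b$ to $y_{k+2}$ avoiding $N[x_a]$. This contradicts AT-freeness. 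Therefore the family is linearly ordered, and the proof concludes via \cref{lem:interval:permutation-helper}.

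The main obstacle I anticipate is being careful in the backward direction that the explicit interval assignment simultaneously realizes \emph{all} required non-edges: I must check that $x_i$'s interval does not accidentally touch $I(y_{\pi(n_i+1)}), \dots, I(y_{\pi(k)})$ or $I(y_{k+1}), I(y_{k+2})$, and that distinct $x_i$'s may freely overlap each other (harmless, since inserting edges between the independent $x$-vertices would be a problem — so actually I should place the $x_i$ intervals pairwise disjoint, or note that the problem statement only constrains $N_G(x_i)$, and re-examine whether $x_i x_j$ edges are forbidden; they are, since the $x_i$ form an independent set, so I will place the $x_i$ intervals pairwise disjoint by perturbing their centers slightly within the allowed open region). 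A minor secondary point is to handle degenerate cases ($n_i = 0$, i.e. $N_i = \emptyset$, or $n_i = k$) uniformly, which the construction above already does. None of this is deep; it is just bookkeeping with endpoints.
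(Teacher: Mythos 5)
Your overall route is the same as the paper's: the forward direction builds an asteroidal triple $\{x_a,x_b,y_{k+2}\}$ from an incomparable pair $N_a,N_b$ (the paper uses exactly the same three paths through $y_p$, $y_q$, and $y_{k+1}$) and then invokes \cref{lem:interval:permutation-helper}; the backward direction exhibits the explicit interval model of \Cref{fig:permutation}. The AT argument is correct as you state it, and only AT-freeness (not the full strength of \cref{lem:prelim:interval-AT}) is needed there.

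However, the explicit coordinates in your backward direction are internally reversed and, as literally written, realize the wrong adjacencies. If the left endpoint of $I(y_{\pi(j)})$ is placed at $k+1-j$, then $y_{\pi(1)}$ is the \emph{shortest} clique interval (contrary to your parenthetical), and the point $k+\frac{1}{2}-n_i$ lies in $I(y_{\pi(j)})$ precisely for $j\ge n_i+1$; so the short interval you assign to $x_i$ meets $I(y_{\pi(n_i+1)}),\dots,I(y_{\pi(k)})$ and misses $I(y_{\pi(1)}),\dots,I(y_{\pi(n_i)})$ --- the complement of what is required. Relatedly, if $y_{k+1}$ literally shares the common right endpoint with $y_1,\dots,y_k$, then every point of $I(y_{k+1})$ lies in all clique intervals, so $y_{k+2}$ cannot be made adjacent only to $y_{k+1}$; the interval of $y_{k+1}$ must protrude strictly beyond the others' right endpoints. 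Both slips are repaired by the paper's assignment ($I(y_i)=[\pi^{-1}(i),k+2]$ for $i\in[k]$, $I(y_{k+1})=[k+1,k+4]$, $I(y_{k+2})=[k+3,k+4]$, and $x_i$ a tiny interval just to the right of $n_i$, with distinct $x_i$'s perturbed to be pairwise disjoint, as you note). So this is a bookkeeping error rather than a missing idea, but the model as specified does not prove the backward implication and needs the corrected coordinates.
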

\full{
\begin{proof}
Suppose there is no such permutation.
By \cref{lem:interval:permutation-helper}
there are $i,j \in [\ell]$ so that neither $N_i \subseteq N_j$ nor $N_j \subseteq N_i$.
Fix $p_i \in N_G(x_i) \setminus N_G(x_j)$, and $p_j \in N_G(x_j) \setminus N_G(x_i)$.
We claim that $(x_i, x_j, y_{k+2})$ forms an AT in $G$.
Indeed, $(x_i, p_i, y_{k+1}, y_{k+2})$ avoids $N_G[x_j]$,
$(x_j, p_j, y_{k+1}, y_{k+2})$ avoids $N_G[x_i]$,
and $(x_i, p_i, p_j, x_j)$ avoids $N_G[y_{k+2}] = \{y_{k+1},y_{k+2}\}$.
Since $G$ contains an AT, it is not interval.

Now suppose that a permutation $\pi$ satisfying the conditions of the lemma exists.
We construct an interval model of $G$ (see \Cref{fig:permutation}).
Let $\eps = \frac 1 {3\ell}$.
A vertex $y_i \in V(Y_k)$ where $i \in [k]$
is assigned the interval $[\pi^{-1}(i), k+2]$.
Vertex $y_{k+1}$ is assigned the interval $[k+1,k+4]$ and vertex $y_{k+2}$ is assigned to $[k+3, k+4]$.
For $i \in [\ell]$ we consider the vertex $x_i$ with neighborhood specified by $N_i = \pi([n_i])$.
Its interval is given as $[n_i + \frac{i-1}{\ell} + \eps, n_i + \frac{i-1}{\ell} + 2\eps]$.
Note that the intervals of distinct $x_i, x_j$ are disjoint, as intended.
The interval of $x_i$ is contained in $(n_i, n_i+1)$ so it intersects an interval of the form $[\pi^{-1}(j), k+2]$ exactly when $\pi^{-1}(j) \le n_i$.
The latter inequality is equivalent to $j \in \pi([n_i]) = N_i$.
The lemma follows.
\end{proof}
}

\begin{figure}[t]
    \centering
    \includegraphics[width=0.75\linewidth]{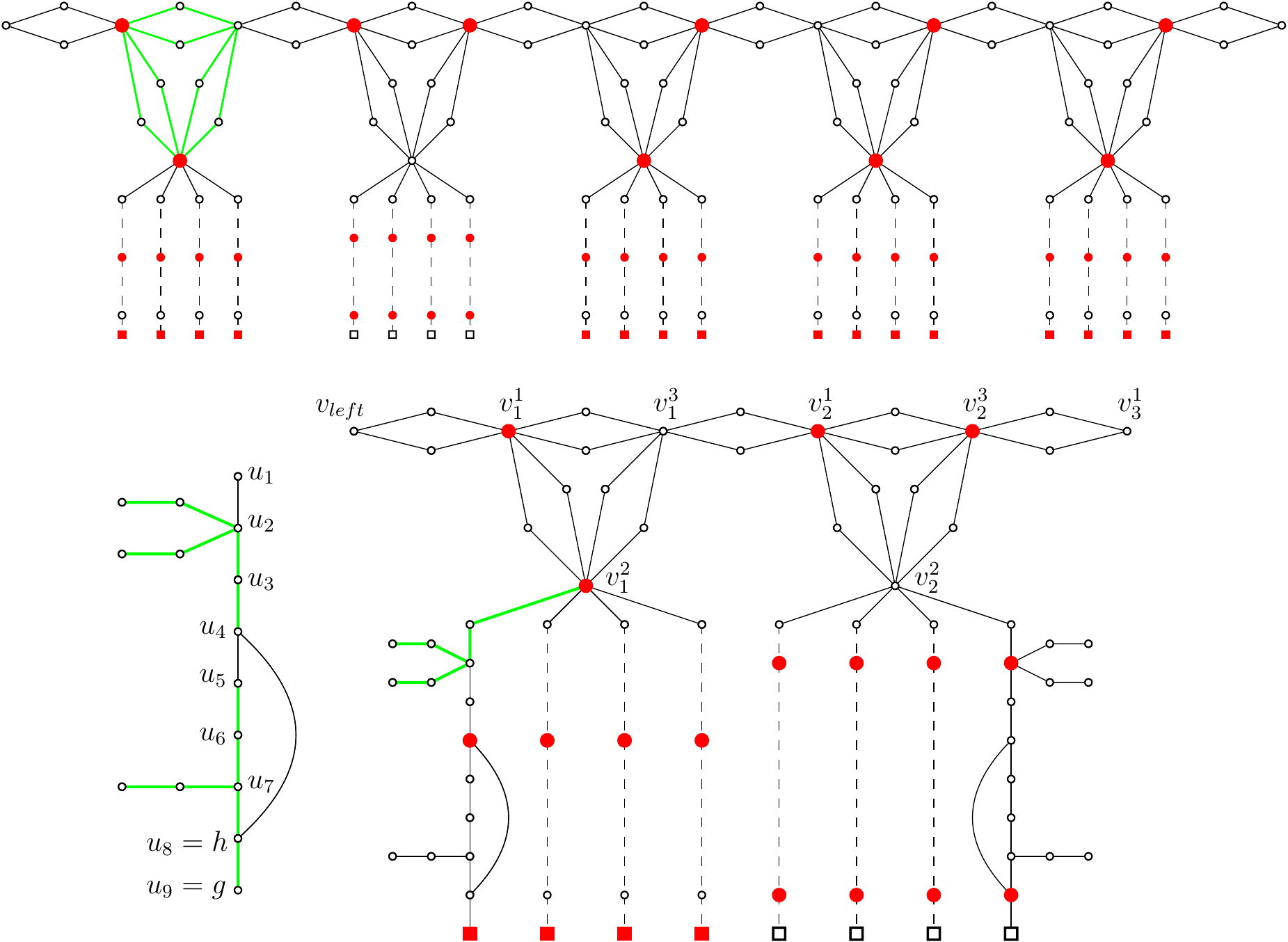}
    \caption{Top: the choice gadget $H_5$ with the subgraph $Q_1$ highlighted in green.
    The copies of $P$ are sketched symbolically with dashed lines and the squares represent vertices $g^\alpha_i$.
    The red disks and squares represent a solution constructed in \cref{lem:interval:choice-properties}(2).
    This solution `chooses' $i=2$, leaves untouched the four vertices  $g^\alpha_2$, and removes  $h^\alpha_2$ as well as $g^\alpha_i$ for $i \ne 2$.
    Bottom left: the graph $P$ and vertices named $h,g$. Two vertex-disjoint non-interval subgraphs of $P$  
    have green edges.
    Bottom right: a closer look at the first two blocks of $H_5$ with two copies of $P$ drawn in detail. 
    The subgraph highlighted in green witnesses that if a minimum-size solution removes $g^\alpha_i$ for at least one $\alpha \in [4]$ then it must also remove $v^2_i$, what is exploited in \cref{lem:interval:choice-properties}(3).}
    \label{fig:selector}
\end{figure}

\subparagraph{Choice gadget.}
We need to verify that $(i,\pi(i))(j,\pi(j)) \in E(G)$ for each $1 \le i < j \le k$.
As $\pi(i)$ is the only element in $N_i \sm N_{i-1}$, the information whether $(i,\pi(i)),(j,\pi(j)) \in E(G)$ can be extracted from the tuple $(N_{i-1},N_i,N_{j-1},N_j)$.
We construct a gadget that enforces a solution to select one such valid tuple. 

We use a following convention to describe the gadgets.
When $P$ is a graph with a distinguished vertex named $v$
and a graph $H$ is constructed using explicit vertex-disjoint copies of the graph $P$, referred to as $P_1, P_2, \dots, P_\ell$, 
we refer to the copy of $v$ within the subgraph $P_i$ as $P_i[v]$.
We construct the choice gadget as a path-like structure consisting of blocks, each equipped with four special vertices.
These are the only vertices that later get connected to the permutation gadget.
On the intuitive level, a solution should choose one block, leave its special vertices untouched, and remove the remaining special vertices.
See \Cref{fig:selector} for an illustration.

\begin{definition}
The graph $P$ is obtained from a path $(u_1, u_2, \dots, u_9)$ by appending to $u_2$ two subdivided edges, one subdivided edge to $u_7$, and inserting edge $u_4u_8$.

The choice gadget of order $s$ is a graph constructed as follows.
We begin with a vertex set $\bigcup_{i=1}^s \{v^1_i, v^2_i, v^3_i\} \cup \{v_{\text{left}}, v_{\text{right}}\}$.
For each pair $(x,y)$ of the form $(v^1_i, v^2_i), (v^2_i, v^3_i), (v^3_i, v^1_i), (v^3_i, v^1_{i+1})$ as well as for $(v_{\text{left}}, v^1_1)$, $(v^3_s, v_{\text{right}})$ we create two subdivided edges between $x$ and $y$.
We refer to the subgraph given by 
the two subdivided edges between $x,y$ as $\langle x, y \rangle$.
We refer to the union of $\langle v^1_i, v^2_i \rangle, \langle v^2_i, v^3_i \rangle, \langle v^3_i, v^1_i \rangle$ as $Q_i$.

Next, for each $i \in [s]$ we create four copies of the graph $P$, denoted $P^1_i, P^2_i, P^3_i, P^4_i$.
We insert edges between $v^2_i$ and 
$P^1_i[u_1], P^2_i[u_1], P^3_i[u_1], P^4_i[u_1]$.
We refer to vertices 
$P^\alpha_i[u_8]$, $P^\alpha_i[u_9]$, $\alpha \in [4]$, as respectively 
$h^\alpha_i$, $g^\alpha_i$.
\end{definition}

 The choice gadget is designed to enforce a special structure of minimum-size interval deletion sets.
\short{We exploit the fact that $P$ contains two vertex-disjoint subgraphs with asteroidal triples (see \Cref{fig:selector}) so any interval deletion set in a choice gadget must contain at least two vertices from each copy of $P$.
\refstepcounter{theorem} }

\full{
\begin{lemma}
\label{lem:interval:choice-lb}
Let $H_s$ be the choice gadget of order $s$ and $X$ be an interval deletion set in $H_s$.
Then for each $i \in [s]$ and $\alpha \in [4]$ it holds that $|V(P^\alpha_i) \cap X| \ge 2$ and $|V(Q_i) \cap X| \ge 2$.
\end{lemma}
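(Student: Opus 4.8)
The plan is as follows. The core of the argument is to show that both $P$ and $Q_i$ are robustly non-interval, and then to transfer this to $H_s$ using that induced subgraphs of interval graphs are interval. First I would record that, since the four copies $P^1_i,\dots,P^4_i$ are pairwise vertex-disjoint and the only edges of $H_s$ with exactly one endpoint in $V(P^\alpha_i)$ are the edges from $P^\alpha_i[u_1]$ to $v^2_i\notin V(P^\alpha_i)$, we have $H_s[V(P^\alpha_i)]=P$; and similarly, since distinct bundles $\langle\cdot,\cdot\rangle$ share no subdivision vertex and the only edges of $H_s$ leaving $V(Q_i)$ go to copy-vertices $P^\alpha_i[u_1]$ or to the neighbouring bundles $\langle v_{\mathrm{left}},v^1_1\rangle,\langle v^3_{i-1},v^1_i\rangle,\langle v^3_i,v^1_{i+1}\rangle,\langle v^3_s,v_{\mathrm{right}}\rangle$, we have $H_s[V(Q_i)]=Q_i$. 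Hence it suffices to prove: (i) $P$ contains two vertex-disjoint induced subgraphs that are not interval, and (ii) $Q_i$ is not interval and stays not interval after deleting any single vertex. Then (i) forces $X$ to contain a vertex of each of the two subgraphs, so $|V(P^\alpha_i)\cap X|\ge 2$, and (ii) rules out $|V(Q_i)\cap X|\le 1$ (as otherwise $H_s[V(Q_i)\setminus X]$ would be the interval graph $H_s-X$ restricted, yet equal to $Q_i$ minus at most one vertex).

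For (i) I would exhibit two disjoint subdivided claws inside $P$. Write the pendant $2$-paths attached to $u_2$ as $u_2\!-\!a_1\!-\!p_1$ and $u_2\!-\!a_2\!-\!p_2$, and the one attached to $u_7$ as $u_7\!-\!a_3\!-\!p_3$. Put $P_A:=P[\{u_2,u_3,u_4,a_1,p_1,a_2,p_2\}]$: since $u_8\notin V(P_A)$ the chord $u_4u_8$ is absent, so $P_A$ is exactly a star subdivided once, centred at $u_2$, with three legs ending in $u_4$, $p_1$, $p_2$; therefore $(u_4,p_1,p_2)$ is an asteroidal triple (the connecting paths $u_4u_3u_2a_1p_1$, $u_4u_3u_2a_2p_2$, $p_1a_1u_2a_2p_2$ each avoid the closed neighbourhood of the third vertex), and $P_A$ is not interval by \cref{lem:prelim:interval-AT}. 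Symmetrically put $P_B:=P[\{u_5,u_6,u_7,u_8,u_9,a_3,p_3\}]$: since $u_4\notin V(P_B)$ the chord is again absent, so $P_B$ is a star subdivided once, centred at $u_7$, with legs ending in $u_5$, $u_9$, $p_3$, whence $(u_5,u_9,p_3)$ is an asteroidal triple and $P_B$ is not interval. Finally $V(P_A)\cap V(P_B)=\emptyset$, which gives (i).

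For (ii) I would use that each bundle $\langle x,y\rangle$ induces a chordless $4$-cycle in $H_s$, because the hubs $x,y$ are non-adjacent and so are the two subdivision vertices; thus $Q_i$ contains a hole and is not chordal, in particular not interval. Now among the three bundles $\langle v^1_i,v^2_i\rangle,\langle v^2_i,v^3_i\rangle,\langle v^3_i,v^1_i\rangle$, every hub lies in exactly two of them and every subdivision vertex in exactly one; hence deleting any single vertex of $Q_i$ leaves at least one of the three bundles entirely untouched, and that bundle is still an induced $4$-hole. So $Q_i$ minus any one vertex is not chordal, hence not interval, which is (ii).

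I expect the only delicate points to be the bookkeeping ones in (i): checking that $P_A$ and $P_B$ are genuinely \emph{induced} subgraphs of $P$ (so that no path-edge or copy of the chord $u_4u_8$ links two of their vertices, which is why it matters that each of them contains exactly one of $u_4,u_8$), and that each of the three connecting paths of an asteroidal triple avoids the closed neighbourhood of the remaining vertex. Both reduce to a direct inspection of the adjacency list of $P$. The statement for $Q_i$ is immediate from the bundle structure once the $4$-hole observation is in place.
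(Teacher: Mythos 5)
Your proof is correct and follows essentially the same route as the paper's: the two vertex-disjoint induced subdivided claws of $P$ (the one at $u_2$ with $u_3,u_4$ and the one at $u_7$ with $u_5,u_6,u_8,u_9$), each containing an asteroidal triple, force two deletions per copy of $P$, and the observation that no single vertex of $Q_i$ meets all three induced $4$-holes forces two deletions in $Q_i$. Your extra bookkeeping (checking that the copies are induced in $H_s$ and that the chord $u_4u_8$ is absent from each claw) is exactly the inspection the paper leaves implicit.
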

\begin{proof}
The graph $P$ contains two vertex-disjoint non-interval subgraphs which is witnessed by ATs: one induced by $u_2,u_3,u_4$ and the two subdivided edges appended to $u_2$, the second one induced by $u_5,u_6,u_7,u_8,u_9$ and the subdivided edge appended to $u_7$ (see \Cref{fig:selector}).
Therefore any copy of $P$ in $H_s$ must contain at least two vertices from $X$.
Next, observe that no single vertex intersects all three holes in $Q_i$.
Therefore any interval deletion set must contain at least two vertices from $V(Q_i)$.
\end{proof}
}

We prove several properties of the choice gadget
which are analogous to the properties of the gadget used by Pilipczuk in the lower bound for \textsc{Planar Vertex Deletion}~\cite{Pilipczuk17}.
However, in that construction every block has only one special vertex with edges leaving the gadget, while in our case there are four special vertices.
We also need to ensure that when the special vertices in some block are not being removed then a solution can remove their neighbors in the gadget.
(Inserting a planar graph attached to a single vertex of $G$ does not affect planarity of $G$ but the analogous property does not hold for the class of interval graphs.)
The special structure of the graph $P$ allows us to resolve these two issues.


\begin{restatable}[\starr]{lemma}{lemIntChoice}
\label{lem:interval:choice-properties}
Let $H_s$ be the choice gadget of order $s$.
\begin{enumerate}[nolistsep]
    \item The minimal size of an interval deletion set in $H_s$ is $10 s$.
    \item For every $i \in [s]$ there exists a minimum-size interval deletion set $X$ in $H_s$ such that $\{h^1_i,h^2_i,h^3_i,h^4_i\} \sub X$ and $\{g^1_j,g^2_j,g^3_j,g^4_j\} \sub X$ for each $j \ne i$.
    \item For every minimum-size interval deletion set $X$ in $H_s$ there is $i \in [s]$ such that \\ \mbox{$\{g^1_i,g^2_i,g^3_i,g^4_i\} \cap X = \emptyset$.}
    \item If $s \le 2^k$ then $\td(H_s) \le \td(H_1) + k$, where $\td(G)$ stands for the treedepth of $G$. 
\end{enumerate}
\end{restatable}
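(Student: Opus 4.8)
The four items are nearly independent, but since item~(2) also supplies the upper bound needed for item~(1), the plan is to prove (1)--(2) first, then (3), then (4).

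\emph{Items (1) and (2).} The bound $\ge 10s$ is immediate from \cref{lem:interval:choice-lb}: $H_s$ contains the $5s$ pairwise vertex-disjoint subgraphs $Q_1,\dots,Q_s$ and $P^\alpha_i$ (for $i\in[s]$, $\alpha\in[4]$), and each of them needs two deleted vertices. The matching upper bound is exactly item~(2), so I would fix a block $i$ and build a size-$10s$ set $X$ with the asserted structure, then check $H_s-X$ is interval. The construction: from each $Q_j$ delete two of $\{v^1_j,v^2_j,v^3_j\}$; from each $P^\alpha_i$ delete $P^\alpha_i[u_2]$ and $h^\alpha_i=P^\alpha_i[u_8]$; from each $P^\alpha_j$ with $j\ne i$ delete $P^\alpha_j[u_4]$ and $g^\alpha_j=P^\alpha_j[u_9]$. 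Deleting $P^\alpha_i[u_2]$ makes $P^\alpha_i[u_1]$ a pendant of $v^2_i$, so block $i$ contributes a star at $v^2_i$ together with a path and a few isolated edges and vertices, and $g^\alpha_i$ survives isolated. For $j\ne i$ I would in addition put $v^2_j$ into $X$; this detaches the four copies $P^\alpha_j$, and each $P^\alpha_j-\{u_4,u_9\}$ is a disjoint union of caterpillars. The second deleted vertex of $Q_j$ must kill the $4$-cycle $\langle v^1_j,v^3_j\rangle$ and, jointly over $j$, the chain of connector $4$-cycles $\langle v_{\text{left}},v^1_1\rangle$, $\langle v^3_{j-1},v^1_j\rangle$, $\langle v^3_s,v_{\text{right}}\rangle$; taking $v^1_j$ for $j<i$, $v^3_j$ for $j>i$, and $\{v^1_i,v^3_i\}$ at block $i$ can be checked to cover all of these for every position of $i$. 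It then remains to observe that $H_s-X$ is a disjoint union of stars, paths, and isolated edges and vertices, hence interval; since $|X|=2s+8s=10s$, both items follow (with $g^\alpha_i\notin X$ and $h^\alpha_i,g^\alpha_j\in X$ by construction).

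\emph{Item (3).} Let $X$ be a minimum interval deletion set, so $|X|=10s$ and, by the disjointness of the $5s$ pieces from item~(1), $X$ meets each piece in exactly two vertices and contains no vertex outside the pieces. Since $v_{\text{left}},v_{\text{right}}$ and the subdivision vertices of the connector $4$-cycles lie outside all pieces, $\langle v_{\text{left}},v^1_1\rangle$ and $\langle v^3_s,v_{\text{right}}\rangle$ force $v^1_1,v^3_s\in X$, and each $\langle v^3_{j-1},v^1_j\rangle$ forces $v^3_{j-1}\in X$ or $v^1_j\in X$. The heart of the argument is: if $g^\alpha_i\in X$ for some $\alpha$, then $v^2_i\in X$. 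Indeed $X\cap V(P^\alpha_i)=\{g^\alpha_i,x\}$, and since $P^\alpha_i-X$ must be interval, $x$ has to destroy the unique cycle of $P$, namely the $5$-cycle $(u_4,u_5,u_6,u_7,u_8)$; a check of the five candidates $u_4,\dots,u_8$ shows that only $x=P^\alpha_i[u_4]$ leaves an interval graph. Hence $P^\alpha_i[u_1]$, $P^\alpha_i[u_2]$ and the two leaves of $P^\alpha_i$ at distance two from $u_2$ all survive; if moreover $v^2_i\notin X$, then (as at most two of the four subdivision vertices of $\langle v^1_i,v^2_i\rangle\cup\langle v^2_i,v^3_i\rangle$ lie in $X\cap V(Q_i)$) some such subdivision vertex $w$ survives, and those two leaves together with $w$ form an asteroidal triple in $H_s-X$ (the connecting paths run through $P^\alpha_i[u_2]$ and through $v^2_i$), contradicting \cref{lem:prelim:interval-AT}. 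Now suppose, for contradiction, that every block $i$ has some $g^\alpha_i\in X$; then $v^2_i\in X$ for all $i$, so $X\cap V(Q_i)=\{v^2_i,z_i\}$ with $z_i$ forced to break $\langle v^1_i,v^3_i\rangle$. From $v^1_1\in X$ we get $z_1=v^1_1$, and then the connector constraints propagate $z_j=v^1_j$ for all $j$ (once $z_{j-1}=v^1_{j-1}$, $v^3_{j-1}\notin X$, so $v^1_j\in X$), contradicting $v^3_s\in X$ (for $s=1$ one contradicts $|X\cap V(Q_1)|=2$ directly). Hence some block $i$ satisfies $\{g^1_i,\dots,g^4_i\}\cap X=\emptyset$.

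\emph{Item (4).} I would prove $\td(H_s)\le k+\td(H_1)$ whenever $s\le 2^k$ by induction on $k$. Deleting the single vertex $v^3_m$ with $m=\lceil s/2\rceil$ disconnects $H_s$, since the only edges joining blocks $\le m$ to blocks $>m$ pass through the $4$-cycle $\langle v^3_m,v^1_{m+1}\rangle$; after also forgetting $v_{\text{right}}$ (resp.\ $v_{\text{left}}$), which merely turns a $4$-cycle into two pendant vertices, one side is an induced subgraph of $H_m$ and the other of $H_{s-m}$. As treedepth is monotone under taking subgraphs, $\td(H_s)\le 1+\max(\td(H_m),\td(H_{s-m}))$, and since $m,s-m\le 2^{k-1}$ the induction hypothesis yields $\td(H_s)\le k+\td(H_1)$, the base case $k=0$ being immediate. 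The main obstacle is item~(3): pinning down that $g^\alpha_i\in X$ forces the companion deletion $P^\alpha_i[u_4]$ (a gadget-specific case check), then exhibiting the asteroidal triple through $v^2_i$, and finally running the connector-chain parity argument; getting the deletion set in item~(2) exactly right --- in particular cutting the ``arms'' at each $v^2_j$ to avoid spurious asteroidal triples --- is the second delicate point.
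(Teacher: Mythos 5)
Your proposal is correct and follows essentially the same route as the paper: the identical explicit deletion set for items (1)--(2), the $C_5$-plus-asteroidal-triple argument forcing $v^2_i \in X$ in item (3), and the halving induction on treedepth for item (4). The only cosmetic differences are that you pin the companion deletion down to exactly $u_4$ (the paper only needs it to lie on the $C_5$, since the AT through $v^2_i$, $u_1$, $u_2$ and the subdivided edges is untouched either way) and that you close item (3) by propagating deletions along the connector chain instead of the paper's count of $s+1$ vertex-disjoint connector $4$-cycles against the remaining budget of $s$.
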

\full{
\begin{proof}
Part (1). All the  $4s$ copies of $P$, as well as subgraphs $Q_1, \dots, Q_s$, are vertex-disjoint in $H_s$.
The lower bound follows from \cref{lem:interval:choice-lb} whereas the upper bound is a consequence of the next part of the lemma. 

Part (2). The construction  is depicted on \Cref{fig:selector}.
The set $X$ comprises:
\begin{itemize}[nolistsep, itemindent = 0.2in]
    \item  $v^1_j, v^2_j$ for $j <i$,
    \item  $v^1_i, v^3_i$,
    \item  $v^2_j, v^3_j$ for $j > i$,
    \item  $P^\alpha_j[u_4], P^\alpha_j[u_9]$ for $\alpha \in [4]$, $j \ne i$, 
    \item $P^\alpha_i[u_2], P^\alpha_i[u_8]$ for $\alpha \in [4]$.
\end{itemize}
One can easily verify that $|X| = 10s$ and each connected component of $H_s - X$ is either a path or a star with at most two subdivided edges. 
These graphs are interval.

Part (3).
Suppose that there exists an interval deletion set $X$ of size  $10s$ such that for each  $i \in [s]$
there is $\alpha \in [4]$ so that $g^\alpha_i \in X$.
From \cref{lem:interval:choice-lb} we know that
$|X \cap V(P^\alpha_i)| \ge 2$.
From a counting argument we infer that in fact it must be $|X \cap V(P^\alpha_i)| = 2$.
The vertices  $P^\alpha_i[u_4], P^\alpha_i[u_5], P^\alpha_i[u_6], P^\alpha_i[u_7], P^\alpha_i[u_8]$ induce $C_5$ so one of them must belong to $X$.
Together with $g^\alpha_i = P^\alpha_i[u_9]$ these are the two vertices of $X \cap V(P^\alpha_i)$.

The vertices  $v^2_i, P^\alpha_i[u_1], P^\alpha_i[u_2]$ and the two subdivided edges appended to $P^\alpha_i[u_2]$
induce a graph with an AT (see \Cref{fig:selector}).
As no more vertices from $V(P^\alpha_i)$ belong to $X$ apart from the two described above, it must be $v^2_i \in X$.

This argument works for every   $i \in [s]$.
We count the already allocated vertices:
\[
\left| \bigcup_{i \in [s], \alpha \in [4]} (X \cap V(P_i^\alpha))\, \right| + | \{v^2_i \mid i \in [s]\}| = 8s + s = 9s.
\]

Since $|X| = 10s$, there are exactly $s$ vertices remaining in $X$.
But there are $s+1$ vertex-disjoint holes yet to be hit: $\langle v_{\text{left}}, v^1_1 \rangle, \langle v^3_1, v^1_2 \rangle, \langle v^3_2, v^1_3 \rangle, \dots, \langle v^3_s, v_{\text{right}} \rangle$.
This means that $X$ cannot be an interval deletion set in $H_s$.

Part (4). 
Clearly $\td(H_s) \le \td(H_{s+1})$ so it suffices to prove the claim for $s = 2^k$ by induction on $k$.
For $k=0$ we get equality.
For $k > 0$ there exists a vertex $v \in V(H_{2^k})$ so that $H_{2^k} - v$ has two connected components, each being a subgraph of $H_{2^{k-1}}$.
By the definition of treedepth we get $\td(H_{2^{k}}) \le \td(H_{2^{k-1}}) + 1$.
\end{proof} }

Lokshtanov et al.~\cite{LokshtanovMS18} proved that $k\times k$ \textsc{Permutation Clique} cannot be solved in time $2^{o(k \log k)}$ assuming ETH.
According to the reduction below, this also rules out running time of the form $2^{o(\td \log \td)} \cdot n^{\Oh(1)}$ for \textsc{Interval Vertex Deletion}, where $\td$ is the treedepth of the input graph.
As $\tw(G) \le \td(G)$, this entails the same hardness for treewidth, what proves \cref{thm:intro:interval}.

\begin{proposition}
There is an algorithm that, given an instance $(G,k)$ of $k\times k$ \textsc{Permutation Clique}, runs in time $2^{\Oh(k)}$ and returns an equivalent unweighted instance $(H,p)$ of \textsc{Interval Vertex Deletion} such that $|V(H)| = 2^{\Oh(k)}$ and $\td(H) = \Oh(k)$.
\end{proposition}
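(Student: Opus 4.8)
The plan is to build the reduction around a single shared \emph{permutation gadget} and $\binom{k}{2}$ \emph{choice gadgets}, one per pair $1\le i<j\le k$. Take one copy of $Y_k$ with vertices $y_1,\dots,y_{k+2}$ as the permutation gadget. For a pair $(i,j)$ call a tuple $(A,B,C,D)$ of subsets of $[k]$ \emph{valid} if $A\sub B\sub C\sub D$, $|A|=i-1$, $|B|=i$, $|C|=j-1$, $|D|=j$ (so the size constraints make the relevant inclusions strict), and, writing $\{a\}=B\sm A$, $\{c\}=D\sm C$, we have $(i,a)(j,c)\in E(G)$. Let $s_{i,j}$ be the number of valid tuples for $(i,j)$; if some $s_{i,j}=0$, output a fixed trivial no-instance, so assume all $s_{i,j}\ge 1$. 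Let $C_{i,j}$ be the choice gadget $H_{s_{i,j}}$, fix a bijection between its blocks and the valid tuples of $(i,j)$, and for the block $b$ with tuple $(A_b,B_b,C_b,D_b)$ add edges joining $g^1_b,g^2_b,g^3_b,g^4_b$ to $\{y_m:m\in A_b\}$, $\{y_m:m\in B_b\}$, $\{y_m:m\in C_b\}$, $\{y_m:m\in D_b\}$ respectively. This defines $H$; set the budget $p=\sum_{i<j}10\,s_{i,j}$. The intuition is that keeping the four $g$-vertices of block $b$ in $C_{i,j}$ forces the permutation $\pi$ read off $Y_k$ via \cref{lem:interval:permutation} to satisfy $\pi([i-1])=A_b$, $\pi([i])=B_b$, $\pi([j-1])=C_b$, $\pi([j])=D_b$, hence $\pi(i)=a_b$, $\pi(j)=c_b$, which is an edge of $G$ by validity.

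For the forward direction I would take a permutation $\pi$ witnessing a yes-instance; for each pair $(i,j)$ the tuple $(\pi([i-1]),\pi([i]),\pi([j-1]),\pi([j]))$ is valid since $(i,\pi(i))(j,\pi(j))\in E(G)$, hence equals the tuple of some block $\beta_{i,j}$ of $C_{i,j}$. I would take the interval deletion set $X_{i,j}$ of size $10\,s_{i,j}$ from \cref{lem:interval:choice-properties}(2) for block $\beta_{i,j}$ and set $X=\bigcup_{i<j}X_{i,j}$, so $|X|=p$. The key observation is that in $C_{i,j}-X_{i,j}$ each retained vertex $g^\alpha_{\beta_{i,j}}=P^\alpha_{\beta_{i,j}}[u_9]$ is isolated, since its only gadget-neighbour $h^\alpha_{\beta_{i,j}}=P^\alpha_{\beta_{i,j}}[u_8]$ is deleted, while the remainder of $C_{i,j}-X_{i,j}$ is a disjoint union of paths and subdivided stars as in the proof of \cref{lem:interval:choice-properties}(2). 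Hence $H-X$ splits into these interval pieces together with the induced subgraph on $V(Y_k)$ plus the $4\binom{k}{2}$ retained $g$-vertices, and that subgraph is exactly an instance of \cref{lem:interval:permutation} for the family of initial segments $\pi([0]),\pi([1]),\dots,\pi([k])$, hence interval. So $H-X$ is interval.

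For the converse I would take an interval deletion set $X$ of $H$ with $|X|\le p$. Since the sets $V(C_{i,j})$ are pairwise disjoint and disjoint from $V(Y_k)$, and by \cref{lem:interval:choice-properties}(1) every interval deletion set of $C_{i,j}$ has size at least $10\,s_{i,j}$, summing forces $|X|=p$, $X\cap V(Y_k)=\emptyset$, and each $X\cap V(C_{i,j})$ to be a minimum interval deletion set of $C_{i,j}$. By \cref{lem:interval:choice-properties}(3) there is, for each pair, a block $\beta_{i,j}$ with $g^\alpha_{\beta_{i,j}}\notin X$ for all $\alpha\in[4]$. The induced subgraph of the interval graph $H-X$ on $V(Y_k)$ together with all these retained $g$-vertices is again an instance of \cref{lem:interval:permutation} (the $g$-vertices are pairwise non-adjacent and have neighbourhoods inside $\{y_1,\dots,y_k\}$), so there is a permutation $\pi$ with $A_{\beta_{i,j}}=\pi([i-1])$, $B_{\beta_{i,j}}=\pi([i])$, $C_{\beta_{i,j}}=\pi([j-1])$, $D_{\beta_{i,j}}=\pi([j])$, the cardinalities pinning down which initial segment each set is. Then $\pi(i)=a_{\beta_{i,j}}$, $\pi(j)=c_{\beta_{i,j}}$, and validity of $\beta_{i,j}$ gives $(i,\pi(i))(j,\pi(j))\in E(G)$ for all $i<j$, i.e.\ $(1,\pi(1)),\dots,(k,\pi(k))$ is a clique.

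Finally, the parameters: a valid tuple is determined by an $(i-1)$-set, an extra element, a $(j-1)$-superset and an extra element, so $s_{i,j}\le k^2 4^k = 2^{\Oh(k)}$; as $H_s$ has $\Oh(s)$ vertices, $|V(H)|\le k+2+\sum_{i<j}\Oh(s_{i,j})=2^{\Oh(k)}$, and enumerating all valid tuples and assembling $H$ takes $2^{\Oh(k)}$ time. Deleting $V(Y_k)$ leaves the disjoint union of the gadgets $C_{i,j}$, so $\td(H)\le (k+2)+\max_{i<j}\td(H_{s_{i,j}})$, and since $s_{i,j}\le 2^{\Oh(k)}$, \cref{lem:interval:choice-properties}(4) gives $\td(H_{s_{i,j}})\le \td(H_1)+\Oh(k)=\Oh(k)$, hence $\td(H)=\Oh(k)$. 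The step I expect to be the main obstacle is the bookkeeping in the forward direction: checking that deleting $X_{i,j}$ genuinely detaches the retained $g$-vertices from the rest of $C_{i,j}$ and leaves only interval components, so that a single application of \cref{lem:interval:permutation} can simultaneously account for all $\binom{k}{2}$ choice gadgets hanging off the shared $Y_k$; the converse direction should be comparatively routine once \cref{lem:interval:choice-properties} is in hand.
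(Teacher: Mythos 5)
Your proposal is correct and follows essentially the same construction and argument as the paper: one shared permutation gadget $Y_k$, a choice gadget per pair $(i,j)$ whose blocks enumerate the valid tuples $(\pi([i-1]),\pi([i]),\pi([j-1]),\pi([j]))$ compatible with an edge of $G$, budget $10\sum s_{i,j}$, with the forward direction via \cref{lem:interval:choice-properties}(2) and \cref{lem:interval:permutation}, the converse via \cref{lem:interval:choice-properties}(1,3), and the treedepth bound via \cref{lem:interval:choice-properties}(4). The step you flagged as a potential obstacle is handled exactly as you suggest (the chosen block's $h$-vertices are deleted, detaching the retained $g$-vertices inside the gadget), so no gap remains.
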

\begin{proof}
For $1 \le i < j \le k$ and $x \ne y \in [k]$ let $\scal_{i,x,j,y}$ be the family of tuples $(S_1,S_2,S_3,S_4)$ of subsets of $[k]$ satisfying:
\begin{itemize}[nolistsep]
    \item $S_1 \subset S_2 \sub S_3 \subset S_4$,
    \item $|S_1| = i-1$,
    \item $S_2 \sm S_1 = \{x\}$,
    \item $|S_3| = j-1$,
    \item $S_4 \sm S_3 = \{y\}$.
\end{itemize}
Furthermore, for $1 \le i < j \le k$, let $\scal_{i,j}$
be the union of $\scal_{i,x,j,y}$ over all pairs $x \ne y \in [k]$ such that $(i,x)(j,y) \in E(G)$.
Let $s_{i,j} = |\scal_{i,j}|$ and $\rho_{i,j} \colon [s_{i,j}] \to \scal_{i,j}$ be an arbitrary bijection. Clearly $s_{i,j} \le 4^k k^2$.

The graph $H$ consists of a permutation gadget $Y_k$
and, for each $1 \le i < j \le k$, a choice gadget $C_{i,j}$ of order $s_{i,j}$.
For $S \subseteq [k]$ we use shorthand $Y_k[S] = \{y_i \mid i \in S\}$.
For $\ell \in [s_{i,j}]$ and $(S_1,S_2,S_3,S_4) = \rho_{i,j}(\ell)$ the vertices $C_{i,j}[g^1_\ell]$, $C_{i,j}[g^2_\ell]$, $C_{i,j}[g^3_\ell]$, $C_{i,j}[g^4_\ell]$
get connected to vertex sets $Y_k[S_1], Y_k[S_2], Y_k[S_3], Y_k[S_4]$, respectively. 
This finishes the construction of $H$.
The number of vertices in $H$ is clearly $2^{\Oh(k)}$ and the construction can be performed in time polynomial in the size of $H$. 
We set $p = 10 \cdot \sum_{1\le i < j \le k} s_{i,j}$.

\begin{claim}
If $(G,k)$ admits a solution, then $H$ has an interval deletion set of size $p$.
\end{claim}
\begin{innerproof}
Let $\pi \colon [k] \to [k]$ be a permutation encoding a clique in $G$.
By the construction, for each $1 \le i < j \le k$ we have $\left(\pi([i-1]), \pi([i]), \pi([j-1]), \pi([j]\right) \in \scal_{i,j}$.
Let $\ell \in [s_{i,j}]$ be the index mapped to this tuple by $\rho_{i,j}$.
By \cref{lem:interval:choice-properties}(2) the choice gadget $C_{i,j}$ has an interval deletion set $X_{i,j} \subseteq V(C_{i,j})$ of size $10 s_{i,j}$ such that 
 $\{C_{i,j}[h^1_\ell],C_{i,j}[h^2_\ell],C_{i,j}[h^3_\ell],C_{i,j}[h^4_\ell]\} \sub X_{i,j}$ 
 and $\{C_{i,j}[g^1_r],C_{i,j}[g^2_r],C_{i,j}[g^3_r],C_{i,j}[g^4_r]\} \sub X_{i,j}$ for each $r \ne \ell$.
In other words, $X_{i,j}$ contains all vertices in $C_{i,j}$ which are adjacent to $Y_k$ except for the $C_{i,j}$-copies of $g^1_\ell,g^2_\ell,g^3_\ell,g^4_\ell$ and $X_{i,j}$ also contains the neighbors of  
$C_{i,j}[g^1_\ell], C_{i,j}[g^2_\ell], C_{i,j}[g^3_\ell], C_{i,j}[g^4_\ell]$~in~$C_{i,j}$.

We set $X = \bigcup_{1 \le i < j \le k} X_{i,j}$.
Then the only connected component of $H-X$ which is not a connected component of any $C_{i,j} - X_{i,j}$ is given by $Y_k$ together with an independent set of the vertices described above.
The neighborhood of each such vertex in $Y_k$ is of the form $Y_k[\pi([k'])]$ for some $0 \le k' \le k$.
By \cref{lem:interval:permutation} this component is an interval graph.
This shows that $X$ is indeed an interval deletion set.
\end{innerproof}

\begin{claim}
If $H$ has an interval deletion set of size at most $p$, then $(G,k)$ admits a solution.
\end{claim}
\begin{innerproof}
Let $X$ be an interval deletion set in $H$.
By \cref{lem:interval:choice-properties}(1) a minimum-size interval deletion set in $C_{i,j}$ has size $10 s_{i,j}$.
As the choice gadgets are vertex-disjoint subgraphs of $H$, the set $X$ must contain exactly $10 s_{i,j}$ vertices from $V(C_{i,j})$.
This also implies that $V(Y_k) \cap X = \emptyset$.

Let $X_{i,j} = V(C_{i,j}) \cap X$.
By \cref{lem:interval:choice-properties}(3) there exists $\ell \in [s_{i,j}]$ such that $\big\{C_{i,j}[g^1_\ell], C_{i,j}[g^2_\ell]$, $C_{i,j}[g^3_\ell], C_{i,j}[g^4_\ell] \,\big\} \cap X_{i,j} = \emptyset$.
Therefore for each pair $(i,j)$ there is a tuple $(S^1_{i,j}, S^2_{i,j}, S^3_{i,j}, S^4_{i,j}) \in \scal_{i,j}$ so that vertices from $C_{i,j}$ with neighborhoods $Y_k[S^1_{i,j}], Y_k[S^2_{i,j}], Y_k[S^3_{i,j}], Y_k[S^4_{i,j}]$ are present in $H-X$.
By \cref{lem:interval:permutation} there exists a single permutation  $\pi \colon [k] \to [k]$ so that each set $S^\alpha_{i,j}$ is of the form $\pi([|S^\alpha_{i,j}|])$.
By the definition of family $\scal_{i,j}$ this implies that $(i,\pi(i))(j,\pi(j)) \in E(G)$ for each pair $(i,j)$.
Hence there is a $k$-clique in $G$.
\end{innerproof}

\begin{claim}
The treedepth of $H$ is $\Oh(k)$.
\end{claim}
\begin{innerproof}
The treedepth of $H$ is at most $|Y_k| = k+2$ plus $\td(H - Y_k)$, which equals the maximum
of $\td(C_{i,j})$ over all employed 
choice gadgets $C_{i,j}$.
As $s_{i,j} \le 4^kk^2$, \cref{lem:interval:choice-properties}(4) implies that $\td(C_{i,j}) \le 2k + 2\log_2 k + \Oh(1)$.
\end{innerproof}
This conludes the proof of the proposition.
\end{proof}

\full{
\subsection{Upper bound}
\label{app:intervalDP}

Saitoh et al.~\cite{intervalDP} have presented an algorithm for \textsc{Interval Edge Deletion} (and for several related graphs classes) with running time $2^{\Oh(\tw \log \tw)}\cdot n$ and stated that they expect it to also work for the vertex-deletion variant~\cite[\S 6]{intervalDP}.
We briefly describe their approach and justify that vertex deletion can indeed be incorporated.

Instead of working with real-line interval models, one can represent an interval model in an abstract way.
For a set $X$ its \emph{interval representation} is a linear order $\pi$ over the set $LR_X = L_X \cup R_X \cup \{\bot, \top\}$ where $L_X = \{\ell_x \mid x \in X\}$,  $R_X = \{r_x \mid x \in X\}$, such that $\bot <_\pi \ell_x <_\pi r_x <_\pi \top$ for each $x \in X$.
An interval is a pair of elements from $LR_X$.
The interval graph $G_\pi$ of an interval representation $\pi$ over $V$ is defined by adding an edge $uv$ whenever $(\ell_u, r_u)$ and $(\ell_v, r_v)$ intersect in $\pi$.
It is clear that a graph $G$ is interval if and only if
there exists an interval representation $\pi$ over $V$ such that $G = G_\pi$.

Let $(\ttw, \chi)$ be a rooted tree decomposition of $G$.
The state of $t \in V(\ttw)$ is a set of triples $(\pi, I, c)$ where $\pi$ is an interval representation over $\chi(t)$ such that $G_\pi$ is a subgraph of $G[\chi(t)]$, $I$ is a set of intervals from $LR_{\chi(t)}$, and $c \in \mathbb{N}$.
For each interval representation $\tau$ over $V_t$, for which $G_\tau$ is a subgraph of $G[V_t]$ we define its abstraction as a triple $(\pi, I, c)$ where $\pi$ is a restriction of $\tau$ to $\chi(t)$, for each connected component $C$ of $G[U_t]$ there is an interval $(\ell_c, r_c) \in I$ so that $(\ell_c, r_c)$ is a inclusion-wise minimal interval from $LR_\chi(t)$ that contains all the intervals from $C$, and $c = |E(G[V_t]) - E(G_\tau) - E(G[\chi(t)])|$ counts the number of edges from $G[V_t]$ with at most one endpoint in $\chi(t)$ which are not present in $G_\tau$.
We write $I \sqsubseteq_\pi I'$ if every interval from $I$ is contained in some interval from $I'$.
We say that $(\pi,I,c)$ dominates $(\pi,I',c')$ if $I \sqsubseteq_\pi I'$ and $c \le c'$.

Saitoh et al. show that when $(\pi,I,c)$ dominates $(\pi,I',c')$ there is no need to store $(\pi,I',c')$  in the state for $t$.
To see this, consider an interval representation $\tau'$ over $V(G)$ which corresponds to some valid solution, so that $(\pi,I',c')$ is an abstraction of $\tau'$ restricted to $V_t$.
Since there are no edges between $U_t$ and $V(G) \sm V_t$, for every $v \in V(G) \sm V_t$ and connected component $C$ of $G[U_t]$ the interval of $v$ must be disjoint the interval spanned by $C$ with respect to $\tau'$.
Because $I \sqsubseteq_\pi I'$,
we can modify $\tau'$ to obtain a new interval representation $\tau$ over $V(G)$ which coincides on $U_t$ with some partial solution whose abstraction is $(\pi,I,c)$.
As $c \le c'$ the number of edges deleted with respect to $\tau$ does not grow compared to $\tau'$. 
The last observation is that when $|\chi(t)| = k$ and no triple stored at $t$ dominates another triple then their number is $2^{\Oh(k \log k)}$.

In order to adapt the algorithm for vertex deletion, we can store tuples $(X,\pi,I,c)$ where $X \sub\chi(t)$ is the set of vertices which are not deleted, $\pi$ is an interval representation over $X$ so that $G_\pi = G[X]$, $I$ is a set of intervals from $LR_X$, and $c \in \mathbb{N}$.
Now a partial solution is a triple $(A,X,\tau)$ where $A \sub U_t, X \sub \chi(t)$, and $\tau$ is an interval representation over $A \cup X$ so that $G_\tau = G[A \cup X]$.
The abstraction of $(A,X,\tau)$ is $(X,\pi,I,c)$ where
$\pi$ is a restriction of $\tau$ to $X$, for each connected component $C$ of $G[A]$ there is an interval $(\ell_c, r_c) \in I$ so that $(\ell_c, r_c)$ is a inclusion-wise minimal interval from $LR_X$ that contains all the intervals from $C$, and $c = |U_t| - |A|$ counts the number of vertices deleted so far.
As before, we say that $(X,\pi,I,c)$ dominates $(X,\pi,I',c')$ if $I \sqsubseteq_\pi I'$ and $c \le c'$.
By the same argument as before, we can neglect the tuples which are dominated and bound the number of tuples stored for a pair $(t,X)$ by $2^{\Oh(k \log k)}$.
Since there are $2^k$ choices for $X$, the general upper bound follows.
It is easy to see that both algorithms can be also extended to incorporate weights.
}

\section{Conclusion and open problems}
\label{sec:conclusion}

We have obtained ETH-tight bounds for vertex-deletion problems into the classes of chordal and interval graphs, under the treewidth parameterization.
The status of the corresponding edge-deletion problems remains unclear (see~\cite{intervalDP}).
The related problem, \textsc{Feedback Vertex Set}, can be solved using representative families within the same running time as our algorithm for \textsc{ChVD}~\cite{representative-product}.
However, it admits a faster deterministic algorithm based on the determinant approach~\cite{wlodarczyk2019clifford} and an even faster randomized algorithm based on the Cut \& Count technique~\cite{cygan2011solving}.
Could \textsc{ChVD} also be amenable to one of those techniques?

Our algorithm for \textsc{ChVD} is based on a novel
connection between chordal graphs and graphic matroids, which might come in useful in other settings.
In particular, we ask whether this insight can be leveraged to improve the running time for \textsc{ChVD} parameterized by the solution size $k$, where
the current-best algorithm runs in time $2^{\Oh(k \log k)}n^{\Oh(1)}$~\cite{CaoM16}.
A~direct avenue for a potential improvement would be to reduce the problem in time $2^{\Oh(k)}n^{\Oh(1)}$ to the case with treewidth $\Oh(k)$ and then apply \cref{thm:intro:chordal}.
Such a strategy has been employed in the state-of-the-art algorithm for \textsc{Planar Vertex Deletion} parameterized by the solution~size~\cite{JansenLS14}.

\bibliography{bib}
\end{document}